\def\myQED{\mbox{\rule[0pt]{1.5ex}{1.5ex}}}
\DeclareMathOperator{\essinf}{essinf}
\newcommand{\phiv}{\hbox{\boldmath$\phi$}}
\newcommand{\no}{\nonumber}
\newtheorem{thm}{Theorem}[section]
\newtheorem{prop}[thm]{Proposition}
\newtheorem{lem}[thm]{Lemma}
\newtheorem{cor}[thm]{Corollary}
\newtheorem{rmk}[thm]{Remark}
\newtheorem{exmpl}[thm]{Example}
\def\ps@headings{%
\def\@oddhead{\mbox{}\scriptsize\rightmark \hfil \thepage}%
\def\@evenhead{\scriptsize\thepage \hfil \leftmark\mbox{}}%
\def\@oddfoot{}%
\def\@evenfoot{}}
\begin{document}

\title{Quickest Search over Multiple Sequences with Mixed Observations}

\author{Jun Geng, Weiyu Xu and Lifeng Lai\thanks{The work of J. Geng and L. Lai was supported by the National Science Foundation under grant DMS-12-65663. The results in this paper were presented in part at IEEE International Symposium on Information Theory, Istanbul, Turkey, July, 2013.

J. Geng and L. Lai are with the Department of Electrical and Computer Engineering, Worcester Polytechnic Institute, Worcester, MA, 01609, USA (Emails: \{jgeng, llai\}@wpi.edu).

W. Xu is with the Department of Electrical and Computer Engineering, University of Iowa, Iowa City, IA, 52242, USA (Email:weiyu-xu@uiowa.edu).}}
\maketitle 



\begin{abstract}
The problem of sequentially finding an independent and identically distributed (i.i.d.) sequence that is drawn from a probability distribution $f_1$ by searching over multiple sequences, some of which are drawn from $f_1$ and the others of which are drawn from a different distribution $f_0$, is considered. The observer is allowed to take one observation at a time. It has been shown in a recent work that if each observation comes from one sequence, the cumulative sum test is optimal. In this paper, we propose a new approach in which each observation can be a linear combination of samples from multiple sequences. The test has two stages. In the first stage, namely scanning stage, one takes a linear combination of a pair of sequences with the hope of scanning through sequences that are unlikely to be generated from $f_1$ and quickly identifying a pair of sequences such that at least one of them is highly likely to be generated by $f_1$. In the second stage, namely refinement stage, one examines the pair identified from the first stage more closely and picks one sequence to be the final sequence. The problem under this setup belongs to a class of multiple stopping time problems. In particular, it is an ordered two concatenated Markov stopping time problem. We obtain the optimal solution using the tools from the multiple stopping time theory. The optimal solution has a rather complex structure. For implementation purpose, a low complexity algorithm is proposed, in which the observer adopts the cumulative sum test in the scanning stage and adopts the sequential probability ratio test in the refinement stage. The performance of this low complexity algorithm is analyzed when the prior probability of $f_{1}$ occurring is small (referred to as $f_{1}$ being rare). Both analytical and numerical simulation results show that this search strategy can significantly reduce the searching time when $f_{1}$ is rare. The proposed two stage mixed observation strategy can also be easily extended to multiple stages.
\end{abstract}

\begin{keywords}
CUSUM, multiple stopping times, quickest search, sequential analysis, SPRT.
\end{keywords}

\section{Introduction} \label{sec:intro}
The quickest search over multiple sequences problem, a generalization of the classical sequential hypothesis testing problem~\cite{Wald:AMS:45}, is originally proposed in a recent paper~\cite{Lai:TIT:11}. In particular, the authors consider a case that multiple sequences are available. For each individual sequence, it may either be generated by distribution $f_{0}$ or $f_{1}$, and its distribution is independent of all other sequences. An observer can take observations from these sequences. The observations taken from the same sequence are identical and independently distributed (i.i.d.). The goal is to find a sequence that is generated by $f_{1}$ as quickly as possible under an error probability constraint. Assuming that the observer can take one observation from a {\it single} sequence at a time and no switch-back is allowed, \cite{Lai:TIT:11} shows that the cumulative sum (CUSUM) test is optimal. This quickest search problem has applications in various fields such as cognitive radio \cite{Jiang:GLOBE:08,Li:CISS:08} 
and database search. The sample complexity of a such search problem is analyzed in~\cite{Malloy:TIT:12}.~\cite{Bayraktar:12} studies the search problem over continuous time Brownian channels. \cite{Tajer:ASIL:13} proposes an adaptive block sampling strategy for the quick search problem and solves the problem for Gaussian signals. The problem of recovering more than one sequence generated from $f_1$ is considered in~\cite{Tajer:ALL:12}.

In this paper, we propose a new search approach, namely a mixed observation search strategy, to quickly find a sequence generated by $f_{1}$. This search strategy consists of two stages. In the first stage, namely the scanning stage, the observer takes observations that are linear combinations of samples from two different sequences. In certain applications, such as cognitive radios, it is easy to obtain an observation that is a linear combination of signals from different sequences. The purpose of this stage is to scan through sequences generated by $f_0$ and quickly identify a pair of sequences among which at least one of them is highly likely to be generated by $f_{1}$. In particular, if the observer believes that both sequences that generate the observation are from $f_{0}$, then it discards this set of sequences and switches to observe another two new sequences. Otherwise, the observer stops the scanning stage and enters the refinement stage. In the refinement stage, the observer examines the two candidate sequences identified in the scanning stage one by one, and makes a final decision on which one of these two sequences is generated by $f_{1}$. Hence, in the refinement stage, no mixing is used anymore.

The motivation to propose this mixed observation search strategy is to improve the search efficiency when the presence of $f_{1}$ is rare. If most of the sequences are generating by $f_{0}$, then the observer can scan through and discard the sequences more quickly by this mixed strategy. Our strategy has a similar flavor with that of the group testing~\cite{Dorfman:AmS:43} and compressive sensing~\cite{Donoho:TIT:06, Candes:TIT:061} in which linear combinations of signals are observed.

With this mixed observation strategy, our goal is to minimize the average search delay under a false identification error constraint. By Lagrange multiplier, this problem is equivalent to minimize a linear combination of the search delay and the false identification error probability. Toward this goal, we optimize over four decision rules: 1) the stopping time for the scanning stage $\tau_{0}$, which determines when one should stop the scanning stage and enter the the refinement stage; 2) the sequence switching rule in the scanning stage $\phiv$, which determines when one should switch to new sequences for scanning; 3) the stopping time for the refinement stage $\tau_{1}$, which determines when one should stop the whole search process; and 4) the final decision rule in the refinement stage $\delta$, which determines which sequence will be claimed to be generated from $f_1$. Figure \ref{fig:system} illustrates this search strategy. This two stage search problem can be formulated as an optimal multiple stopping time problem, which is studied very recently in ~\cite{Carmona:MF:08, Carmona:MOR:08, Kobylanski:AAP:11, Christensen:SADA:13}. In particular, we show that this problem can be converted into an ordered two concatenated Markov stopping time problems. Using the optimal multiple stopping time theory~\cite{Kobylanski:AAP:11}, we derive the optimal strategy for this search problem. We show that the optimal solutions of $\tau_{0}$ and $\phiv$ are region rules. The optimal solution for $\tau_{1}$ is the time when the cost of the false identification is less than the future cost, and the optimal decision rule $\delta$ is to pick the sequence with a larger posterior probability of being generated by $f_{1}$.

\begin{figure}[thb]
\centering
\includegraphics[width=0.5 \textwidth]{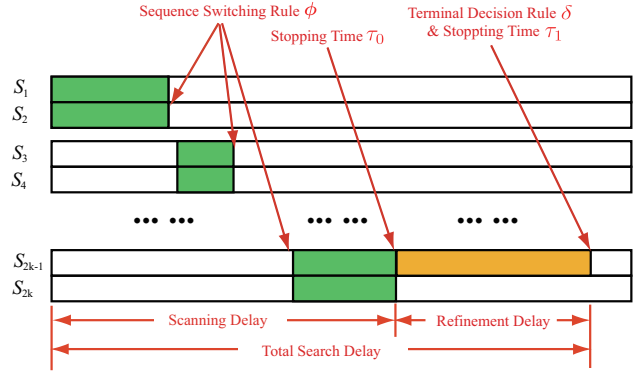}
\caption{Two stage search strategy }
\label{fig:system}
\end{figure}

Unfortunately, the optimal solution for this mixed observation search strategy has a very complex structure. For implementation purpose, we further propose a low complexity search algorithm in which the observer adopts the CUSUM algorithm in the scanning stage and adopts the sequential probability ratio test (SPRT) in the refinement stage. The asymptotic performance of this low complexity algorithm when $f_{1}$ is rare (i.e. the probability that a sequence is generated by $f_{1}$ is small) indicates that the average search delay of the mixed search strategy is dominated by the search delay in the scanning stage, which further depends on the undershoot of CUSUM crossing the lower bound. We analytically derive the reduction of the search delay of the proposed low complexity mixed observation search strategy over the single observation search strategy in \cite{Lai:TIT:11}. For general $f_{0}$ and $f_{1}$, the reduction of the detection delay needs to be computed numerically. For some special pdfs, we show that the search delay of the mixed observation search strategy is half of that of the approach in \cite{Lai:TIT:11}.

This mixed observation search strategy can be easily extended to multiple stages. In particular, we consider a simple extension that the observer's observation is mixed from $2^{K}$ sequences in the scanning stage.  The observer enters the refinement stage when it believes that at least one of the currently observing sequences is generated from $f_{1}$. In the refinement stage, the observer conducts binary search recursively. Specifically, in the $i^{th}$ recursion, the observer divides the  $2^{K-i+1}$ candidate sequences equally into two groups, and sequentially takes the observations mixed from the $2^{K-i}$ sequences in one group. Based on these sequential observations, the observer infers which of the two groups contains the sequence generated from $f_{1}$, and selects one of the two groups for the next recursion. Hence the observer can claim a sequence to be generated from $f_{1}$ after $K$ recursions. This procedure can be modeled as a multiple stopping time problem with $K+1$ stopping times. Similar to the two observation mixed strategy, this problem can be converted to $K+1$ concatenated single stopping time problems, and can be solved using the tools from the optimal stopping theory. We comment that, when $K$ increases, the complexity of the problem grows exponentially. Furthermore, when $K$ is large, it becomes more difficult to distinguish different hypothesis as they are becoming closer. Finding the optimal number of $K$ is subject to future study.

The remainder of the paper is organized as follows. Section \ref{sec:single} briefly reviews the quickest search problem proposed in \cite{Lai:TIT:11} and conducts an asymptotic analysis for the optimal solution when $f_{1}$ is rare. Section \ref{sec:mixed} formulates the mixed search strategy and presents the corresponding optimal solution. A low complexity mixed search algorithm is proposed and its asymptotic performance is analyzed in Section \ref{sec:mixed_asym}.  Section~\ref{sec:extension} extends the mixed search strategy to multiple sequences. Numerical examples are provided in Section~\ref{sec:simulation}. Finally, Section~\ref{sec:conclusion} offers concluding remarks.

\section{Quickest Search over Multiple Sequences with Single Observation  Strategy} \label{sec:single}
\subsection{Review: Problem Formulation and Optimal Strategy} \label{subsec:single_model}
In this subsection, we briefly review the sequential quickest search problem studied in \cite{Lai:TIT:11}. Consider an infinite number of sequences indexed by $s=1, 2, \ldots$. For each sequence $s$, which is denoted as $\{Y_{k}^{s}, k=1, 2, \ldots \}$, the distribution of its samples obeys one of the following two hypotheses:
\begin{eqnarray}
&&H_{0}:\; Y_{k}^{s} \; \overset{i.i.d.}{\sim} \; P_{0}, \; k=1,2,\cdots\no\\
&&H_{1}:\; Y_{k}^{s} \; \overset{i.i.d.}{\sim} \; P_{1}, \; k=1,2,\cdots\no
\end{eqnarray}
where $P_{0}$ and $P_{1}$ are two distinct probability measures that are absolutely continuous to each other. Let $f_0$ and $f_1$ be the probability density functions (pdf) of $P_{0}$ and $P_{1}$, respectively. The goal is to find a sequence generated by $f_{1}$ as quickly and reliably as possible. Moreover, each sequence is assumed to be independent of all other sequences, and each sequence is generated by $f_{1}$ with prior probability $\pi_{0}$ and by $f_0$ with $1-\pi_{0}$.

\cite{Lai:TIT:11} proposes a \emph{single observation} search strategy. At each time slot, the observer takes an observation from one sequence. Then, the observer has to make one of the following three actions: 1) stops the whole search procedure and claims that the currently observing sequence is generated by $f_{1}$; 2) continues taking observation from the same sequence to gather more information about its statistical nature; or 3) abandons the currently observing sequence and switches to observe a new sequence. In \cite{Lai:TIT:11}, it is assumed that if a sequence has been abandoned, the observer will not come back to test it again.

Denote the observation sequence as $\{Z_{k}\}$ with
$$Z_{k}=Y_{k}^{s_{k}},$$
where $s_{k}$ is used to denote the index of the sequence that the observer observes at time $k$. The observations generate the filtration $\{ \mathcal{F}_{k}, k=1,2, \ldots \}$ with
$$\mathcal{F}_{k} = \sigma\{Z_{1}, \ldots, Z_{k}\}. $$
Let $\phi_{k}$ be the sequence switching indicator function at time slot $k$. Specifically, $\phi_{k}$ is a $\mathcal{F}_{k}$ measurable function valued in $\{0, 1\}$. $\phi_{k}=1$ indicates that the observer abandons the currently observing sequence and switches to observe the next sequence, that is $s_{k+1}=s_{k}+1$, while $\phi_{k}=0$ indicates that the observer continues observing the sequence $s_{k}$, that is $s_{k+1}=s_{k}$. Denote $\phiv = \{\phi_{1}, \phi_{2}, \ldots, \phi_{k}\}$. Let $\tau$ be the time that the observer declares that the currently observing sequence is generated by $f_{1}$. Hence $\tau$ is a stopping time adapted to $\{ \mathcal{F}_{k} \}$.

Two performance metrics, namely the average search delay (ASD) and the false identification probability (FIP), are of interest. ASD is defined as
$$ ASD = \mathbb{E}[\tau],$$
and FIP is defined as
$$ FIP = P(H^{s_{\tau}} = H_{0}). $$
Here, ASD and FIP are defined with respect to the probability measure $P_{\pi_{0}} = \pi_{0}P_{1}+(1-\pi_{0})P_{0}$. For the brevity of notations, in this paper we use $P(\cdot)$ and $P_{\pi_{0}}(\cdot)$($\mathbb{E}[\cdot]$ and $\mathbb{E}_{\pi_{0}}[\cdot]$) interchangeably. The latter notation is used when we want to emphasize the prior probability $\pi_{0}$. The goal in \cite{Lai:TIT:11} is to find a sequence of switching rules $\phiv$ and a stopping time $\tau$ that jointly minimize ASD subjected to a FIP constraint. Specifically, \cite{Lai:TIT:11} wants to solve the following optimization problem:
\begin{eqnarray}
\inf_{\tau, \phiv} ASD \quad  \text{ subject to } \quad FIP \leq \zeta. \label{eq:single_problem}
\end{eqnarray}

It is shown in \cite{Lai:TIT:11} that $\pi_{k} := P(H^{s_{k}} = H_{1}|\mathcal{F}_{k})$ is a sufficient statistic for this problem. Using Bayes' rule, $\pi_{k}$ can be shown to satisfy the recursion
\begin{eqnarray}
\pi_{k+1} &=& \frac{\pi_{k}f_{1}(Z_{k+1})}{\pi_{k}f_{1}(Z_{k+1}) + (1-\pi_{k})f_{0}(Z_{k+1})}\mathbf{1}_{\{\phi_{k}=0\}} \no\\
&+& \frac{\pi_{0}f_{1}(Z_{k+1})}{\pi_{0}f_{1}(Z_{k+1}) + (1-\pi_{0})f_{0}(Z_{k+1})}\mathbf{1}_{\{\phi_{k}=1\}}, \label{eq:single_stat}
\end{eqnarray}
where $\mathbf{1}_{\{\cdot\}}$ is the indicator function. In \cite{Lai:TIT:11}, the optimal solution is shown to be
\begin{eqnarray}
&& \phi_{k}^{*} = \left\{ \begin{array}{cc}
             1 & \text{if } \pi_{k} < \pi_{0} \\
             0 & \text{otherwise}
           \end{array} \right., \no \\
&& \tau^{*} = \inf \{ k \geq 0 | \pi_{k} > \pi_{U}^{*} \}, \label{eq:single_opt}
\end{eqnarray}
where $\pi_{U}^{*}$ is a constant such that the FIP constraint holds with equality.

The optimal solution in \eqref{eq:single_opt} can be shown to be equivalent to the CUSUM test. In particular, the observer switches to observe a new sequence when a reset occurs in the corresponding CUSUM, and the observer terminates the search procedure when  the corresponding CUSUM terminates.

\subsection{Asymptotic Performance of the Single Observation Search Strategy} \label{subsec:single_performance}
In this subsection, in preparation for the analysis of the proposed scheme in Section \ref{sec:mixed}, we study the asymptotic performance of the CUSUM algorithm described in \eqref{eq:single_stat} and \eqref{eq:single_opt}. As mentioned in the introduction, the sampling complexity of the multiple sequence search problem has been discussed in \cite{Malloy:TIT:12}. In particular, \cite{Malloy:TIT:12} considers a general setting (i.e., $\pi_{0}$ is not necessarily small) and presents an upper bound and a lower bound of ASD for the CUSUM search algorithm. The upper and lower bounds derived in \cite{Malloy:TIT:12} are both inversely proportional to the Kullback-Leibler (KL) divergence of $f_{0}$ and $f_{1}$, but with \emph{different coefficients}. However, this is not accurate enough for our further analysis. In this section, we focus only on the case that $H_{1}$ is \emph{rare} (i.e., $\pi_{0}$ is small) and we derive the asymptotic value of ASD by the connection between the renewal process and CUSUM.

To endow the stopping time with the general sense, we rewrite it as
$$\tau = \inf \{ k \geq 0 | \pi_{k} > \pi_{U} \}.$$
That is, we replace $\pi_{U}^{*}$, which is optimal only for a particular $\zeta$, by a general $\pi_{U} \in [0, 1)$.

The CUSUM algorithm can be viewed as a renewal process with each renewal occurring whenever $\pi_{k}$ is reset to $\pi_{0}$, and with a termination whenever $\pi_{k}$ exceeds the upper bound $\pi_{U}$. Therefore
$$\tau = \sum_{n=1}^{N} \eta_{n},$$
where $\eta_{1}, \ldots, \eta_{n}, \ldots$ are i.i.d. repetitions of $\eta = \inf\{k\geq0|\pi_{k} \notin [\pi_{0}, \pi_{U}]\}$ under $P_{\pi_{0}}$, and $N$ is the number of repetitions. Notice that the observer switches to observe a new sequence when $\pi_{k}$ is reset to $\pi_{0}$ and terminates the search process when $\pi_{k}$ exceeds $\pi_{U}$, hence $\eta_{n}$ can be interpreted as the time spent in the $n^{th}$ sequence and $N$ can be viewed as the total number of sequences searched.

The CUSUM test can be represented equivalently in terms of likelihood ratios (LR). Notice that the observer keeps observing the same sequence from $\eta_{n}+1$ to $\eta_{n+1}$. Since $\eta_{n}$'s are i.i.d repetitions of $\eta$, we can focus only on the evolution of $\pi_{k}$ for $k = 1, \ldots, \eta_{1}$. 
We have
\begin{eqnarray}
\pi_{k} = \frac{\pi_{0} \prod_{i=1}^{k} f_{1}(Y_{i}^{1})}{\pi_{0} \prod_{i=1}^{k} f_{1}(Y_{i}^{1}) + (1-\pi_{0})\prod_{i=1}^{k} f_{0}(Y_{i}^{1})},
\end{eqnarray}
where $Y_{i}^{1}$ is the $i^{th}$ observation taken from the first sequence. In the reminder of this section, we replace $\eta_{1}$ by $\eta$ since they have the same distribution, and we replace $Y_{i}^{1}$ by $Y_{i}$ for the simplicity of the notation. Therefore, we have
\begin{eqnarray}
L_{k} := \prod_{i=1}^{k} L(Y_{i}) = \prod_{i=1}^{k} \frac{f_{1}(Y_{i})}{f_{0}(Y_{i})} = \frac{1-\pi_{0}}{\pi_{0}}\frac{\pi_{k}}{1-\pi_{k}},
\end{eqnarray}
where $L(Y_{i})=f_{1}(Y_{i})/f_{0}(Y_{i})$ is the LR of $Y_{i}$. Hence $\pi_{k}$ and $L_{k}$ have a one-to-one mapping. Let
\begin{eqnarray}
A = \frac{1-\pi_{0}}{\pi_{0}}\frac{\pi_{0}}{1-\pi_{0}}=1, \quad B = \frac{1-\pi_{0}}{\pi_{0}}\frac{\pi_{U}}{1-\pi_{U}}. \no
\end{eqnarray}
Denote $l(Y_{i}) = \log L(Y_{i})$ as the log likelihood ratio (LLR), and let us set
$$W_{k} := \log L_{k} = \sum_{i=1}^{k} l(Y_{i}).$$
Hence $W_{k}$ is a random walk. The stopping time $\eta$ can be equivalently written as
$$\eta = \inf\{k\geq0 | W_{k} \notin [\log A, \log B]\} = \inf\{k\geq0 | W_{k} \notin [0, \log B]\}.$$

Denote $\chi_{1}$ and $\chi_{0}$ as events $\{\pi_{\eta} > \pi_{U}\}$ and $\{\pi_{\eta} < \pi_{0} \}$, respectively. That is, $\chi_{1}$ is the event that the observer declares that the currently observing sequence is generated by $f_{1}$, and $\chi_{0}$ is the event that the observer switches to observe another sequence. Notice that $\chi_{0}$ can be equivalently written as $\{L_{k} < 1 \}$ or $\{W_{k} < 0 \}$, and $\chi_{1}$ can be equivalently written as $\{ L_{k} > B\}$ or $\{ W_{k} > \log B\}$. We also denote $\alpha := P_{0}(\chi_{1}) = 1-P_{0}(\chi_{0})$ and $\beta := P_{1}(\chi_{0})$ as Type I error and Type II error, respectively. According to \cite{Lai:TIT:11}, we have
\begin{eqnarray}
ASD &=& \frac{\mathbb{E}[\eta]}{\pi_{0}(1-\beta) + (1-\pi_{0})\alpha}\no\\
&=& \frac{\pi_{0}\mathbb{E}_{1}[\eta] + (1-\pi_{0})\mathbb{E}_{0}[\eta]}{\pi_{0}(1-\beta) + (1-\pi_{0})\alpha},\label{eq:single_delay} \\
FIP &=& \frac{(1-\pi_{0})\alpha}{\pi_{0}(1-\beta) + (1-\pi_{0})\alpha},\label{eq:single_error}
\end{eqnarray}
where $\mathbb{E}_{0}$ and $\mathbb{E}_{1}$ are expectations with respect to $P_{0}$ and $P_{1}$, respectively.

In the following, we study the performance of ASD when $H_{1}$ is rare. In this paper, the asymptotic analysis is in the sense $\pi_{0} \rightarrow 0$ rather than $\zeta \rightarrow 0$. In particular, we study two cases in the asymptotic analysis: $\zeta$ is constant within (0,1) and $\zeta \rightarrow 0$. The first case is referred to as the \emph{ fixed identification error} (FIE) case, and the second case is referred to as the \emph{rare identification error} (RIE) case. To proceed, we need to choose the value of threshold $B$ properly to satisfy $FIP \leq \zeta$.

\begin{lem}
For all $\zeta \in (0, 1)$, if $\pi_{0} \rightarrow 0$, then $\alpha \rightarrow 0$.
\end{lem}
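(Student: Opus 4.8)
The plan is to extract the conclusion directly from the closed-form false identification probability in \eqref{eq:single_error}, invoking only that $\alpha$ and $\beta$ are genuine probabilities and that the threshold is, by design, tuned so that the constraint $FIP\le\zeta$ holds. No fine control of the boundary crossing is required.

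First I would recall that for each $\pi_{0}$ the threshold $B$ (equivalently $\pi_{U}$) is chosen as a function of $\pi_{0}$ and $\zeta$ so that the constraint is met. Substituting the expression \eqref{eq:single_error}, this reads
\begin{eqnarray}
\frac{(1-\pi_{0})\alpha}{\pi_{0}(1-\beta) + (1-\pi_{0})\alpha} &\leq& \zeta. \no
\end{eqnarray}
Since the denominator is strictly positive, I clear it and gather the terms containing $\alpha$, obtaining $(1-\zeta)(1-\pi_{0})\alpha \le \zeta\,\pi_{0}(1-\beta)$, that is
\begin{eqnarray}
\alpha &\leq& \frac{\zeta}{1-\zeta}\cdot\frac{\pi_{0}(1-\beta)}{1-\pi_{0}}. \no
\end{eqnarray}

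Next I would sever the dependence on $\beta$. Because $\beta = P_{1}(\chi_{0})$ is a probability, $0\le 1-\beta\le 1$ holds uniformly in $\pi_{0}$, so
\begin{eqnarray}
0 \le \alpha &\leq& \frac{\zeta}{1-\zeta}\cdot\frac{\pi_{0}}{1-\pi_{0}}. \no
\end{eqnarray}
With $\zeta\in(0,1)$ held fixed the constant $\zeta/(1-\zeta)$ is finite, so letting $\pi_{0}\to0$ drives the right-hand side to $0$, and the squeeze gives $\alpha\to0$.

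The only real pitfall, I expect, is the temptation to reason through $B$ or $\pi_{U}$ directly: both are fixed only implicitly by the $FIP$ constraint and drift jointly with $\pi_{0}$, as does $\beta$, so any argument that first tries to pin down the limiting threshold ends up chasing several coupled quantities at once. Bounding $1-\beta$ crudely by $1$ is exactly what decouples the estimate and makes the limit immediate. As an independent check one may also note that under $P_{0}$ the likelihood ratio $L_{k}=e^{W_{k}}$ is a mean-one martingale, whence optional stopping gives $1=\mathbb{E}_{0}[L_{\eta}]\ge \mathbb{E}_{0}[L_{\eta}\mathbf{1}_{\chi_{1}}]\ge B\,\alpha$, so $\alpha\le 1/B$; since $B=\frac{1-\pi_{0}}{\pi_{0}}\frac{\pi_{U}}{1-\pi_{U}}$ carries the diverging factor $(1-\pi_{0})/\pi_{0}$, this corroborates $\alpha\to0$.
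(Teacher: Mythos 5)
Your proof is correct and follows essentially the same route as the paper: both derive the bound $\alpha \leq (1-\beta)\frac{\zeta}{1-\zeta}\frac{\pi_{0}}{1-\pi_{0}}$ from the $FIP$ constraint and \eqref{eq:single_error} and then let $\pi_{0}\to 0$. Your explicit use of $1-\beta\le 1$ merely makes a step the paper leaves implicit, and the martingale cross-check is a nice but inessential addition.
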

\begin{proof}

By the constraint $FIP \leq \zeta$ and \eqref{eq:single_error}, it is easy to obtain that
\begin{eqnarray}
\alpha \leq (1 - \beta)\frac{\zeta}{1-\zeta}\frac{\pi_{0}}{1-\pi_{0}}. \label{eq:error2}
\end{eqnarray}
Hence, $\pi_{0} \rightarrow 0$ leads to $\alpha \rightarrow 0$.
\end{proof}

By the definition of $\alpha$, it is easy to see that $\alpha \rightarrow 0$ requires $B \rightarrow \infty$. Since $W_{k} = \sum_{i=1}^{k} l(Y_{i})$ is a random walk and $\mathbb{E}_{1}[l(Y_{i})] > 0$, the distribution of the overshoot that $W_{k}$ exceeds $\log B$ converges as $\log B \rightarrow \infty$ under $P_{1}$ (Theorem 8.25 in \cite{Siegmund:Book:85}). Let
\begin{eqnarray}
R(x) := \lim_{B \rightarrow \infty} P_{1}(W_{\eta}-\log B \leq x|W_{\eta} \geq \log B)
\end{eqnarray}
be the corresponding asymptotic cumulative distribution function (cdf).

\begin{thm} \label{thm:single_threshold}
As $B \rightarrow \infty$, we have
\begin{eqnarray}
\alpha = B^{-1}(1-\beta)\left(\int_{0}^{\infty} e^{-x}dR(x)\right).
\end{eqnarray}
\end{thm}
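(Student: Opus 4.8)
The plan is to derive the stated relation by a change of measure from $P_{0}$ to $P_{1}$, after which the factor $B^{-1}$ appears naturally and the remaining conditional expectation is identified with the limiting overshoot integral. The starting observation is that, because the samples are i.i.d., the product $L_{k} = e^{W_{k}} = \prod_{i=1}^{k} f_{1}(Y_{i})/f_{0}(Y_{i})$ is exactly the Radon--Nikodym derivative $dP_{1}/dP_{0}$ restricted to $\mathcal{F}_{k}$, so that $e^{-W_{k}}$ is $dP_{0}/dP_{1}$ on $\mathcal{F}_{k}$. Since $\mathbb{E}_{1}[l(Y_{i})] > 0$, the random walk $W_{k}$ drifts to $+\infty$ under $P_{1}$ and therefore exits $[0, \log B]$ in finite time, i.e. $\eta < \infty$ almost surely under $P_{1}$ (and, by the opposite drift, also under $P_{0}$). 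Wald's likelihood ratio identity for stopping times then gives, for every event $A \in \mathcal{F}_{\eta}$,
$$P_{0}(A) = \mathbb{E}_{1}[e^{-W_{\eta}} \mathbf{1}_{A}].$$

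Next I would apply this identity with $A = \chi_{1} = \{W_{\eta} \geq \log B\}$, which is legitimate since $\chi_{1} \subseteq \{\eta < \infty\}$. On $\chi_{1}$ I would write $W_{\eta} = \log B + (W_{\eta} - \log B)$, where the overshoot $W_{\eta} - \log B$ is nonnegative, and pull out the deterministic factor $e^{-\log B} = B^{-1}$:
$$\alpha = P_{0}(\chi_{1}) = \mathbb{E}_{1}[e^{-W_{\eta}}\mathbf{1}_{\chi_{1}}] = B^{-1} P_{1}(\chi_{1})\, \mathbb{E}_{1}[e^{-(W_{\eta} - \log B)} | \chi_{1}].$$
Recognizing that $P_{1}(\chi_{1}) = 1 - P_{1}(\chi_{0}) = 1 - \beta$ yields the exact (pre-limit) identity $\alpha = B^{-1}(1-\beta)\, \mathbb{E}_{1}[e^{-(W_{\eta} - \log B)} | \chi_{1}]$.

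It then remains to let $B \to \infty$. By the cited Theorem 8.25 in \cite{Siegmund:Book:85}, the conditional law of the overshoot $W_{\eta} - \log B$ given $\chi_{1}$ converges weakly to the distribution $R$; since $x \mapsto e^{-x}$ is bounded and continuous on $[0, \infty)$, the conditional expectation converges to $\int_{0}^{\infty} e^{-x}\, dR(x)$, which is precisely the claimed asymptotic form. The main obstacle is the careful justification of the likelihood ratio identity at the random time $\eta$ --- establishing $\eta < \infty$ almost surely and verifying that the change of measure passes through the stopping time with no contribution from $\{\eta = \infty\}$ --- rather than the limit itself, since the integrand $e^{-(W_{\eta} - \log B)}$ is bounded by $1$ on $\chi_{1}$ and the weak convergence of the overshoot is exactly what the renewal-theoretic result in \cite{Siegmund:Book:85} supplies.
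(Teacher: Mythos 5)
Your proposal is correct and follows essentially the same route as the paper: a change of measure from $P_{0}$ to $P_{1}$ at the stopping time $\eta$ (the paper derives the likelihood-ratio identity inline by summing over $\{\eta=k\}$, which is exactly the standard proof of the identity you cite), followed by factoring out $B^{-1}$, identifying $P_{1}(\chi_{1})=1-\beta$, and passing to the limiting overshoot distribution $R$. Your explicit separation of the exact pre-limit identity from the $B\to\infty$ step, and your attention to $\eta<\infty$ a.s., are slightly more careful than the paper's presentation but do not constitute a different argument.
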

\begin{proof}
\begin{eqnarray}
\alpha &=& P_{0}(W_{\eta} \geq \log B) \no\\
&=& \sum_{k=1}^{\infty} P_{0}(W_{k} \geq \log B, \eta=k) \no\\
&=& \sum_{k=1}^{\infty} \int_{\{W_{k} \geq \log B, \eta=k\}} d P_{0} \no\\
&=& \sum_{k=1}^{\infty} \int_{\{W_{k} \geq \log B, \eta=k\}} \frac{d P_{0}}{d P_{1}} d P_{1} \no\\
&=& \sum_{k=1}^{\infty} \mathbb{E}_{1}[L_{k}^{-1}; W_{k} \geq \log B, \eta=k]  \no\\
&=& \mathbb{E}_{1}[L_{\eta}^{-1}; W_{\eta} \geq \log B] \no\\
&=& \mathbb{E}_{1}[L_{\eta}^{-1}| W_{\eta} \geq \log B] P_{1}( W_{\eta} \geq \log B). \label{eq:single_alpha}
\end{eqnarray}
Since $P_{1}( W_{\eta} \geq \log B) = 1 - \beta$, and
\begin{eqnarray}
\mathbb{E}_{1}[L_{\eta}^{-1}| W_{\eta} \geq \log B]
&=& \mathbb{E}_{1}[e^{-W_{\eta}}| W_{\eta} \geq \log B] \no\\
&=& e^{-\log B}\mathbb{E}_{1}[e^{-(W_{\eta}-\log B)}| W_{\eta} \geq \log B] \no\\
&=& B^{-1} \left(\int_{0}^{\infty} e^{-x}dR(x)\right), \no
\end{eqnarray}
the conclusion follows immediately.
\end{proof}

For the optimal solution, one needs to design $\pi_{U}^{*}$ (or $B^{*}$) such that \eqref{eq:error2} holds in equality. In general, $B^{*}$ is difficult to write explicitly due to the overshoot. However, for the asymptotic analysis, one can ignore the overshoot and find a simple threshold $B$ such that the algorithms with $B$ and $B^{*}$ achieve the same asymptotic delay.

\begin{cor} \label{cor:single_threshold} If $0< \beta < 1$ and
$$\int_{0}^{\infty} e^{-x}dR(x) < \infty, $$
then
$$B^{-1} = \frac{\zeta}{1-\zeta}\frac{\pi_{0}}{1-\pi_{0}}$$
is a threshold such that using $B$ and $B^{*}$, the algorithm has the same asymptotic behavior.
\end{cor}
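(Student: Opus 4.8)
The plan is to show that the optimal threshold $B^{*}$ and the proposed threshold $B$ differ only by a bounded multiplicative constant that is independent of $\pi_{0}$, and then to feed this into the delay expression \eqref{eq:single_delay} and verify that the $\pi_{0}\to 0$ asymptotics are unaffected. First I would pin down the relation between $B$ and $B^{*}$. By construction $B^{*}$ is the threshold for which \eqref{eq:error2} holds with equality, i.e. $\alpha=(1-\beta)\frac{\zeta}{1-\zeta}\frac{\pi_{0}}{1-\pi_{0}}$; combining this with Theorem~\ref{thm:single_threshold}, which gives $\alpha=(B^{*})^{-1}(1-\beta)C$ with $C:=\int_{0}^{\infty}e^{-x}dR(x)$, and cancelling the common factor $(1-\beta)$ (legitimate since $0<\beta<1$), yields $(B^{*})^{-1}=C^{-1}\frac{\zeta}{1-\zeta}\frac{\pi_{0}}{1-\pi_{0}}=C^{-1}B^{-1}$. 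The hypothesis $C<\infty$ together with $C\in(0,1]$ (since $e^{-x}\le 1$ and $R$ is a genuine cdf) makes $\log B^{*}=\log B-\log C$ an additive shift by the \emph{$\pi_{0}$-independent} constant $\log C$.

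Next I would substitute the two thresholds into \eqref{eq:single_delay} and let $\pi_{0}\to 0$, so that $B,B^{*}\to\infty$. The numerator $\pi_{0}\mathbb{E}_{1}[\eta]+(1-\pi_{0})\mathbb{E}_{0}[\eta]$ is handled by the renewal picture already set up for $W_{k}=\sum_{i\le k}l(Y_{i})$: under $P_{0}$ the walk has negative drift, so as the upper barrier $\log B\to\infty$ becomes irrelevant, $\mathbb{E}_{0}[\eta]$ converges to the finite mean first-passage time below $0$, call it $\bar\eta_{0}$, the same limit for $B$ and $B^{*}$; under $P_{1}$, Wald's identity gives $\mathbb{E}_{1}[\eta]\sim(1-\beta)\log B/\mathbb{E}_{1}[l(Y_{i})]$, and since $\log B^{*}=\log B+O(1)$ while $\log B\to\infty$, we get $\mathbb{E}_{1}[\eta^{*}]/\mathbb{E}_{1}[\eta]\to 1$. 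Because $\log B\sim\log(1/\pi_{0})$, the term $\pi_{0}\mathbb{E}_{1}[\eta]=O(\pi_{0}\log(1/\pi_{0}))\to 0$, so the numerator tends to $\bar\eta_{0}$ for \emph{both} thresholds.

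The crux is the denominator $\pi_{0}(1-\beta)+(1-\pi_{0})\alpha$, which is the only place the overshoot constant $C$ survives. Using $\beta\to\beta_{\infty}\in(0,1)$ (again from the renewal/Siegmund results of \cite{Siegmund:Book:85}), the optimal threshold gives denominator $\pi_{0}(1-\beta)/(1-\zeta)$, whereas the proposed threshold gives $(1-\pi_{0})\alpha=C(1-\beta)\frac{\zeta}{1-\zeta}\pi_{0}$ and hence denominator $\pi_{0}(1-\beta)\big(1+C\tfrac{\zeta}{1-\zeta}\big)$. Assembling the ratio then gives $ASD(B)/ASD(B^{*})\to 1/(1-\zeta(1-C))$. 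I expect this quantity to be the heart of the matter: it equals $1$ exactly when the overshoot is ignored ($C=1$), and it tends to $1$ in the rare-identification-error regime $\zeta\to 0$, while for fixed $\zeta$ it is a bounded constant, so the two thresholds still produce delay of the identical leading order $\Theta(1/\pi_{0})$. The main obstacle is therefore not the threshold bookkeeping but justifying that $C$ enters \emph{only} through this bounded denominator correction — which rests on the renewal-theoretic limits $\mathbb{E}_{0}[\eta]\to\bar\eta_{0}$ and $\beta\to\beta_{\infty}$, and on the negligibility of $\pi_{0}\log(1/\pi_{0})$, holding uniformly enough to pass to the limit in the ratio \eqref{eq:single_delay}.
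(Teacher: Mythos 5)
Your proposal is correct and rests on the same two ingredients as the paper's own proof --- the identity \eqref{eq:single_alpha} / Theorem~\ref{thm:single_threshold} and the finiteness of the overshoot integral $C=\int_{0}^{\infty}e^{-x}\,dR(x)$ --- but it carries the second half of the argument considerably further. The paper verifies the FIP constraint through the exact, non-asymptotic bound $\alpha=\mathbb{E}_{1}[L_{\eta}^{-1}\mid W_{\eta}\ge\log B](1-\beta)<B^{-1}(1-\beta)$ and then disposes of the ``same asymptotic behavior'' claim in a single sentence (``$B$ ignores only the effect of overshoot, which is finite''). You instead make the comparison quantitative: you derive $(B^{*})^{-1}=C^{-1}B^{-1}$, push both thresholds through \eqref{eq:single_delay}, and obtain the explicit limit $ASD(B)/ASD(B^{*})\to 1/\bigl(1-\zeta(1-C)\bigr)$. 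This is a genuine refinement of the paper's loose statement: it shows the ratio tends to $1$ only when $C=1$ or in the RIE regime $\zeta\to 0$, and that for fixed $\zeta$ the two thresholds agree only up to a $\pi_{0}$-independent constant factor, i.e.\ they produce the same leading order $\Theta(1/\pi_{0})$ but not necessarily the same leading constant. This reading is in fact the one consistent with how the corollary is used downstream, since the subsequent theorem on the asymptotic ASD retains $\rho=\alpha/\pi_{0}$ as a threshold-dependent parameter in the FIE case. Two small points to tighten: for the FIP verification you should rely on the exact inequality $e^{-(W_{\eta}-\log B)}\le 1$ on the event $\{W_{\eta}\ge\log B\}$ (as the paper does) rather than on the asymptotic identity of Theorem~\ref{thm:single_threshold}, so that $FIP\le\zeta$ holds for every finite $\pi_{0}$ and not merely in the limit; and the renewal-theoretic facts you invoke ($\mathbb{E}_{0}[\eta]$ increasing to a finite limit and $\beta$ converging as $B\to\infty$) are exactly what Lemma~\ref{lem:delaybound} together with monotone convergence supplies, so they are available without further work.
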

\begin{proof}
By \eqref{eq:single_alpha}, we have
\begin{eqnarray}
\alpha = \mathbb{E}_{1}[L_{\eta}^{-1}| W_{\eta} \geq \log B] (1-\beta) < B^{-1} (1-\beta).
\end{eqnarray}
Hence, if we choose 
$$B^{-1} = \frac{\zeta}{1-\zeta}\frac{\pi_{0}}{1-\pi_{0}}$$
\eqref{eq:error2} is satisfied, which further indicate the FIP constraint is satisfied. Since threshold $B$ ignores only the effect of overshoot, $\int_{0}^{\infty} e^{-x}dR(x)$, which is a finite value, ASD has the same asymptotic behavior with $B$ and $B^{*}$.
\end{proof}

In the following, we analyze the asymptotic ASD under two cases: 1) the FIE case , i.e., $\pi_{0} \rightarrow 0$ while $\zeta$ is a constant in $(0, 1)$. By Corollary \ref{cor:single_threshold}, we have $|\log B| = |\log \pi_{0}|(1+o(1))$; 2) the RIE case, i.e., $\pi_{0} \rightarrow 0$ and $\zeta \rightarrow 0$. In this case, we have $|\log B| = (|\log \pi_{0}|+|\log \zeta|)(1+o(1))$. Let $\rho := \alpha/\pi_{0}$. By \eqref{eq:error2}, it is easy to see that $\rho$ is a constant in the FIE case and $\rho \rightarrow 0$ in the RIE case. We first have the following lemma:
\begin{lem} \label{lem:delaybound}
As $\pi_{0} \rightarrow 0$,
\begin{eqnarray}
&& 0 < \mathbb{E}_{0}[\eta] < \infty,   \no\\
&& 0 < \mathbb{E}_{1}[\eta] \leq (1-\beta)\frac{|\log B|}{D(f_{1}||f_{0})}(1+o(1)), \no
\end{eqnarray}
in which $D(f_{1}||f_{0})$ is the KL divergence of $f_{1}$ and $f_{0}$.
\end{lem}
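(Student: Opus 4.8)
The plan is to read $\eta$ as the first-exit time of the random walk $W_k=\sum_{i=1}^{k} l(Y_i)$ from the interval $[0,\log B]$, and to exploit the two opposite drifts: $\mathbb{E}_0[l]=-D(f_0||f_1)<0$ under $P_0$ and $\mathbb{E}_1[l]=D(f_1||f_0)>0$ under $P_1$. Introducing the one-sided passage times $\eta^{-}:=\inf\{k\ge 1: W_k<0\}$ and $\eta^{+}_B:=\inf\{k\ge 1: W_k>\log B\}$, one has the pathwise identity $\eta=\min(\eta^{-},\eta^{+}_B)$, so in particular $\eta\le\eta^{-}$ and $\eta\le\eta^{+}_B$ on every sample path. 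The $P_0$-bound will follow from first passage below the \emph{fixed} lower boundary, while the $P_1$-bound will follow from Wald's identity combined with the controlled overshoot above $\log B$. The two strict lower bounds $0<\mathbb{E}_0[\eta]$ and $0<\mathbb{E}_1[\eta]$ are immediate: since $W_0=0\in[0,\log B]$, the walk needs at least one step to leave, so $\eta\ge 1$ and both expectations are $\ge 1$.

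For finiteness of $\mathbb{E}_0[\eta]$, I would use $\eta\le\eta^{-}$, which reduces the claim to $\mathbb{E}_0[\eta^{-}]<\infty$. Under $P_0$ the process $-W_k$ is a random walk with positive finite drift $D(f_0||f_1)$, and $\eta^{-}=\inf\{k\ge 1:-W_k>0\}$ is its ascending ladder epoch, which has finite mean for any random walk with nonzero finite drift (a standard renewal fact, available from the machinery in \cite{Siegmund:Book:85}). The key point is that $\eta^{-}$ does not involve $B$, so the domination $\mathbb{E}_0[\eta]\le\mathbb{E}_0[\eta^{-}]$ is \emph{uniform} in $B$ and therefore survives the limit $\pi_0\to 0$ (equivalently $B\to\infty$ by Corollary~\ref{cor:single_threshold}); no estimate that degrades as $B\to\infty$ is needed.

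For the upper bound on $\mathbb{E}_1[\eta]$, I would first note $\eta\le\eta^{+}_B$ and that the positive-drift first passage $\eta^{+}_B$ has finite mean under $P_1$, so $\mathbb{E}_1[\eta]<\infty$ and Wald's identity applies, giving $\mathbb{E}_1[W_\eta]=D(f_1||f_0)\,\mathbb{E}_1[\eta]$. Decomposing over $\chi_1=\{W_\eta\ge\log B\}$ and $\chi_0=\{W_\eta<0\}$, with $P_1(\chi_1)=1-\beta$ and $P_1(\chi_0)=\beta$, yields
\[
\mathbb{E}_1[W_\eta]=(1-\beta)\bigl(\log B+\mathbb{E}_1[\,W_\eta-\log B\mid\chi_1\,]\bigr)+\beta\,\mathbb{E}_1[\,W_\eta\mid\chi_0\,].
\]
On $\chi_0$ one has $W_\eta<0$, so the last term is nonpositive and may be dropped for an upper bound, while the conditional overshoot $\mathbb{E}_1[\,W_\eta-\log B\mid\chi_1\,]$ stays bounded (i.e.\ $O(1)$) as $B\to\infty$ by the overshoot-convergence result already invoked to define $R$ (Theorem~8.25 of \cite{Siegmund:Book:85}). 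Hence $\mathbb{E}_1[W_\eta]\le(1-\beta)(\log B+O(1))=(1-\beta)\log B\,(1+o(1))$, and dividing by $D(f_1||f_0)>0$ and using $\log B>0$ gives exactly $\mathbb{E}_1[\eta]\le(1-\beta)\tfrac{|\log B|}{D(f_1||f_0)}(1+o(1))$.

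I expect the delicate part to be the two uniformity/integrability issues rather than the algebra: first, the finiteness of $\mathbb{E}_0[\eta]$ must hold uniformly in $B$, which is why I route it through the $B$-free bound $\eta\le\eta^{-}$; and second, the conditional overshoot $\mathbb{E}_1[\,W_\eta-\log B\mid\chi_1\,]$ must be genuinely $O(1)$ rather than growing with $B$, which requires uniform integrability of the overshoots and a finite first moment $\int_0^\infty x\,dR(x)<\infty$ of the limiting law. I would make this moment hypothesis explicit; it follows from the same integrability of the log-likelihood ratio that guarantees $D(f_1||f_0)<\infty$ and that justifies applying Wald's identity. Once these regularity conditions are in place, everything else is routine.
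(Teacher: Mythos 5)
Your proof is correct and follows essentially the same route as the paper: the bound on $\mathbb{E}_1[\eta]$ is obtained exactly as in the paper's proof, via Wald's identity, dropping the nonpositive contribution on $\{W_\eta<0\}$ and controlling the overshoot above $\log B$. The only differences are that where the paper disposes of $\mathbb{E}_0[\eta]<\infty$ by citing Lemma~1 of \cite{Banerjee:TIT:12}, you supply the standard argument directly (domination by the $B$-free descending passage time $\eta^{-}$, i.e.\ the ladder epoch of the positive-drift walk $-W_k$ under $P_0$), and you make explicit the overshoot uniform-integrability condition that the paper leaves implicit.
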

\begin{proof}
It is obvious that $\mathbb{E}_{0}[\eta] > 0$ and $\mathbb{E}_{1}[\eta] > 0$.
The proof of $\mathbb{E}_{0}[\eta] < \infty $ follows exactly the proof of Lemma 1 in \cite{Banerjee:TIT:12}. Hence, we only need to show the upper bound of $\mathbb{E}_{1}[\eta]$. Since $W_{k} = \sum_{i=1}^{k} l(Y_{i})$, and $Y_{i}$'s are i.i.d. for $1\leq i \leq \eta$. By Wald's identity, we have
\begin{eqnarray}
\mathbb{E}_{1}[W_{\eta}] = \mathbb{E}_{1}[\eta]\mathbb{E}_{1}[l(Y_{1})]. \no
\end{eqnarray}
At the same time, we have
\begin{eqnarray}
\mathbb{E}_{1}[W_{\eta}] &=& \mathbb{E}_{1}[W_{\eta}|W_{\eta}<0]P_{1}(W_{\eta}<0) \no\\
&+& \mathbb{E}_{1}[W_{\eta}|W_{\eta}> \log B]P_{1}(W_{\eta} > \log B) \no\\
&\leq& \mathbb{E}_{1}[W_{\eta}|W_{\eta}> \log B]P_{1}(W_{\eta} > \log B) \no\\
&=& |\log B| P_{1}(\chi_{1})(1+o(1)) \no\\
&=& (1-\beta)|\log B|(1+o(1)). \no
\end{eqnarray}
As the result, we have
\begin{eqnarray}
\mathbb{E}_{1}[\eta] = \frac{\mathbb{E}_{1}[W_{\eta}]}{\mathbb{E}_{1}[l(Y_{1})]} \leq (1-\beta)\frac{|\log B|}{D(f_{1}||f_{0})}(1+o(1)). \no
\end{eqnarray}
\end{proof}

\begin{thm}
If $0 < D(f_{1}||f_{0})<\infty$, then as $\pi_{0} \rightarrow 0$, ASD for the FIE case is given as
\begin{eqnarray}
ASD = \frac{1-\pi_{0}}{\rho(1-\pi_{0})+(1-\beta)} \frac{1}{\pi_{0}} \mathbb{E}_{0}[\eta] (1+o(1)).
\end{eqnarray}
In addition, if $\pi_{0}|\log \zeta| \rightarrow 0$, then ASD for the RIE case is given as
\begin{eqnarray}
ASD = \frac{1-\pi_{0}}{1-\beta} \frac{1}{\pi_{0}} \mathbb{E}_{0}[\eta] (1+o(1)).
\end{eqnarray}
\end{thm}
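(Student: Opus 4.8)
The plan is to start from the exact delay expression \eqref{eq:single_delay} and peel off the dominant terms as $\pi_{0}\to 0$. First I would substitute the definition $\rho:=\alpha/\pi_{0}$, i.e. $\alpha=\rho\pi_{0}$, into the denominator of \eqref{eq:single_delay}, giving
$$\pi_{0}(1-\beta)+(1-\pi_{0})\alpha=\pi_{0}\bigl[(1-\beta)+(1-\pi_{0})\rho\bigr].$$
This already isolates the factor $1/\pi_{0}$ and reproduces the denominator $\rho(1-\pi_{0})+(1-\beta)$ of the FIE claim, so the entire remaining task is to control the numerator $\pi_{0}\mathbb{E}_{1}[\eta]+(1-\pi_{0})\mathbb{E}_{0}[\eta]$.

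The key step is to show that $\pi_{0}\mathbb{E}_{1}[\eta]$ is asymptotically negligible relative to $(1-\pi_{0})\mathbb{E}_{0}[\eta]$. By Lemma \ref{lem:delaybound}, $\mathbb{E}_{1}[\eta]\le(1-\beta)|\log B|/D(f_{1}||f_{0})(1+o(1))$, and since $0<D(f_{1}||f_{0})<\infty$ this forces $\pi_{0}\mathbb{E}_{1}[\eta]=O(\pi_{0}|\log B|)$. In the FIE case the threshold asymptotics recorded after Corollary \ref{cor:single_threshold} give $|\log B|=|\log\pi_{0}|(1+o(1))$, so $\pi_{0}\mathbb{E}_{1}[\eta]=O(\pi_{0}|\log\pi_{0}|)\to 0$. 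In the RIE case $|\log B|=(|\log\pi_{0}|+|\log\zeta|)(1+o(1))$, and the added hypothesis $\pi_{0}|\log\zeta|\to 0$, together with the automatic $\pi_{0}|\log\pi_{0}|\to 0$, again yields $\pi_{0}\mathbb{E}_{1}[\eta]\to 0$. On the other hand, Lemma \ref{lem:delaybound} also keeps $(1-\pi_{0})\mathbb{E}_{0}[\eta]$ bounded away from both $0$ and $\infty$. Hence
$$\pi_{0}\mathbb{E}_{1}[\eta]+(1-\pi_{0})\mathbb{E}_{0}[\eta]=(1-\pi_{0})\mathbb{E}_{0}[\eta](1+o(1)).$$

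Combining these two facts gives, valid in both cases,
$$ASD=\frac{1-\pi_{0}}{\rho(1-\pi_{0})+(1-\beta)}\frac{1}{\pi_{0}}\mathbb{E}_{0}[\eta](1+o(1)),$$
which is exactly the FIE statement. For the RIE case I would then invoke the fact already established in the text that $\rho\to 0$ as $\zeta\to 0$; this simplifies the denominator via $\rho(1-\pi_{0})+(1-\beta)=(1-\beta)(1+o(1))$ and delivers the stated RIE expression.

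The main obstacle is precisely the negligibility of $\pi_{0}\mathbb{E}_{1}[\eta]$: this is where the two regimes diverge and where the extra RIE hypothesis $\pi_{0}|\log\zeta|\to 0$ is consumed, while in the FIE regime the corresponding smallness is automatic. Everything else is algebraic bookkeeping, and the only nontrivial inputs are the threshold asymptotics of Corollary \ref{cor:single_threshold} and the delay bounds of Lemma \ref{lem:delaybound}, both of which I may assume. The one point I would check carefully is that the $1-\beta$ retained in the final formulas is literally the quantity appearing in the denominator of \eqref{eq:single_delay}, so that no hidden $\pi_{0}$-dependence is mishandled when collapsing the error terms into the single $(1+o(1))$ factor.
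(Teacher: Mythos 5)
Your proposal is correct and follows essentially the same route as the paper: rewrite \eqref{eq:single_delay} with $\alpha=\rho\pi_{0}$ to isolate the $1/\pi_{0}$ factor, then use Lemma \ref{lem:delaybound} together with the threshold asymptotics $|\log B|=|\log\pi_{0}|(1+o(1))$ (FIE) or $(|\log\pi_{0}|+|\log\zeta|)(1+o(1))$ (RIE) to show $\pi_{0}\mathbb{E}_{1}[\eta]\to 0$, and finally invoke $\rho\to 0$ for the RIE case. The only (harmless) addition is your explicit remark that $\mathbb{E}_{0}[\eta]$ stays bounded away from $0$ and $\infty$, which the paper leaves implicit when absorbing the error into the $(1+o(1))$ factor.
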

\begin{proof}
By \eqref{eq:single_delay}, ASD can be written as
\begin{eqnarray}
ASD &=& \frac{1}{\rho(1-\pi_{0})+(1-\beta)}\frac{1-\pi_{0}}{\pi_{0}}\left(\frac{\pi_{0}}{1-\pi_{0}}\mathbb{E}_{1}[\eta]+\mathbb{E}_{0}[\eta]\right). \no
\end{eqnarray}
From the discussion after Corollary \ref{cor:single_threshold}, we know that $|\log B| = |\log \pi_{0}|(1+o(1))$ in the FIE case and $|\log B| = (|\log \pi_{0}|+|\log \zeta|)(1+o(1))$ in the RIE case. It is easy to verify that $\pi_{0}|\log B| \rightarrow 0$ under both of these cases. Therefore
\begin{eqnarray}
\frac{\pi_{0}}{1-\pi_{0}}\mathbb{E}_{1}[\eta] \leq (1-\beta) \frac{\pi_{0}}{1-\pi_{0}} \frac{|\log B|}{D(f_{1}||f_{0})}(1+o(1)) \rightarrow 0. \no
\end{eqnarray}
Then, the first conclusion follows immediately; the second conclusion can be obtained by noticing the fact that $\rho\rightarrow 0$ in the RIE case.
\end{proof}

\begin{rmk}
In the above theorem, we introduce an additional condition that $\pi_{0}|\log \zeta| \rightarrow 0$ to limit the speed of $\zeta$ approaching zero for the RIE case. This condition could be easily satisfied. For example, when $\zeta$ goes to zero on the order $\pi_{0}^{n}$ for any $n<\infty$, this condition still holds.
\end{rmk}

\section{New Search Strategy Based on Mixed Observations} \label{sec:mixed}
\subsection{New Strategy} \label{subsec:mixed_model}
In this section we propose a new search strategy, termed \emph{mixed observation} search strategy, for the multi-sequence search problem described in Section \ref{subsec:single_model}. The mixed observation search is a sequential strategy consisting of two stages, namely scanning and refinement stages respectively.

In the scanning stage, the observer picks two sequences $s_k^{1}$ and $s_k^{2}$ at each time slot $k$ and observes a linear combination of samples from these two sequences:
\begin{eqnarray}
Z_k=a_1Y_{k}^{s_k^{1}}+a_2Y_{k}^{s_k^{2}}.
\end{eqnarray}
Since $s_k^{1}$ and $s_k^{2}$ have no difference in their distribution, we simply set the same weight $a_1=a_2=1$ in our search strategy. We note that this choice may not be optimal since the coefficients $a_{1}$ and $a_{2}$ can be updated at every time slot based on the previous observations. Hence, to optimize the linear combination is one of our future research directions. In this paper, we focus on the setting $a_1=a_2=1$ and we show that even this simple setting can bring significant improvement when the occurrence of $f_{1}$ is rare.

Since each sequence has two possible pdfs, $Z_k$ has three possible pdfs: 1) $g_{0} := f_0*f_0$, which happens when both sequences $s_k^{1}$ and $s_k^{2}$ are generated from $f_0$. Here $*$ denotes the convolution. The prior probability of this occurring is $p^{0,0}_0=(1-\pi_{0})^2$; 2) $g_{1} := f_0*f_1$, which happens when one of these two sequences is generated from $f_0$ and the other one is generated from $f_1$. The prior probability of this occurring is $p^{mix}_0=2\pi_{0}(1-\pi_{0})$; and 3) $g_{2} := f_1*f_1$, which happens when both sequences are generated from $f_1$. The prior probability of this occurring is $p^{1,1}_0=\pi_{0}^2$. Here, we use $g$ to represent the pdf of $Z_{k}$, and use the subscript of $g$ to denote the number of sequences generated from $f_{1}$.

With a little abuse of notation, we use $\{\mathcal{F}_k\}$ to denote the filtration generated from the observations in the scanning stage, i.e., $\mathcal{F}_k=\sigma\{Z_1,\cdots,Z_k\}$. After taking sample $Z_k$, the observer needs to make the following two decisions: 1) whether to stop the scanning stage and enter the refinement stage to further examine the two sequences more closely. We use $\tau_0$ to denote the time that the observer stops the scanning stage. Hence $\tau_{0}$ is a stopping time with respect to $\{\mathcal{F}_{k}\}$; and 2) if the decision is to continue the scanning process, the observer needs to decide whether to take more samples from the same two sequences, or to switch to two new sequences. We still use $\phi_{k}(\mathcal{F}_k)$ to denote the switch function. If $\phi_{k}(\mathcal{F}_k)=1$, the observer switches to observe a pair of new sequences, while if $\phi_{k}(\mathcal{F}_k)=0$, the observer takes more samples from the currently observing sequences. Let $\phiv= \{ \phi_1,\phi_2,\cdots \}$ be the sequence of switch decisions. Same as \cite{Lai:TIT:11}, we assume that if sequences have been abandoned, the observer will not come back and exam them again.

In this proposed strategy, we emphasize that once the observer enters the refinement stage, it could not come back to the scanning stage any more. Hence the observer will not enter the refinement stage until he is confident that at least one of the observing sequence is generated from $f_{1}$. The extension to case in which the observer can reenter the scanning stage will be a subject of future study.

In the refinement stage, the observer examines the two candidate sequences more closely. Each sample taken during the refinement stage will come from one sequence. Hence, at this stage, no mixing is used anymore. We will use $j$ to denote the index of samples taken at this stage. Clearly, at the beginning of the refinement stage, i.e. $j=1$, there is no difference between these two candidates $s_{\tau_0}^{1}$ and $s_{\tau_0}^{2}$, and hence the observer simply picks one $s_{\tau_0}^{1}$:
\begin{eqnarray}
X_j=Y_{\tau_0+j}^{s_{\tau_0}^{1}}.
\end{eqnarray}
After taking each sample, the observer needs to decide whether or not to stop the refinement stage. If the observer decides to stop, the observer should choose one of the two candidate sequences and claim that it is generated from $f_1$. Intuitively, if the observer believes that the observed candidate sequence $s_{\tau_0}^{1}$ is generated from $f_{1}$, then the observer claims $s_{\tau_0}^{1}$. Otherwise, the observer claims $s_{\tau_0}^{2}$. Hence, at the end of the refinement stage, one of the candidate sequences must be declared to be generated from $f_{1}$. Let $\mathcal{G}_j=\sigma(Z_1,\cdots,Z_{\tau_0},X_1,\cdots,X_j)$ be the filtration generated by the observations from these two stages. We use $\tau_1$ to denote the time at which the observer stops the refinement stage, hence, $\tau_{1}$ is a stopping time with respect to $\{ \mathcal{G}_{j} \}$. Let $\delta$ be the terminal decision rule, according to which the observer picks the sequence that is claimed be generated from $f_1$.

We are still interested in ASD and FIP. In this case, ASD is defined as
$$ASD_{m} = \mathbb{E}[\tau_0+\tau_1],$$
and FIP is defined as
$$FIP_{m} = P(H^{\delta}=H_0).$$
We want to solve the following optimization problem
\begin{eqnarray}
\inf\limits_{\tau_0,\phiv,\tau_1,\delta} ASD_{m} \text{  subject to  } FIP_{m} \leq \zeta. \label{eq:mixed_P1}
\end{eqnarray}
For any given $\zeta$, by Lagrange multiplier this problem can be equivalently written as
\begin{eqnarray}\label{eq:cost}
\inf\limits_{\tau_0,\phiv,\tau_1,\delta} c\mathbb{E}[\tau_0+\tau_1]+P(H^{\delta}=H_0) \label{eq:mixed_P2}
\end{eqnarray}
for a properly chosen constant $c$.

\subsection{Optimal Solution} \label{sec:mixed_opt}
In this subsection, we discuss the optimal solution for the proposed mixed observation search strategy. We first introduce some important statistics used in the optimal solution.

For the scanning stage, after taking $k$ observations, we define the following posterior probabilities:
\begin{eqnarray}
&&p^{1,1}_k := P\left( \text{both } s_k^1 \text{ and } s_{k}^2 \text{ are generated from}  f_1 | \mathcal{F}_k \right), \no \\
&&p^{mix}_k := P\left( \text{one of } s_k^1 \text{ and } s_{k}^2 \text{ is generated from}  f_1| \mathcal{F}_k \right), \no \\
&&p^{0,0}_k := P\left( \text{both } s_k^1 \text{ and } s_{k}^2 \text{ are generated from}  f_0| \mathcal{F}_k \right). \no
\end{eqnarray}
As discussed in the previous subsection, at the beginning of the scanning stage we have $p^{1,1}_0 = \pi_{0}^2$, $p^{mix}_0 = 2\pi_{0}(1-\pi_{0})$ and $p^{0,0}_0 = (1-\pi_{0})^2$. Let $\mathbf{p}_{k}=[p^{1,1}_{k}, p^{mix}_{k}, p^{0,0}_{k}]$. It is easy to check that these posterior probabilities can be updated as follows:
\begin{eqnarray}
&&p^{1,1}_{k+1}=\frac{p^{1,1}_{k}g_{2}(Z_{k+1})}{g(\mathbf{p}_{k}, Z_{k+1})}\mathbf{1}_{\{\phi_k=0\}}+\frac{p^{1,1}_{0}g_{2}(Z_{k+1})}{g(\mathbf{p}_{0}, Z_{k+1})}\mathbf{1}_{\{\phi_k=1\}},\no\\
&&p^{mix}_{k+1}=\frac{p^{mix}_{k}g_{1}(Z_{k+1})}{g(\mathbf{p}_{k}, Z_{k+1})}\mathbf{1}_{\{\phi_k=0\}}+\frac{p^{mix}_{0}g_{1}(Z_{k+1})}{g(\mathbf{p}_{0}, Z_{k+1})}\mathbf{1}_{\{\phi_k=1\}},\no\\
&&p^{0,0}_{k+1}=1-p^{1,1}_{k+1}-p^{mix}_{k+1}, \no
\end{eqnarray}
where $g(\mathbf{p}_{k}, z_{k+1})$ and $g(\mathbf{p}_{0}, z_{k+1})$ are defined as
\begin{eqnarray}
&&g(\mathbf{p}_{k}, z_{k+1}) := p^{0,0}_{k}g_{0}(z_{k+1})+p^{mix}_kg_{1}(z_{k+1})+p^{1,1}_kg_{2}(z_{k+1}),\no\\
&&g(\mathbf{p}_{0}, z_{k+1}) := p^{0,0}_{0}g_{0}(z_{k+1})+p^{mix}_0g_{1}(z_{k+1})+p^{1,1}_0g_{2}(z_{k+1}).\no
\end{eqnarray}
Hence $\mathbf{p}_{k}$ satisfies the Markov property.

For the refinement stage, after taking $j$ observations, we define
\begin{eqnarray}
&&r^{1,1}_{j} := P\left(\text{both } s_{\tau_0}^1 \text{ and } s_{\tau_0}^2 \text{ are generated from}  f_1 |\mathcal{G}_j \right), \no \\
&&r^{1,0}_{j} := P\left(s_{\tau_0}^1 \text{ is generated from } f_1, s_{\tau_0}^2 \text{ is generated from} f_0 |\mathcal{G}_j \right), \no \\
&&r^{0,1}_{j} := P\left(s_{\tau_0}^1 \text{ is generated from } f_0, s_{\tau_0}^2 \text{ is generated from} f_1|\mathcal{G}_j \right), \no \\
&&r^{0,0}_{j} := P\left(\text{both } s_{\tau_0}^1 \text{ and } s_{\tau_0}^2 \text{ are generated from}  f_0|\mathcal{G}_j \right). \no
\end{eqnarray}
At the beginning of the refinement stage, we have $r^{1,1}_{0} = p_{\tau_{0}}^{1,1}$ and $r^{1,0}_{0} = r^{0,1}_{0} = p_{\tau_{0}}^{mix}/2$. It is easy to verify that these statistics can be updated using
\begin{eqnarray}
&&\hspace{-6mm} r^{1,1}_{j+1} = \frac{f_{1}(X_{j+1}) r^{1,1}_{j}}{f_{1}(X_{j+1})(r^{1,1}_{j} + r^{1,0}_{j}) + f_{0}(X_{j+1}) (r^{0,1}_{j}+r^{0,0}_{j})}, \no\\
&&\hspace{-6mm} r^{1,0}_{j+1} = \frac{f_{1}(X_{j+1}) r^{1,0}_{j}}{f_{1}(X_{j+1})(r^{1,1}_{j} + r^{1,0}_{j}) + f_{0}(X_{j+1}) (r^{0,1}_{j}+r^{0,0}_{j})}, \no\\
&&\hspace{-6mm} r^{0,1}_{j+1} = \frac{f_{0}(X_{j+1}) r^{0,1}_{j}}{f_{1}(X_{j+1})(r^{1,1}_{j} + r^{1,0}_{j}) + f_{0}(X_{j+1}) (r^{0,1}_{j}+r^{0,0}_{j})}, \no\\
&&\hspace{-6mm} r^{0,0}_{j+1} = 1 - r^{1,1}_{j+1} - r^{1,0}_{j+1} - r^{0,1}_{j+1}. \no
\end{eqnarray}
Let $\mathbf{r}_{j} = [r^{1,1}_{j}, r^{1,0}_{j}, r^{0,1}_{j}, r^{0,0}_{j}]$, hence  $\mathbf{r}_{j}$ satisfies the Markov property. For the brevity of notation, we further define the following two statistics
\begin{eqnarray}
q_{1, j} &:=& r^{1,1}_{j} + r^{1,0}_{j}\no\\
&=&P\left(s_{\tau_0}^1 \text{ is generated from } f_1 | \mathcal{G}_j \right), \label{eq:q1}\\
q_{2, j} &:=& r^{1,1}_{j} + r^{0,1}_{j}\no\\
&=&P\left(s_{\tau_0}^2 \text{ is generated from } f_1 | \mathcal{G}_j \right). \label{eq:q2}
\end{eqnarray}

Using the above defined statistics, we first have the following theorem about the optimal terminal decision rule:
\begin{thm} \label{thm:error_prob}
For any $\tau_0,\phiv$ and $\tau_1$, the optimal terminal decision rule is given as
\begin{eqnarray}
\delta^{*}=\left\{\begin{array}{ll}s_{\tau_0}^1 & \text{ if } q_{1, \tau_1} > q_{2, \tau_1} \\
s_{\tau_0}^2 &\text{ if } q_{1, \tau_1} \leq q_{2, \tau_1} \end{array}\right. ,
\end{eqnarray}
and the corresponding cost is given as
\begin{eqnarray}
\inf\limits_{\delta} P\left(H^{\delta}=H_0\right) = \mathbb{E}\left[1-\max\left\{q_{1, \tau_1}, q_{2, \tau_1}\right\}\right].
\end{eqnarray}
\end{thm}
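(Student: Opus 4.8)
The plan is to recognize this as a standard Bayes (MAP) decision problem, conditioned on the information $\mathcal{G}_{\tau_1}$ available when the refinement stage stops. The key structural fact is that, for fixed $\tau_0$, $\phiv$ and $\tau_1$, the terminal rule $\delta$ only has to choose between the two candidate sequences $s_{\tau_0}^1$ and $s_{\tau_0}^2$, and that any admissible $\delta$ is $\mathcal{G}_{\tau_1}$-measurable. Consequently the events $\{\delta = s_{\tau_0}^1\}$ and $\{\delta = s_{\tau_0}^2\}$ form a $\mathcal{G}_{\tau_1}$-measurable partition of the sample space.

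First I would condition on $\mathcal{G}_{\tau_1}$ and use the tower property to write
\begin{eqnarray}
P(H^\delta = H_0) = \mathbb{E}\left[ P(H^\delta = H_0 \,|\, \mathcal{G}_{\tau_1}) \right]. \no
\end{eqnarray}
Because the indicators $\mathbf{1}_{\{\delta = s_{\tau_0}^i\}}$ are $\mathcal{G}_{\tau_1}$-measurable, they can be taken outside the conditional expectation; using the definitions \eqref{eq:q1}--\eqref{eq:q2}, which identify $q_{1, \tau_1}$ and $q_{2, \tau_1}$ as the posterior probabilities that $s_{\tau_0}^1$, respectively $s_{\tau_0}^2$, is generated from $f_1$, this yields
\begin{eqnarray}
P(H^\delta = H_0 \,|\, \mathcal{G}_{\tau_1}) = (1 - q_{1, \tau_1}) \mathbf{1}_{\{\delta = s_{\tau_0}^1\}} + (1 - q_{2, \tau_1}) \mathbf{1}_{\{\delta = s_{\tau_0}^2\}}. \no
\end{eqnarray}
Next I would minimize this conditional cost pointwise. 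For each realization exactly one of the two indicators equals $1$, so the conditional cost equals either $1 - q_{1, \tau_1}$ or $1 - q_{2, \tau_1}$ according to the choice of $\delta$; its minimum over $\delta$ is therefore $\min\{1 - q_{1, \tau_1}, 1 - q_{2, \tau_1}\} = 1 - \max\{q_{1, \tau_1}, q_{2, \tau_1}\}$, attained by choosing $s_{\tau_0}^1$ when $q_{1, \tau_1} > q_{2, \tau_1}$ and $s_{\tau_0}^2$ otherwise. This is precisely $\delta^*$, and because the minimization is pointwise and $\delta^*$ is itself $\mathcal{G}_{\tau_1}$-measurable (hence admissible), it simultaneously minimizes the expectation. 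Taking the expectation of $1 - \max\{q_{1, \tau_1}, q_{2, \tau_1}\}$ then gives the stated optimal cost.

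The argument is essentially routine and I do not expect a serious obstacle; the only points requiring care are the measurability bookkeeping, namely verifying that every admissible $\delta$ is $\mathcal{G}_{\tau_1}$-measurable so that the indicators may be pulled outside the conditional expectation, and confirming that $q_{1, \tau_1}$ and $q_{2, \tau_1}$ are genuinely the relevant posteriors evaluated at the random time $\tau_1$, which follows from the Markov structure already established for $\mathbf{r}_j$. Once these are in place, the pointwise minimization delivers both the optimal rule $\delta^*$ and its cost at once.
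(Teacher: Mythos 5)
Your proposal is correct and follows essentially the same route as the paper: both reduce the problem to a pointwise (per-realization) minimization of the conditional misidentification cost, which equals $1-q_{1,\tau_1}$ or $1-q_{2,\tau_1}$ depending on the choice, and conclude that picking the larger posterior is optimal. The only difference is that the step you defer as ``measurability bookkeeping''---verifying that $P(H^{\delta}=H_0\,|\,\mathcal{G}_{\tau_1})=(1-q_{1,\tau_1})\mathbf{1}_{\{\delta=s_{\tau_0}^1\}}+(1-q_{2,\tau_1})\mathbf{1}_{\{\delta=s_{\tau_0}^2\}}$ with the posteriors evaluated at the \emph{random} time $\tau_1$---is exactly what the paper spends most of its proof establishing, by conditioning on $\mathcal{F}_{\tau_0}$, decomposing over the events $\{\tau_1=j\}$, and applying a change of measure and the tower property to show that the relevant conditional probabilities are $r_{\tau_1}^{i,j}$ (an optional-sampling fact for the posterior martingale, rather than a consequence of the Markov property of $\mathbf{r}_j$ as you suggest); with that step made explicit, your argument is complete.
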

\begin{proof}
Please see Appendix \ref{app:error_prob}.
\end{proof}
Hence, the optimal terminal rule is to pick the sequence with the larger posterior probability. Moreover, this theorem converts the cost of FIP into a function of $q_{1, j}$ and $q_{2, j}$, which is a function of the refinement stage statistics $\mathbf{r}_{j}$. Similar to the reduction method proposed in \cite{Kobylanski:AAP:11} (Theorem 2.3 in \cite{Kobylanski:AAP:11}), \eqref{eq:mixed_P2} can be decomposed into two concatenated single stopping time problems. In particular, we first solve the optimal stopping time $\tau_{1}$ for any given $\tau_{0}$ and $\phiv$, then with the optimal $\tau_{1}$, we solve the other stopping time $\tau_{0}$ with corresponding $\phiv$. The decomposition is stated in the following lemma:
\begin{lem} \label{lem:multi_stopping}
Let
\begin{eqnarray}
w_{0} &:=& \inf_{\tau_{0}, \phiv, \tau_{1}} \mathbb{E}\left[c(\tau_{0}+\tau_{1})+1-\max\left\{ q_{1, \tau_{1}}, q_{2, \tau_{1}} \right\}\right], \label{eq:v0}\\
v(\tau_{0}, \phiv) &:=& \inf_{\tau_{1}} \mathbb{E}\left[c\tau_{1}+1-\max\left\{ q_{1, \tau_{1}}, q_{2, \tau_{1}}\right\}| \mathcal{F}_{\tau_{0}}\right], \label{eq:w0}\\
u_{0} &:=& \inf_{\tau_{0}, \phiv} \mathbb{E}\left[c\tau_{0}+v(\tau_{0}, \phiv)\right]. \label{eq:u0}
\end{eqnarray}
Then
$$w_{0} = u_{0}.$$
\end{lem}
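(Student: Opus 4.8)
The plan is to prove the two inequalities $w_{0} \ge u_{0}$ and $w_{0} \le u_{0}$ separately, using the tower property of conditional expectation to split the running cost at the scanning-stage stopping time $\tau_{0}$. Since $\tau_{0}$ is $\mathcal{F}_{\tau_{0}}$-measurable and $1-\max\{q_{1,\tau_{1}},q_{2,\tau_{1}}\}\in[0,1]$, conditioning on $\mathcal{F}_{\tau_{0}}$ gives, for every admissible triple $(\tau_{0},\phiv,\tau_{1})$,
\begin{eqnarray}
\mathbb{E}\left[c(\tau_{0}+\tau_{1})+1-\max\{q_{1,\tau_{1}},q_{2,\tau_{1}}\}\right]=\mathbb{E}\left[c\tau_{0}+\mathbb{E}\left[c\tau_{1}+1-\max\{q_{1,\tau_{1}},q_{2,\tau_{1}}\}\,\big|\,\mathcal{F}_{\tau_{0}}\right]\right]. \no
\end{eqnarray}
Throughout I would restrict attention to strategies of finite expected cost, which does not change any of the infima but secures the integrability needed for the tower property and the later limit passage.

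First I would establish $w_{0}\ge u_{0}$, which is immediate. The inner conditional expectation above is, by the definition of $v(\tau_{0},\phiv)$ in \eqref{eq:w0} as an (essential) infimum over refinement stopping times, bounded below by $v(\tau_{0},\phiv)$ almost surely. Hence the displayed cost is at least $\mathbb{E}[c\tau_{0}+v(\tau_{0},\phiv)]\ge u_{0}$, and taking the infimum over all $(\tau_{0},\phiv,\tau_{1})$ yields $w_{0}\ge u_{0}$.

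The reverse inequality $w_{0}\le u_{0}$ is the hard part, and I expect the main obstacle to lie here. The difficulty is that $v(\tau_{0},\phiv)$ is an essential infimum of a family of random variables, so I must exhibit, for each fixed $(\tau_{0},\phiv)$ and each $\epsilon>0$, a single admissible $\mathcal{G}$-stopping time whose conditional cost lies within $\epsilon$ of $v(\tau_{0},\phiv)$ on average. The key structural fact I would verify is that the family of conditional costs $\{\mathbb{E}[c\tau_{1}+1-\max\{q_{1,\tau_{1}},q_{2,\tau_{1}}\}\mid\mathcal{F}_{\tau_{0}}]\}$, indexed by $\mathcal{G}$-stopping times $\tau_{1}$, is directed downward. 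Indeed, given two candidates $\tau_{1}^{(1)},\tau_{1}^{(2)}$, the event $A$ on which the first conditional cost does not exceed the second is $\mathcal{F}_{\tau_{0}}$-measurable; since $\mathcal{F}_{\tau_{0}}=\mathcal{G}_{0}$, the spliced time $\tau_{1}^{(1)}\mathbf{1}_{A}+\tau_{1}^{(2)}\mathbf{1}_{A^{c}}$ is again a $\mathcal{G}$-stopping time, and its conditional cost equals the pointwise minimum of the two. By the standard lattice (Neveu-type) argument, this directedness guarantees a sequence $\tau_{1}^{(n)}$ whose conditional costs decrease almost surely to $v(\tau_{0},\phiv)$.

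Finally I would pass to the limit. Because all integrands are nonnegative, monotone convergence gives $\mathbb{E}[c\tau_{1}^{(n)}+1-\max\{q_{1,\tau_{1}^{(n)}},q_{2,\tau_{1}^{(n)}}\}]\to\mathbb{E}[v(\tau_{0},\phiv)]$, so for $n$ large the triple $(\tau_{0},\phiv,\tau_{1}^{(n)})$ witnesses $w_{0}\le\mathbb{E}[c\tau_{0}+v(\tau_{0},\phiv)]+\epsilon$. Taking the infimum over $(\tau_{0},\phiv)$ and then letting $\epsilon\downarrow0$ produces $w_{0}\le u_{0}$. Combining the two inequalities gives $w_{0}=u_{0}$. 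The genuinely delicate point, as noted, is the directedness/measurable-selection step that allows a pointwise essential infimum to be realized, up to $\epsilon$, by an actual admissible refinement stopping time; everything else is the tower property together with monotone convergence.
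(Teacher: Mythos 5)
Your proof is correct, but for the hard direction $w_{0}\le u_{0}$ it takes a genuinely different route from the paper. The paper (following Theorem 2.3 of Kobylanski et al.) works with the whole value \emph{processes} $w_{k}:=\essinf_{\{\tau_{0}\ge k\},\phiv,\tau_{1}}\mathbb{E}[h\mid\mathcal{F}_{k}]$ and $u_{k}:=\essinf_{\{\tau_{0}\ge k\},\phiv}\mathbb{E}[v(\tau_{0},\phiv)\mid\mathcal{F}_{k}]$: it first proves that $\{w_{k}\}$ is a submartingale, then observes $w_{k}\le v(k,\phiv)$, and concludes $w_{k}\le u_{k}$ by invoking the Snell-envelope characterization of $u_{k}$ as the \emph{largest} submartingale dominated by the reward $v(k,\phiv)$. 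You instead argue directly at $k=0$ with an $\epsilon$-optimal selection: you show the family of conditional refinement costs is directed downward by splicing two candidate $\tau_{1}$'s on an $\mathcal{F}_{\tau_{0}}=\mathcal{G}_{0}$-measurable event, extract a decreasing sequence $\tau_{1}^{(n)}$ realizing $v(\tau_{0},\phiv)$ almost surely, and pass to expectations by monotone convergence (your restriction to finite-expected-cost strategies correctly secures the integrability needed for the decreasing form of that theorem). Your approach is more elementary and self-contained, and in fact makes explicit the directedness/lattice step that the paper's submartingale lemma uses implicitly when it asserts the existence of a sequence with $\mathbb{E}[h(\tau_{0}^{n},\phiv^{n},\tau_{1}^{n})\mid\mathcal{F}_{k}]\downarrow w_{k}$; what it gives up is the stronger conclusion $w_{k}=u_{k}$ for all $k$, which the paper's process-level argument delivers and which dovetails with the dynamic-programming recursions used afterward. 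The easy direction $w_{0}\ge u_{0}$ is the same tower-property argument in both treatments.
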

\begin{proof}
Please see Appendix \ref{app:multi_stopping}.
\end{proof}

The above lemma converts the original problem $w_{0}$ into two single optimal stopping problems $v(\tau_{0}, \phiv)$ and $u_{0}$, which correspond to the cost functions in the refinement stage and the scanning stage, respectively. We can tackle these two problems one by one. The optimal stopping rule for the refinement stage is given as:
\begin{thm} \label{thm:refine}
For any given $\tau_0$ and $\phiv$,
\begin{eqnarray}
v(\tau_{0}, \phiv) &=& V(r^{1,1}_{0}, r^{1,0}_{0}, r^{0,1}_{0}), \no
\end{eqnarray}
in which $V(\cdot)$ is a function that satisfies the following recursion:
\begin{eqnarray}
V(\mathbf{r}_{j}) &=& \min \left\{ 1 - \max\left\{q_{1, j}, q_{2, j} \right\}, c + \mathbb{E}\left[ V(\mathbf{r}_{j+1}) | \mathbf{r}_{j} \right] \right\}. \no
\end{eqnarray}
In addition, the optimal stopping time $\tau_{1}$ for \eqref{eq:w0} is given as
\begin{eqnarray}\
\tau_{1}^{*} = \inf\left\{ j \geq 0: 1 - \max\left\{q_{1, j}, q_{2, j} \right\} \leq c+ \mathbb{E}\left[ V(\mathbf{r}_{j+1}) | \mathbf{r}_{j} \right] \right\}. \no
\end{eqnarray}
\end{thm}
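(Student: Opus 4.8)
The plan is to recognize \eqref{eq:w0} as a \emph{single} optimal stopping problem for the Markov chain $\{\mathbf{r}_j\}_{j\ge 0}$ and then characterize its value by the dynamic programming (Bellman) equation. Writing $G(\mathbf{r}):=1-\max\{q_1,q_2\}$ with $q_1=r^{1,1}+r^{1,0}$ and $q_2=r^{1,1}+r^{0,1}$ as in \eqref{eq:q1}--\eqref{eq:q2}, the cost incurred by stopping the refinement stage at time $\tau_1$ is $c\tau_1+G(\mathbf{r}_{\tau_1})$: the terminal part $G$ is a bounded (it takes values in $[0,1]$) continuous function of the current state, and the running cost is the strictly positive constant $c$. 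Because $\{\mathbf{r}_j\}$ is Markov (established in the paragraph preceding \eqref{eq:q1}) and both cost terms depend on the past only through $\mathbf{r}_j$, the conditional infimum in \eqref{eq:w0} given $\mathcal{F}_{\tau_0}$ depends on $\mathcal{F}_{\tau_0}$ only through the initial state $\mathbf{r}_0=(p^{1,1}_{\tau_0},\,p^{mix}_{\tau_0}/2,\,p^{mix}_{\tau_0}/2)$ (its last coordinate being determined by the first three). This is precisely what lets me write $v(\tau_0,\phiv)=V(r^{1,1}_0,r^{1,0}_0,r^{0,1}_0)$ for a deterministic function $V$ of the state.

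To construct $V$ and obtain the recursion, I would introduce the finite-horizon values by backward induction, $V_0:=G$ and $V_n(\mathbf{r}):=\min\{G(\mathbf{r}),\,c+\mathbb{E}[V_{n-1}(\mathbf{r}_{1})\mid \mathbf{r}_0=\mathbf{r}]\}$, which are exactly the optimal costs when stopping is forced by horizon $n$. Enlarging the horizon only enlarges the admissible set of stopping times, so $V_n$ is nonincreasing in $n$ and bounded below by $0$, hence converges pointwise to a limit $V$. Since $G$ is bounded, dominated convergence lets me pass the limit through the one-step operator $\mathbb{E}[\,\cdot\mid\mathbf{r}_0=\mathbf{r}]$, so $V$ inherits the fixed-point identity $V(\mathbf{r})=\min\{G(\mathbf{r}),\,c+\mathbb{E}[V(\mathbf{r}_{1})\mid\mathbf{r}_0=\mathbf{r}]\}$, which is the stated recursion. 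The $V_n$ are Borel, so $V=\lim V_n$ is Borel and the stopping region used below is measurable.

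It then remains to identify $V(\mathbf{r}_0)$ with the infinite-horizon value and to prove $\tau_1^*$ optimal, which I would do by a martingale argument. From $V\le c+\mathbb{E}[V(\mathbf{r}_{1})\mid\cdot]$ the process $M_j:=V(\mathbf{r}_j)+cj$ is a submartingale, so optional stopping at $\tau_1\wedge n$ followed by $n\to\infty$ (using $0\le V\le 1$ for bounded convergence, and treating $\mathbb{E}[\tau_1]=\infty$ as the trivial case) yields $V(\mathbf{r}_0)\le \mathbb{E}[c\tau_1+G(\mathbf{r}_{\tau_1})]$ for every admissible $\tau_1$, i.e.\ $V(\mathbf{r}_0)\le v(\tau_0,\phiv)$. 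Conversely, on $\{j<\tau_1^*\}$ the recursion holds with equality $V(\mathbf{r}_j)=c+\mathbb{E}[V(\mathbf{r}_{j+1})\mid\cdot]$, so $M_{j\wedge\tau_1^*}$ is a genuine martingale; optional stopping gives $c\,\mathbb{E}[\tau_1^*\wedge n]\le V(\mathbf{r}_0)\le 1$ for all $n$, whence $\mathbb{E}[\tau_1^*]\le V(\mathbf{r}_0)/c<\infty$ and in particular $\tau_1^*<\infty$ a.s., and letting $n\to\infty$ gives $V(\mathbf{r}_0)=\mathbb{E}[V(\mathbf{r}_{\tau_1^*})+c\tau_1^*]=\mathbb{E}[G(\mathbf{r}_{\tau_1^*})+c\tau_1^*]$, since $V=G$ on the stopping set. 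Combining the two bounds yields $v(\tau_0,\phiv)=V(\mathbf{r}_0)$ with $\tau_1^*$ attaining the infimum.

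The main obstacle I anticipate is the passage from the finite-horizon to the infinite-horizon problem, together with the integrability needed to apply optional stopping at the a priori unbounded times $\tau_1$ and $\tau_1^*$. The saving feature is the strictly positive running cost $c$: it forces $V\le G\le 1$ and, through the martingale identity, $\mathbb{E}[\tau_1^*]\le 1/c<\infty$, so $\tau_1^*$ is a.s.\ finite and the stopped (sub)martingale identities extend to the limit by monotone and bounded convergence. Alternatively, once the problem is cast as a single Markov optimal stopping problem with bounded terminal cost and positive running cost, the recursion and the optimality of the first-entry time into the stopping region follow directly from the standard optimal stopping machinery used in \cite{Kobylanski:AAP:11}.
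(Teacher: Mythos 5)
Your proposal is correct and follows essentially the same route as the paper's own proof in Appendix~\ref{app:refine}: reduce to a single Markov optimal stopping problem in the state $\mathbf{r}_j$, define the finite-horizon value functions by backward induction, use monotonicity in the horizon together with boundedness to pass to the infinite-horizon limit and obtain the fixed-point recursion for $V$. The only difference is that where the paper simply invokes ``the optimal stopping theory'' to conclude that the first-entry time $\tau_1^*$ is optimal and that $V(\mathbf{r}_0)$ equals the infinite-horizon value, you carry out the standard submartingale/optional-stopping verification explicitly (including the useful observation $\mathbb{E}[\tau_1^*]\leq V(\mathbf{r}_0)/c<\infty$ coming from the positive running cost), which is a legitimate and indeed more self-contained way to close that step.
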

\begin{proof}
Please see Appendix \ref{app:refine}.
\end{proof}
We note that the form of $V(\mathbf{r}_{j})$ can be obtained via an iterative procedure offline~\cite{Poor:Book:08}. This theorem indicates that the optimal strategies in the refinement stage are related to $\tau_0, \phiv$ only through $p^{1,1}_{\tau_{0}},p^{mix}_{\tau_{0}}$, since
\begin{eqnarray}
v(\tau_{0}, \phiv) &=& V(r^{1,1}_{0}, r^{1,0}_{0}, r^{0,1}_{0}) \no\\
&=& V(p^{1,1}_{\tau_{0}}, p^{mix}_{\tau_{0}}/2, p^{mix}_{\tau_{0}}/2) =: v(p^{1,1}_{\tau_{0}}, p^{mix}_{\tau_{0}}). \no
\end{eqnarray}
Hence, we denote $v(\tau_{0}, \phiv)$ as $v\left(p^{1,1}_{\tau_{0}},p^{mix}_{\tau_{0}}\right)$ in the following discussion.
$v\left(p^{1,1}_{\tau_{0}},p^{mix}_{\tau_{0}}\right)$ is defined over the domain
\begin{eqnarray}
\mathcal{P} = \left\{ \left(p^{1,1},p^{mix}\right): 0 \leq p^{1,1} \leq 1, 0 \leq p^{mix} \leq 1, 0 \leq p^{1,1}+p^{mix} \leq 1 \right\}.\no
\end{eqnarray}

\begin{lem} \label{lem:refine}
$v\left(p^{1,1},p^{mix}\right)$ is a concave function over $\mathcal{P}$
with $v(1,0) = 0$ and $v(0,0) = 1$.
\end{lem}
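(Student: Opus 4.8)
The plan is to exhibit $v$ as a pointwise infimum of affine functions of the initial refinement belief, from which concavity is immediate, and then to read off the two boundary values directly.

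First I would return to the decision-theoretic form of $v$. Combining Theorem~\ref{thm:error_prob} and Theorem~\ref{thm:refine}, for an initial refinement belief $\mathbf{r}_{0} = (r^{1,1}_{0}, r^{1,0}_{0}, r^{0,1}_{0}, r^{0,0}_{0})$ we may write
$$ v(\mathbf{r}_{0}) = \inf_{\tau_{1}, \delta} \left\{ c\,\mathbb{E}_{\mathbf{r}_{0}}[\tau_{1}] + P_{\mathbf{r}_{0}}(H^{\delta} = H_{0}) \right\}, $$
where the infimum ranges over all stopping times $\tau_{1}$ and terminal rules $\delta$ adapted to the refinement filtration $\{\mathcal{G}_{j}\}$: by Theorem~\ref{thm:error_prob} the inner minimization over $\delta$ returns exactly $\mathbb{E}_{\mathbf{r}_{0}}[1 - \max\{q_{1, \tau_{1}}, q_{2, \tau_{1}}\}]$, which is the objective of \eqref{eq:w0}.

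The key observation is that each admissible pair $(\tau_{1}, \delta)$ is a functional of the raw refinement observations $X_{1}, X_{2}, \ldots$ alone and thus does not, as an object, depend on $\mathbf{r}_{0}$. Let $\theta_{1}, \theta_{2} \in \{0,1\}$ indicate whether $s_{\tau_0}^{1}, s_{\tau_0}^{2}$ are generated from $f_{1}$; their joint prior probabilities are precisely the entries of $\mathbf{r}_{0}$, and the conditional law of $(X_{1}, X_{2}, \ldots)$ given $(\theta_{1}, \theta_{2})$ is fixed (it depends only on whether $s_{\tau_0}^{1}$ is drawn from $f_{0}$ or $f_{1}$). Hence, conditioning on the four hypotheses,
$$ \mathbb{E}_{\mathbf{r}_{0}}[\tau_{1}] = \sum_{\theta_{1},\theta_{2}} r_{0}^{\theta_{1},\theta_{2}}\,\mathbb{E}[\tau_{1} \mid \theta_{1},\theta_{2}], \qquad P_{\mathbf{r}_{0}}(H^{\delta} = H_{0}) = \sum_{\theta_{1},\theta_{2}} r_{0}^{\theta_{1},\theta_{2}}\,P(H^{\delta} = H_{0} \mid \theta_{1},\theta_{2}), $$
where every conditional quantity on the right is a constant not depending on $\mathbf{r}_{0}$. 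Thus, for each fixed $(\tau_{1},\delta)$, the bracketed objective is an \emph{affine} function of $\mathbf{r}_{0}$, so $v$ is an infimum of affine functions and is therefore concave on the belief simplex. Since $\mathbf{r}_{0} = (p^{1,1}, p^{mix}/2, p^{mix}/2, 1-p^{1,1}-p^{mix})$ is an affine map of $(p^{1,1}, p^{mix})$ and concavity is preserved under precomposition with an affine map, $v(p^{1,1},p^{mix})$ is concave on $\mathcal{P}$. For the boundary, at $(1,0)$ we have $\mathbf{r}_{0}=(1,0,0,0)$, so the stopping cost $1-\max\{q_{1,0},q_{2,0}\}=0$ is attainable immediately while the continuation cost is at least $c>0$; hence $\tau_1=0$ is optimal and $v(1,0)=0$. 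At $(0,0)$ we have $\mathbf{r}_{0}=(0,0,0,1)$; the update equations keep the belief at $(0,0,0,1)$ for all $j$, so $q_{1,j}=q_{2,j}=0$, the stopping cost is identically $1$, and continuing only adds the nonnegative term $c\tau_{1}$, giving $v(0,0)=1$.

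The main obstacle is the third-paragraph step: making rigorous that the optimization family $(\tau_{1},\delta)$ can be indexed by the observation process independently of the prior, so that the hypothesis-wise affine decomposition is legitimate; once this decoupling is in place the concavity is a one-line consequence of the infimum-of-affine principle. An alternative, more self-contained route is to argue by value iteration on the recursion for $V$ in Theorem~\ref{thm:refine}: the stopping cost $1-\max\{q_{1},q_{2}\}$ is concave (a constant minus a maximum of affine maps), the minimum of two concave functions is concave, and the operator $\mathbf{r}\mapsto c+\mathbb{E}[V(\mathbf{r}_{j+1})\mid\mathbf{r}]$ preserves concavity via the standard fact that the positively homogeneous extension of a concave function on the simplex is concave; concavity then passes to the limit of the iterates. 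In this route the homogeneous-extension lemma is the technical crux.
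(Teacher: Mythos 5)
Your argument is correct, but it reaches concavity by a genuinely different route than the paper. The paper's proof (Appendix~C) works through finite-horizon dynamic programming: it defines $V_j^T$ by backward recursion, shows by induction that each $V_j^T$ is concave in $(r_j^{1,1},r_j^{1,0},r_j^{0,1})$ --- the terminal cost $1-\max\{q_{1},q_{2}\}$ being a minimum of affine maps, and the one-step Bayesian update operator $\mathbf{r}\mapsto\mathbb{E}[V_{j+1}^T(\mathbf{r}_{j+1})\mid\mathbf{r}]$ preserving concavity via an explicit mixing-coefficient computation (the $\mu$-trick, showing that the posterior of a $\lambda$-mixture of priors is a $\mu$-mixture of posteriors with $\mu$ depending on the observation) --- and then passes to the limit $T\to\infty$ by monotone convergence. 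Your primary route instead fixes an arbitrary policy $(\tau_1,\delta)$ measurable with respect to the raw observation filtration, decomposes the Bayes cost by conditioning on the four hypotheses so that it becomes affine in $\mathbf{r}_0$, and concludes that $v$ is an infimum of affine functions, hence concave; precomposition with the affine embedding $(p^{1,1},p^{mix})\mapsto(p^{1,1},p^{mix}/2,p^{mix}/2,1-p^{1,1}-p^{mix})$ then gives concavity on $\mathcal{P}$. The step you flag as the obstacle --- that the policy class can be indexed prior-independently --- is sound: stopping times and terminal rules adapted to $\sigma(X_1,\ldots,X_j)$ form a fixed family, each member yields an affine cost in $\mathbf{r}_0$ because the conditional law of the observations given $(\theta_1,\theta_2)$ does not involve the prior, and for every fixed $\mathbf{r}_0$ the optimal posterior-based policy is realizable inside this family since the posteriors are deterministic functions of the observations once $\mathbf{r}_0$ is fixed. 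Your approach is shorter and avoids both the truncation argument and the filter-concavity lemma, at the price of not producing the recursion for $V$ and the form of $\tau_1^*$ (which Theorem~3.3 needs and which the paper's induction delivers as a byproduct). Your treatment of the boundary values $v(1,0)=0$ and $v(0,0)=1$ is correct and is actually more explicit than the paper's appendix, which establishes concavity but leaves the endpoint evaluations to the obvious observation that the nonnegative cost is zero, respectively identically one, at the degenerate beliefs. Your sketched alternative route is essentially the paper's proof, except that the paper verifies concavity preservation directly rather than invoking the positively-homogeneous-extension lemma; the two are equivalent.
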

\begin{proof}
Please see Appendix \ref{app:refine}.
\end{proof}

As the result, \eqref{eq:u0} can be written as
\begin{eqnarray}
u_{0} = \inf\limits_{\tau_0,\phiv}\mathbb{E}\left[c\tau_0 + v\left(p^{1,1}_{\tau_{0}},p^{mix}_{\tau_{0}}\right)\right], \no
\end{eqnarray}
and its optimal solution is given as
\begin{thm} \label{thm:scan1}
The optimal stopping rule for the scanning stage is given as
\begin{eqnarray}
\tau_{0}^{*} = \inf\left\{ k\geq 0: v\left(p_{k}^{1,1}, p_{k}^{mix}\right) = U\left(p_{k}^{1, 1}, p_{k}^{mix}\right) \right\}
\end{eqnarray}
and the optimal switching rule is given as
\begin{eqnarray}
\phi_k^{*} = \left\{\begin{array}{ll} 0 & \text{ if } \Phi_{c}\left(p_{k}^{1,1}, p_{k}^{mix}\right) \leq \Phi_{s}\\
1 &\text{ otherwise } \end{array}\right.,
\end{eqnarray}
in which, $U(\cdot)$ is a function that satisfies the following operator
\begin{eqnarray}
U\left(p_{k}^{1, 1}, p_{k}^{mix}\right) = \min \left\{ v\left(p^{1,1}, p_{k}^{mix}\right),  c + \min \left\{ \Phi_{c}\left(p_{k}^{1, 1}, p_{k}^{mix}\right), \Phi_{s} \right\} \right\} \no
\end{eqnarray}
with
\begin{eqnarray}
&&\hspace{-7mm} \Phi_{c}\left(p_{k}^{1, 1}, p_{k}^{mix}\right) = \mathbb{E}\left[U\left(p_{k+1}^{1, 1}, p_{k+1}^{mix}\right)\Big|p_{k}^{1, 1}, p_{k}^{mix}, \phi_{k}=0 \right], \no\\
&&\hspace{-7mm} \Phi_{s} = \mathbb{E}\left[U\left(p_{k+1}^{1, 1}, p_{k+1}^{mix}\right)\Big|p_{k}^{1, 1}, p_{k}^{mix}, \phi_{k}=1 \right]. \no
\end{eqnarray}
\end{thm}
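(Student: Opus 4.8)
The plan is to treat $u_{0}$ in \eqref{eq:u0} as a standard optimal stopping problem for the Markov posterior process $\mathbf{p}_{k}$ (equivalently the pair $(p^{1,1}_{k}, p^{mix}_{k})$) augmented with the binary switching control $\phi_{k}$, and to characterize its value function $U$ through the dynamic programming (Bellman) equation. By Lemma \ref{lem:multi_stopping} the original problem has already been reduced to minimizing $\mathbb{E}[c\tau_{0} + v(p^{1,1}_{\tau_{0}}, p^{mix}_{\tau_{0}})]$, and by Lemma \ref{lem:refine} the terminal cost $v$ is bounded, taking values in $[0,1]$. Since the per-sample cost $c>0$ is strictly positive and the terminal cost is bounded, the problem is well posed and has a finite value from every starting state. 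I would define $U(\mathbf{p})$ as the optimal value of the scanning problem initialized at posterior state $\mathbf{p}$, so that $u_{0} = U(\mathbf{p}_{0})$.

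First I would establish the Bellman equation by the dynamic programming principle. At any state $\mathbf{p}_{k}$ the observer chooses between stopping the scanning stage immediately, incurring the refinement cost $v(p^{1,1}_{k}, p^{mix}_{k})$, and continuing, which costs $c$ plus the optimal future value after observing $Z_{k+1}$. Within the continuation option the observer further selects $\phi_{k}$: keeping the same pair ($\phi_{k}=0$) yields $\Phi_{c}(p^{1,1}_{k}, p^{mix}_{k}) = \mathbb{E}[U(p^{1,1}_{k+1}, p^{mix}_{k+1}) \mid p^{1,1}_{k}, p^{mix}_{k}, \phi_{k}=0]$, whereas switching ($\phi_{k}=1$) resets the posterior to the fixed prior $\mathbf{p}_{0}$, so the associated value $\Phi_{s} = \mathbb{E}[U(p^{1,1}_{k+1}, p^{mix}_{k+1}) \mid \phi_{k}=1]$ is a constant independent of the current state. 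Combining these alternatives gives exactly the operator $U = \min\{v, c + \min\{\Phi_{c}, \Phi_{s}\}\}$ asserted in the theorem.

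Next I would prove that this operator has a unique bounded solution coinciding with the value function, by value iteration. Set $U^{(0)} = v$ and iterate $U^{(n+1)} = \min\{v, c + \min\{\Phi_{c}^{(n)}, \Phi_{s}^{(n)}\}\}$, where $\Phi_{c}^{(n)}$ and $\Phi_{s}^{(n)}$ are formed from $U^{(n)}$ through the continuation and switching transitions. The iterates are monotone nonincreasing and bounded below by $0$, hence converge to a limit $U$ solving the Bellman equation; since $U^{(n)}$ is the value of the problem truncated to at most $n$ further samples, the limit is the value function. A verification argument then shows the greedy policy is optimal: one stops the first time $v$ attains $U$, i.e. $\tau_{0}^{*} = \inf\{k: v(p^{1,1}_{k}, p^{mix}_{k}) = U(p^{1,1}_{k}, p^{mix}_{k})\}$, and while continuing sets $\phi_{k}^{*}=0$ exactly when $\Phi_{c} \leq \Phi_{s}$. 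The verification rests on showing that the process $ck + U(\mathbf{p}_{k})$ is a submartingale under any admissible continuation rule and a martingale under the greedy rule, which yields the lower bound $U(\mathbf{p}_{0}) \leq \mathbb{E}[c\tau_{0} + v(p^{1,1}_{\tau_{0}}, p^{mix}_{\tau_{0}})]$ with equality attained by the greedy policy.

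The main obstacle will be the regularity needed to guarantee that $\tau_{0}^{*}$ is almost surely finite and integrable, and that the infimum in \eqref{eq:u0} is genuinely attained rather than merely approximated. The difficulty is the switching/regeneration feature: because $\phi_{k}=1$ resets the posterior to $\mathbf{p}_{0}$, the state process is not a simple submartingale drifting monotonically into the stopping region, so finiteness of $\tau_{0}^{*}$ does not follow automatically from a drift argument. I expect to dispose of this either by appealing directly to the optimal multiple-stopping-time framework of \cite{Kobylanski:AAP:11}, whose reduction to concatenated single-stopping problems (already invoked in Lemma \ref{lem:multi_stopping}) supplies the existence and measurability of the optimal rules, or by a direct contraction argument exploiting $c>0$, whereby the strictly positive sampling cost discounts the future enough that value iteration converges geometrically and the optimal stopping time is integrable. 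The concavity and boundary values of $v$ from Lemma \ref{lem:refine}, namely $v(1,0)=0$ and $v(0,0)=1$, would be used to confirm that the stopping and switching regions are nonempty and nondegenerate.
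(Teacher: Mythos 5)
Your proposal is correct and follows essentially the same route as the paper's Appendix D: your value iteration started from $U^{(0)}=v$ is exactly the paper's finite-horizon backward recursion $U^{T}_{k}$ (with the truncation level playing the role of the iteration index), both arguments use monotone nonincreasing convergence of the bounded iterates to pass to the infinite-horizon Bellman equation, and both identify $\tau_{0}^{*}$ as the first time $v$ meets $U$ and $\phi_{k}^{*}$ by comparing $\Phi_{c}$ with the constant $\Phi_{s}$. The only step the paper carries out explicitly that you fold into the phrase ``Markov posterior process'' is the inductive verification that each $U^{T}_{k}(\mathcal{F}_{k})$ is in fact a function of $\left(p_{k}^{1,1},p_{k}^{mix}\right)$ alone, and your concern about finiteness of $\tau_{0}^{*}$ is handled simply by noting that $c>0$ and $v\leq 1$ force $\mathbb{E}[\tau_{0}]\leq 1/c$ for any policy competitive with stopping immediately.
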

\begin{proof}
Please see Appendix \ref{app:scan}.
\end{proof}

\begin{rmk}
One can show that $\Phi_{s} = \Phi_{c}\left(p_{0}^{1, 1}, p_{0}^{mix}\right)$, hence it is a constant between 0 and 1. For this reason, we denote it as $\Phi_{s}$ rather than $\Phi_{s}\left(p_{k}^{1, 1}, p_{k}^{mix}\right)$ in the above theorem.
\end{rmk}

Same as the refinement stage, all the functions involved in the above theorem can be computed offline. The optimal solutions of $\tau_{0}^{*}$ and $\phi_{k}^{*}$ can be further simplified using the following lemma.
\begin{lem} \label{lem:scan}
1) $\Phi_{c}\left(p^{1,1}, p^{mix}\right)$ is a concave function over $\mathcal{P}$, and $0 \leq \Phi_{c}\left(p^{1, 1}, p^{mix}\right) \leq 1$. \\ 
2) $U\left(p^{1, 1}, p^{mix}\right)$ is a concave function over domain $\mathcal{P}$, and $0 \leq U\left(p^{1, 1}, p^{mix}\right) \leq 1$.  
\end{lem}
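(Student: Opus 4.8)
The plan is to establish both claims simultaneously by \emph{successive approximation} (monotone value iteration) on the Bellman operator appearing in Theorem~\ref{thm:scan1}, showing that concavity and the bound in $[0,1]$ are preserved at each iteration and therefore pass to the limit. Concretely, I would set $U^{(0)} := v$ and define, for $n \geq 0$,
\begin{eqnarray}
\Phi_c^{(n)}\left(p^{1,1}, p^{mix}\right) &:=& \mathbb{E}\left[U^{(n)}\left(p_{k+1}^{1,1}, p_{k+1}^{mix}\right)\Big|p^{1,1}, p^{mix}, \phi_k = 0\right], \no\\
U^{(n+1)}\left(p^{1,1}, p^{mix}\right) &:=& \min\left\{ v\left(p^{1,1}, p^{mix}\right),\ c + \min\left\{\Phi_c^{(n)}\left(p^{1,1}, p^{mix}\right),\ \Phi_s^{(n)}\right\} \right\}, \no
\end{eqnarray}
where $\Phi_s^{(n)} = \Phi_c^{(n)}\left(p_0^{1,1}, p_0^{mix}\right)$ is a \emph{constant} in $\mathbf{p}$, because the reset update used when $\phi_k = 1$ does not depend on the current state. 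Since the per-stage cost $c$ penalizes continuation, the iterates decrease, $U^{(0)} \geq U^{(1)} \geq \cdots \geq 0$, and standard optimal-stopping arguments give $U^{(n)} \to U$ and $\Phi_c^{(n)} \to \Phi_c$ pointwise; as pointwise limits preserve both concavity and the bound $0 \leq \cdot \leq 1$, it suffices to show by induction that each $U^{(n)}$ and $\Phi_c^{(n)}$ is concave on $\mathcal{P}$ and takes values in $[0,1]$. The base case holds by Lemma~\ref{lem:refine}: $v$ is concave, and stopping at $j=0$ shows $v \leq 1 - \max\{q_{1,0}, q_{2,0}\} \leq 1$ while $v \geq 0$, so $0 \leq v \leq 1$.

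The main step, and the place where the real work lies, is showing that the one-step averaging operator $\Phi_c^{(n)}$ inherits concavity from $U^{(n)}$. Writing out the posterior update for $\phi_k = 0$ gives
\begin{eqnarray}
\Phi_c^{(n)}(\mathbf{p}) = \int U^{(n)}\left(\frac{p^{1,1} g_2(z)}{g(\mathbf{p}, z)},\ \frac{p^{mix} g_1(z)}{g(\mathbf{p}, z)}\right) g(\mathbf{p}, z)\, dz. \no
\end{eqnarray}
Fixing $z$, the three quantities $a_1(\mathbf{p}) = p^{1,1} g_2(z)$, $a_2(\mathbf{p}) = p^{mix} g_1(z)$ and $t(\mathbf{p}) = g(\mathbf{p}, z)$ are all affine in $\mathbf{p} = (p^{1,1}, p^{mix})$ with $t(\mathbf{p}) > 0$, and the integrand equals $t(\mathbf{p})\, U^{(n)}\!\left(a_1(\mathbf{p})/t(\mathbf{p}),\, a_2(\mathbf{p})/t(\mathbf{p})\right)$, i.e. the \emph{perspective} of the concave function $U^{(n)}$ evaluated at an affine image of $\mathbf{p}$. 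Since the perspective of a concave function is jointly concave on the cone $\{t > 0\}$ and precomposition with an affine map preserves concavity, the integrand is concave in $\mathbf{p}$ for each fixed $z$; integrating against the nonnegative measure $dz$ then yields concavity of $\Phi_c^{(n)}$. Note the posterior always lies in $\mathcal{P}$ (its coordinates are nonnegative and sum to one), so $U^{(n)}$ is only evaluated where it is defined. Boundedness is immediate: $\Phi_c^{(n)}$ is an average of the $[0,1]$-valued $U^{(n)}$, so $0 \leq \Phi_c^{(n)} \leq 1$, and hence $\Phi_s^{(n)} \in [0,1]$ as well.

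It then remains to close the induction for $U^{(n+1)}$, which is routine. Using that the pointwise minimum of concave functions is concave (the hypograph of a minimum is the intersection of the individual hypographs), $\min\{\Phi_c^{(n)}, \Phi_s^{(n)}\}$ is concave as the minimum of a concave function and a constant, so $c + \min\{\Phi_c^{(n)}, \Phi_s^{(n)}\}$ is concave, and finally $U^{(n+1)} = \min\{v,\ c + \min\{\Phi_c^{(n)}, \Phi_s^{(n)}\}\}$ is concave as the minimum of two concave functions. For the bound, $0 \leq U^{(n+1)} \leq v \leq 1$ since both arguments of the outer minimum are nonnegative and the first is $v$. Passing to the limit proves that $\Phi_c$ and $U$ are concave on $\mathcal{P}$ with values in $[0,1]$, establishing both parts of the lemma. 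The only delicate point is the concavity propagation through $\Phi_c^{(n)}$ via the perspective-function identity; the convergence of the value iteration and the elementary minimum/bound manipulations are standard.
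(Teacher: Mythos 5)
Your proof is correct, and its skeleton coincides with the paper's: both run a monotone iteration of the Bellman operator, show that concavity and the $[0,1]$ bound are preserved at each step, and pass to the pointwise limit (your forward value iteration is, by time-homogeneity of $(p_k^{1,1},p_k^{mix})$, just a reindexing of the paper's finite-horizon backward recursion, $U^{(n)}=U^{T}_{T-n}$, so your monotonicity $U^{(n)}\geq U^{(n+1)}$ is the paper's $U^{T}_{k}\geq U^{T+1}_{k}$). Where you genuinely differ is the justification of the key step, concavity of the one-step averaging operator. The paper does the classical two-point computation: for $\mathbf{p}_3=\lambda\mathbf{p}_1+(1-\lambda)\mathbf{p}_2$ it introduces the $z$-dependent weight $\mu=\lambda f_c(z|\mathbf{p}_1)/\left[\lambda f_c(z|\mathbf{p}_1)+(1-\lambda)f_c(z|\mathbf{p}_2)\right]$, verifies that the posterior at $\mathbf{p}_3$ is the $\mu$-mixture of the posteriors at $\mathbf{p}_1$ and $\mathbf{p}_2$, and applies concavity of $U^{T}_{k+1}$ inside the integral. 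You instead observe that the integrand is the perspective $t\,U^{(n)}(a/t)$ of the concave $U^{(n)}$ precomposed with an affine map of $\mathbf{p}$, and invoke the general facts that the perspective of a concave function is jointly concave on $\{t>0\}$ and that affine precomposition preserves concavity. These prove the identical inequality --- the $\mu$-weight manipulation is precisely the proof of concavity of the perspective specialized to this integrand --- but your packaging is shorter, eliminates the auxiliary-weight bookkeeping, and makes transparent that the argument works for any concave value function and any Bayesian posterior update of this form. Two small points you supply that the paper leaves implicit are worth keeping: the base case $0\leq v\leq 1$ (Lemma~\ref{lem:refine} only states concavity and the endpoint values; your stop-at-$j=0$ bound and the nonnegativity of the cost fill this in), and the reason $\Phi_s^{(n)}$ is a constant. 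The one thing you gloss over is the interchange of limit and integral when passing from $\Phi_c^{(n)}$ to $\Phi_c$; the paper invokes dominated convergence, which is immediate here since $0\leq U^{(n)}\leq 1$, so this is cosmetic rather than a gap.
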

\begin{proof}
Please see Appendix \ref{app:scan}.
\end{proof}

Since both $U\left(p^{1, 1}, p^{mix}\right)$ and $v\left(p^{1, 1}, p^{mix}\right)$ are concave functions over $\mathcal{P}$, $U\left(p^{1, 1}, p^{mix}\right) \leq v\left(p^{1, 1}, p^{mix}\right)$ over $\mathcal{P}$, and $U(1, 0) = v(1,0) = 0$, there must exist a certain region, denoted as $R_{\tau}$, on which these two concave surfaces are equal to each other. Hence, the optimal stopping time $\tau_{0}^{*}$ can be described as the first hitting time of the process $\left(p_{k}^{1, 1}, p_{k}^{mix}\right)$ to the region $R_{\tau}$. Similarly, $\Phi_{c}$ is a concave surface and $\Phi_{s}$ is a constant plane with $\Phi_{s} = \Phi_{c}(p^{1,1}_{0}, p^{mix}_{0})$. Hence, $\mathcal{P}$ can be divided into two connected regions $R_{\phi}$ and $\mathcal{P}\backslash R_{\phi}$, where $R_{\phi} := \left\{\left(p^{1,1}, p^{mix}\right): \Phi_{c}(p^{1,1}, p^{mix}) \leq \Phi_{s} \right\}$. Hence, the observer switches to new sequences at time slot $k$ if $\left(p^{1,1}_{k}, p^{mix}_{k}\right)$ is in $R_{\phi}$. We illustrate these two regions in Figure \ref{fig:region1}. As the result, we have the following theorem.

\begin{thm}\label{thm:scan2}
There exist two regions, $R_{\tau} \subset \mathcal{P}$ and $R_{\phi} \subset \mathcal{P} $, such that
\begin{eqnarray}
\tau_0^{*} = \min\left\{k \geq 0: \left(p_{k}^{1,1}, p_{k}^{mix}\right) \in R_{\tau} \right\},
\end{eqnarray}
and
\begin{eqnarray}
\phi_k^{*} =\left\{\begin{array}{ll} 1 & \text{ if } (p_{k}^{1,1}, p_{k}^{mix}) \in R_{\phi}\\
0 &\text{ otherwise } \end{array}\right..
\end{eqnarray}
\end{thm}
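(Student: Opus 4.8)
The plan is to read off Theorem \ref{thm:scan2} from the optimal rules already established in Theorem \ref{thm:scan1}, using the concavity and boundedness supplied by Lemmas \ref{lem:refine} and \ref{lem:scan}. The point to exploit is that every object appearing in Theorem \ref{thm:scan1} --- the value functions $v$ and $U$, the continuation value $\Phi_c$, and the switching value $\Phi_s$ --- is a \emph{time-invariant} function of the current posterior state $(p_k^{1,1}, p_k^{mix})$ alone. Consequently the optimal actions depend on the time index $k$ only through that state, and the whole content of the theorem is to package the induced decision sets as fixed subsets (``regions'') of $\mathcal{P}$.

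For the stopping rule I would set
\[
R_\tau := \{ (p^{1,1}, p^{mix}) \in \mathcal{P} : v(p^{1,1}, p^{mix}) = U(p^{1,1}, p^{mix}) \}.
\]
Since the operator defining $U$ in Theorem \ref{thm:scan1} gives $U = \min\{ v,\, c + \min\{\Phi_c, \Phi_s\} \}$, we have $U \le v$ on all of $\mathcal{P}$, and $R_\tau$ is equivalently the set $\{ v \le c + \min\{\Phi_c, \Phi_s\} \}$ of states at which stopping the scan costs no more than continuing. By Lemma \ref{lem:refine} the vertex $(1,0)$ satisfies $v(1,0)=U(1,0)=0$, so $(1,0) \in R_\tau$ and $R_\tau \ne \emptyset$. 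Plugging this set into the characterization of $\tau_0^*$ in Theorem \ref{thm:scan1} immediately yields $\tau_0^* = \min\{ k \ge 0 : (p_k^{1,1}, p_k^{mix}) \in R_\tau \}$.

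For the switching rule I would let $R_\phi$ be the set of states at which the rule of Theorem \ref{thm:scan1} prescribes a switch; by the remark following that theorem $\Phi_s = \Phi_c(p_0^{1,1}, p_0^{mix})$ is a constant, so $R_\phi$ is cut out of $\mathcal{P}$ by comparing the concave surface $\Phi_c$ with this constant level. The level curve $\{ \Phi_c = \Phi_s \}$ therefore separates $\mathcal{P}$ into two pieces, and since $\Phi_c$ is concave (Lemma \ref{lem:scan}) the super-level piece $\{ \Phi_c \ge \Phi_s \}$ is convex, hence connected; this gives the two regions $R_\phi$ and $\mathcal{P} \setminus R_\phi$ illustrated in Figure \ref{fig:region1}, and restating the rule as $\phi_k^* = 1$ iff $(p_k^{1,1}, p_k^{mix}) \in R_\phi$ is the claim.

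The analytic work has been discharged in advance by Lemmas \ref{lem:refine} and \ref{lem:scan} and by Theorem \ref{thm:scan1}, so the bare existence statement is essentially immediate once the decision sets are named. The step I expect to demand the most care is the geometric structure that justifies calling these sets ``regions'' and matches the figure --- in particular the connectedness of $R_\tau$. Unlike $R_\phi$, which is bounded by a single level curve of a concave function, $R_\tau$ is the coincidence set of the two concave surfaces $v$ and $U$, equivalently the sub-level set of $v - (c + \min\{\Phi_c, \Phi_s\})$; this is a difference of concave functions and is not sign-structured, so connectedness does not follow from concavity by itself. I would establish it from the domination $U \le v$ together with the anchoring $U(1,0)=v(1,0)=0$ and the monotone tendency of the posterior process $(p_k^{1,1}, p_k^{mix})$ to drift toward the vertex $(1,0)$ once an $f_1$-sequence is being observed, which precludes stopping components detached from that vertex.
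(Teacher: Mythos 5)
Your construction is exactly the paper's: the paper gives no separate proof environment for this theorem, deriving it instead from the discussion immediately preceding it, which defines $R_{\tau}$ as the coincidence set of the two concave surfaces $U$ and $v$ (nonempty because $U\leq v$ everywhere and $U(1,0)=v(1,0)=0$) and $R_{\phi}$ by comparing the concave surface $\Phi_{c}$ against the constant plane $\Phi_{s}=\Phi_{c}(p_{0}^{1,1},p_{0}^{mix})$, the time-homogeneity of the value functions being what lets the rules of Theorem \ref{thm:scan1} be restated as fixed hitting/region rules. You also correctly orient $R_{\phi}$ as the set where Theorem \ref{thm:scan1} prescribes $\phi_k^*=1$, i.e.\ $\{\Phi_c>\Phi_s\}$, which is the right reading (the paper's prose defines $R_\phi$ with the inequality reversed while still calling it the switching region, an apparent typo).

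The one part of your plan you should drop is the final paragraph's intention to prove that $R_{\tau}$ is connected. The theorem asserts only the existence of the two subsets, not their connectedness, so nothing is missing without it; and the claim itself is false in general. The paper's own simulation (Section \ref{sec:simulation}, Figure \ref{fig:regions}) exhibits $R_{\tau}$ as \emph{two separate components}, one near $(0,1)$ and one near $(1,0)$ --- which is natural, since the observer should enter the refinement stage either when it believes exactly one sequence is from $f_1$ or when it believes both are. Your proposed drift argument (``the posterior drifts toward $(1,0)$, precluding detached components'') would therefore not go through, and attempting to include it would introduce an incorrect lemma. Everything else in the proposal stands as is.
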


\begin{figure}[thb]
\centering
\includegraphics[width=0.35 \textwidth]{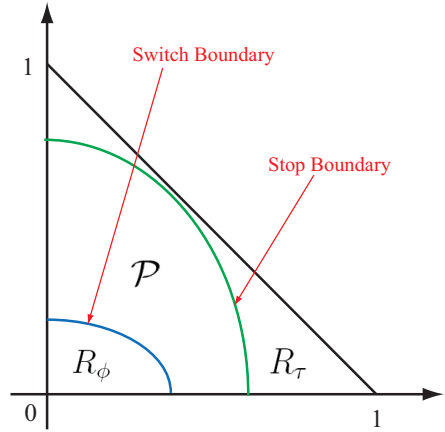}
\caption{An illustration of the region rule}
\label{fig:region1}
\end{figure}

\section{A Low Complexity Algorithm for the Mixed Search Strategy} \label{sec:mixed_asym}
As we can see from the previous section, the optimal solution of the mixed search strategy has a very complex structure. In this section, we propose a simple detection strategy and analyze its performance when $H_{1}$ is rare.

\subsection{A Low Complexity Algorithm} \label{sec:mixed_alg}
We propose a low complexity search algorithm in which the observer adopts the CUSUM test in the scanning stage and adopts SPRT in the refinement stage. Specifically, in the scanning stage, we use
\begin{eqnarray}
&& \tau_{0} = \inf\{k>0 : \tilde{p}_k < p_{L}\}, \no\\
&& \phi_{k} = \left\{ \begin{array}{cc}
           1 & \text{ if } \tilde{p}_k > p_{U} \\
           0 & \text{ otherwise }
         \end{array} \right.,  \label{eq:scan_CUSUM}
\end{eqnarray}
where $\tilde{p}_k$ is computed recursively using the following formula
\begin{eqnarray}
\tilde{p}_{k+1} &=& \frac{\tilde{p}_{k} g_{0}(Z_{k+1})}{\tilde{p}_{k} g_{0}(Z_{k+1}) + (1-\tilde{p}_{k})g_{1}(Z_{k+1})}\mathbf{1}_{\{\phi_{k}=0\}} \no\\
&+& \frac{p_{0}^{0, 0} g_{0}(Z_{k+1})}{p_{0}^{0, 0} g_{0}(Z_{k+1}) + (1-p_{0}^{0, 0})g_{1}(Z_{k+1})}\mathbf{1}_{\{\phi_{k}=1\}}. \no
\end{eqnarray}
In the refinement stage, we use
\begin{eqnarray}
&& \tau_{1} = \inf\left\{j>0 : q_{1, j} \notin [q_{L}, q_{U}]\right\}, \no\\
&& \delta = \left\{ \begin{array}{cc}
           s_{\tau_{0}}^{1} & \text{ if } q_{1, j} > q_{U} \\
           s_{\tau_{0}}^{2} & \text{ if } q_{1, j} < q_{L}
         \end{array} \right.. \label{eq:refine_SPRT}
\end{eqnarray}
Here $q_{1,j}$ is defined in \eqref{eq:q1}, and $p_{L}$, $p_{U}$, $q_{L}$, $q_{U}$ are pre-designed constant thresholds. The selection of these four thresholds will be discussed in the sequel.
\begin{rmk}
From Section \ref{sec:mixed_opt}, we know that $q_{1,0}$ is the sum of $r_{0}^{1,1}$ and $r_{0}^{1,0}$, which are determined by $p_{\tau_{0}}^{mix}$ and $p_{\tau_{0}}^{1,1}$. However, in the scanning stage, the proposed algorithm does not contain these two statistics. For the implementation purpose, we can simply set $q_{1,0} = 1/2$. This choice will be justified in Lemma \ref{lem:refine_delay}, which shows that the initial value of $q_{1,0}$ does not significantly affect the total search delay. In the following derivations, we keep using the notation $q_{1,0}$.
\end{rmk}

\begin{rmk}
The above proposed strategy can be expressed equivalently in terms of likelihood ratios. For the refinement stage, since $q_{1,j} = r_{j}^{1,1}+r_{j}^{1,0}$, for $j=0, 1, 2, \ldots$, we have
\begin{eqnarray}
q_{1,j+1} &=& \frac{f_{1}(X_{j+1}) (r^{1,1}_{j}+r^{1,0}_{j})}{f_{1}(X_{j+1})(r^{1,1}_{j} + r^{1,0}_{j}) + f_{0}(X_{j+1}) (r^{0,1}_{j}+r^{0,0}_{j})} \no\\
&=& \frac{f_{1}(X_{j+1}) q_{1,j} }{f_{1}(X_{j+1})q_{1,j} + f_{0}(X_{j+1}) (1-q_{1,j})} \no\\
&=& \frac{ q_{1,0} \prod_{i=1}^{j+1} f_{1}(X_{i}) }{ q_{1,0} \prod_{i=1}^{j+1} f_{1}(X_{i}) + (1-q_{1,0}) \prod_{i=1}^{j+1} f_{0}(X_{i})}. \no
\end{eqnarray}
Hence we have
\begin{eqnarray}
L^{(refine)}_{j} := \prod_{i=1}^{j}\frac{f_{1}(X_{i})}{f_{0}(X_{i})} = \frac{q_{1,j}}{1-q_{1,j}} \frac{1-q_{1,0}}{q_{1,0}}.
\end{eqnarray}
That is, $q_{1,j}$ and $L^{(refine)}_{j}$ have a one-to-one mapping for any given $q_{1,0}$. For the scanning stage, let us define the following stopping time
$$\eta_{m} = \inf\left\{k>0: \tilde{p}_{k} \notin [p_{L}, p_{U}]\right\}.$$
Similar to the single observation strategy, the stopping time $\tau_{0}$ in the scanning stage can also be viewed as a renewal process, with each renewal occurring whenever $\tilde{p}_{k}$ is reset to $p^{0,0}_{0}$, and with a termination occurring whenever $\tilde{p}_{k}$ exits the lower bound $p_{L}$, hence we denote
$$\tau_{0} = \sum_{n=1}^{N} \eta_{m, n},$$
where $\eta_{m,1}, \ldots, \eta_{m,n}, \ldots$ are i.i.d. repetition of $\eta_{m}$, and $N$ is the number of repetition. Notice that from $\eta_{m,n}+1$ to $\eta_{m, n+1}$, the observer is observing the same sequence. Since $\eta_{m,n}$'s are i.i.d, we can only focus on the the evolution of $\tilde{p}_{k}$ for $k = 1, 2, \ldots, \eta_{m}$, we have
\begin{eqnarray}
\tilde{p}_{k} &=& \frac{\tilde{p}_{k-1}g_{0}(Z_{k})}{\tilde{p}_{k-1}g_{0}(Z_{k}) + (1-\tilde{p}_{k-1})g_{1}(Z_{k})} \no\\
&=& \frac{p_{0}^{0, 0} \prod_{i=1}^{k}g_{0}(Z_{i})}{p_{0}^{0, 0} \prod_{i=1}^{k}g_{0}(Z_{i}) + (1-p_{0}^{0,0}) \prod_{i=1}^{k}g_{1}(Z_{i})}. \no
\end{eqnarray}
Hence,
\begin{eqnarray}
L^{(scan)}_{k} := \prod_{i=1}^{k} \frac{g_{1}(Z_{i})}{g_{0}(Z_{i})} = \frac{p_{0}^{0,0}}{1-p_{0}^{0,0}} \frac{1-\tilde{p}_{k}}{\tilde{p}_{k}}.
\end{eqnarray}
Therefore, the proposed strategy in  \eqref{eq:scan_CUSUM} and \eqref{eq:refine_SPRT} can be equivalently written as
\begin{eqnarray}
&& \tau_{0} = \inf\left\{ k>0 : L^{(scan)}_{k} > A_{s} \right\}, \no\\
&& \phi_{k} = \left\{ \begin{array}{cc}
           1 & \text{ if } L^{(scan)}_{k} < B_{s} \\
           0 & \text{ otherwise }
         \end{array} \right.,  \label{eq:scan_CUSUM2} \\
&& \tau_{1} = \inf\left\{j>0 : L^{(refine)}_{j} \notin [A_{r}, B_{r}]\right\}, \no\\
&& \delta = \left\{ \begin{array}{cc}
           s_{\tau_{0}}^{1} & \text{ if } L^{(refine)}_{\tau_{1}} > B_{r} \\
           s_{\tau_{0}}^{2} & \text{ if } L^{(refine)}_{\tau_{1}} < A_{r}
         \end{array} \right., \label{eq:refine_SPRT2}
\end{eqnarray}
in which the thresholds are given as
\begin{eqnarray}
A_{s} = \frac{p_{0}^{0,0}}{1-p_{0}^{0,0}}\frac{1-p_{U}}{p_{U}}, \quad B_{s} = \frac{p_{0}^{0,0}}{1-p_{0}^{0,0}}\frac{1-p_{L}}{p_{L}}, \no\\
A_{r} = \frac{1-q_{1,0}}{q_{1,0}}\frac{q_{L}}{1-q_{L}}, \quad  B_{r} = \frac{1-q_{1,0}}{q_{1,0}}\frac{q_{U}}{1-q_{U}}. \no
\end{eqnarray}
\end{rmk}

\begin{rmk}
Based on the above discussion, an intuitive explanation of the proposed strategy is given as follows: in the scanning stage, the observer is supposed to do the following trinary simple hypothesis test at each time slot:
\begin{eqnarray}
&&H_{0,0}: Z_{k} \sim P_{0,0}, \no\\
&&H_{mix}: Z_{k} \sim P_{mix}, \no\\
&&H_{1,1}: Z_{k} \sim P_{1,1}, \no
\end{eqnarray}
where $P_{0,0}$, $P_{mix}$ and $P_{1,1}$ are the probability measures with probability densities $g_{0}$, $g_{1}$ and $g_{2}$, respectively. The proposed scheme converts the trinary hypothesis test into a composite binary hypothesis test that $H_{0,0}$ versus $\{H_{mix}, H_{1,1}\}$. Since $H_{mix}$ is closer to $H_{0,0}$, we use the worst case likelihood ratio $g_{1}/g_{0}$ in the test. The observer switches to observe the new sequences if the test result favors $H_{0,0}$. Otherwise, the observer enters the refinement stage. In the refinement stage, the observer examines only one sequence, and there are only two possible outcomes: $H_{0}$ and $H_{1}$. In the proposed low-complexity scheme, the observer adopts SPRT to examine $s_{\tau_{0}}^{1}$ sequentially. Then the observer picks $s_{\tau_{0}}^{1}$ if the test favors $H_{1}$ and picks $s_{\tau_{0}}^{2}$ if the test favors $H_{0}$.
\end{rmk}

Recall that $\tau_{0}$ is a renewal process and $\tau_{0} = \sum_{n=1}^{N} \eta_{m,n}$, where $N$ is the total number of repetitions. With a little abuse of notation, we still use $\chi_{0}$ to denote the event that the observer switches to observe new sequences. In this scenario, $\chi_{0}$ can be written as $\left\{\tilde{p}_{\eta_{m}} > p_{U}\right\}$. Hence, $N$ is geometrically distributed:
\begin{eqnarray}
P(N=n) = [1-P(\chi_{0})][P(\chi_{0})]^{n-1}, \no
\end{eqnarray}
and
\begin{eqnarray}
\mathbb{E}[N] = \frac{1}{1-P(\chi_{0})}.
\end{eqnarray}
Let $\alpha_{m} := 1-P_{0,0}(\chi_{0})$, $\beta_{m} := P_{mix}(\chi_{0})$ and $\gamma_{m} := P_{1,1}(\chi_{0})$, we have
\begin{eqnarray}
P(\chi_{0}) &=& (1-\pi_{0})^{2}P_{0,0}(\chi_{0}) + 2(1-\pi_{0})\pi_{0} P_{mix}(\chi_{0}) + \pi_{0}^{2} P_{1,1}(\chi_{0})\no\\
&=&  p_{0}^{1,1}(1-\alpha_{m}) + p_{0}^{mix}\beta_{m} + p_{0}^{1,1}\gamma_{m}.
\end{eqnarray}

\subsection{Analysis of the False Identification Probability} \label{sec:mixed_fip}
In this subsection, we discuss how to choose $\pi_{L}$, $\pi_{U}$, $A_{s}$ and $B_{s}$ so that the FIP constraint is satisfied. This will facilitate our further analysis of the average search delay in Section \ref{sec:mixed_asd}. In the refinement stage, we denote
\begin{eqnarray}
&&\alpha_{sprt} := P(s_{\tau_{0}}^{1} \text{ is claimed to be } f_{1} | s_{\tau_{0}}^{1} \text{ is generated by } f_{0}), \no\\
&&\gamma_{sprt} := P(s_{\tau_{0}}^{1} \text{ is claimed to be } f_{0} | s_{\tau_{0}}^{1} \text{ is generated by } f_{1}) \no
\end{eqnarray}
as Type I error and Type II error, respectively. According to the proposition of SPRT \emph{(\cite{Poor:Book:08}, Proposition 4.10)}, we have
\begin{eqnarray}
&&\alpha_{sprt} \leq B_{r}^{-1}(1- \gamma_{sprt}) < B_{r}^{-1}, \no\\
&&\gamma_{sprt} \leq A_{r}(1- \alpha_{sprt}) < A_{r}. \no
\end{eqnarray}
Hence, the probability that the observer infers the state of $s_{\tau_{0}}^{1}$ incorrectly is
\begin{eqnarray}
P(s_{\tau_{0}}^{1} \text{ is inferred incorrectly}) &=& q_{1,0}\gamma_{sprt} + \left(1-q_{1,0}\right)\alpha_{sprt} \no\\
&<& q_{1,0}A_{r} + \left(1-q_{1,0}\right)B_{r}^{-1} \no\\
&=& q_{1,0}\frac{1-q_{1,0}}{q_{1,0}}\frac{q_{L}}{1-q_{L}} + \left(1-q_{1,0}\right)\frac{q_{1,0}}{1-q_{1,0}}\frac{1-q_{U}}{q_{U}}\no\\
&=& \left(1-q_{1,0}\right)\frac{q_{L}}{1-q_{L}} + q_{1,0}\frac{1-q_{U}}{q_{U}}.
\end{eqnarray}
If we choose
$$q_{L}=\frac{\zeta/2}{1+\zeta/2} \text{ and }  q_{U}=\frac{1}{1+\zeta/2},$$
then we have
\begin{eqnarray}
P(s_{\tau_{0}}^{1} \text{ is inferred incorrectly}) < \left(1-q_{1,0}\right)\frac{\zeta}{2} + q_{1,0}\frac{\zeta}{2} = \frac{\zeta}{2}. \label{eq:err_refine}
\end{eqnarray}
Notice that under this selection of $q_{L}$ and $q_{U}$, $P(s_{\tau_{0}}^{1} \text{ is inferred incorrectly})$ is independent of $q_{1,0}$.

When the observer enters the refinement stage, there are three possible cases: 1) Both $s_{\tau_{0}}^{1}$ and $s_{\tau_{0}}^{2}$ are generated by $H_{1}$. In this case one can make any decision without causing an identification error; 2) Both $s_{\tau_{0}}^{1}$ and $s_{\tau_{0}}^{2}$ are generated by $H_{0}$. In this case an identification error will occur no matter what decision is made; 3) One of $s_{\tau_{0}}^{1}$ and $s_{\tau_{0}}^{2}$ is generated by $H_{0}$, while the other is by $H_{1}$. In this case the false identification probability is described in \eqref{eq:err_refine}. Now, we define three events:
\begin{eqnarray}
&&E_{1,1} = \{ \text{both } s_{\tau_{0}}^{1} \text{ and } s_{\tau_{0}}^{2} \text{ are generated from } f_{1} \}; \no\\
&&E_{mix} = \{ \text{one of }s_{\tau_{0}}^{1} \text{ and } s_{\tau_{0}}^{2} \text{ is generated from } f_{1}, \text{ the other is generated from } f_{0} \}; \no\\
&&E_{0,0} = \{ \text{both } s_{\tau_{0}}^{1} \text{ and } s_{\tau_{0}}^{2} \text{ are generated from } f_{0} \}. \no
\end{eqnarray}
Correspondingly, we have
\begin{eqnarray}
&&P(H^{\delta} = H_{0}|E_{1, 1}, N=1) = 0; \no\\
&&P(H^{\delta} = H_{0}|E_{mix}, N=1) < \zeta/2; \no\\
&&P(H^{\delta} = H_{0}|E_{0, 0}, N=1) = 1. \no
\end{eqnarray}
By Bayes' rule, it is easy to verify that
\begin{eqnarray}
P(E_{0, 0}|N=1) = \frac{p_{0}^{0,0} \alpha_{m}}{p_{0}^{0,0} \alpha_{m} + p_{mix}^{0,0}(1-\beta_{m}) + p_{0}^{1,1}(1-\gamma_{m})}. \no
\end{eqnarray}
Therefore, we have
\begin{eqnarray}
P(H^{\delta} = H_{0}) &=& \sum_{n=1}^{\infty} P(H^{\delta} = H_{0}|N=n)P(N=n) \no\\
&=& P(H^{\delta} = H_{0}|N=1)\sum_{n=1}^{\infty}P(N=n) \no\\
&=& P(H^{\delta} = H_{0}|N=1) \no \\
&=& P(H^{\delta} = H_{0}|E_{1, 1}, N=1)P(E_{1, 1}|N=1) \no\\
&+& P(H^{\delta} = H_{0}|E_{mix}, N=1)P(E_{mix}|N=1)\no\\
&+& P(H^{\delta} = H_{0}|E_{0, 0}, N=1)P(E_{0, 0}|N=1) \no\\
&<& \frac{\zeta}{2} + \frac{p_{0}^{0,0} \alpha_{m}}{p_{0}^{0,0} \alpha_{m} + p_{mix}^{0,0}(1-\beta_{m}) + p_{0}^{1,1}(1-\gamma_{m})}. \no
\end{eqnarray}
Hence, in order to satisfy the FIP constraint, we need to design threshold $A_{s}$, $B_{s}$ such that
\begin{eqnarray}
\frac{p_{0}^{0,0} \alpha_{m}}{p_{0}^{0,0} \alpha_{m} + p_{0}^{mix}(1-\beta_{m}) + p_{0}^{1,1}(1-\gamma_{m})} < \frac{\zeta}{2}. \label{eq:ineq1}
\end{eqnarray}
As $\pi_{0}\rightarrow 0$, \eqref{eq:ineq1} is equivalent to
\begin{eqnarray}
\alpha_{m} &<& \frac{p_{0}^{mix}(1-\beta_{m}) + p_{0}^{1,1}(1-\gamma_{m})}{p_{0}^{0,0}} \frac{\zeta/2}{1-\zeta/2} \no\\
&=& (1-\beta_{m})\frac{\pi_{0}}{1-\pi_{0}}\frac{\zeta}{1-\zeta/2} + (1-\gamma_{m})\frac{\pi_{0}^{2}}{(1-\pi_{0})^2}\frac{\zeta/2}{1-\zeta/2} \no \\
&=& (1-\beta_{m})\frac{\pi_{0}}{1-\pi_{0}}\frac{\zeta}{1-\zeta/2}(1+o(1)). \label{eq:double_alpha}
\end{eqnarray}
Using these results, we have the following conclusion regarding the selection of the thresholds.
\begin{cor}
\begin{eqnarray}
&& q_{L}=\frac{\zeta/2}{1+\zeta/2}, \quad q_{U}=\frac{1}{1+\zeta/2}, \no\\
&& A_{s} = 1, \quad B^{-1}_{s} = \frac{\pi_{0}}{1-\pi_{0}}\frac{\zeta}{1-\zeta/2}(1+o(1)) \label{eq:double_threshold}
\end{eqnarray}
is a set of thresholds that satisfy the FIP constraint.
\end{cor}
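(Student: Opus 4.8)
\emph{Proof proposal.} The plan is to assemble the two error budgets that this subsection has already been setting up, one for each stage, and to check that the stated thresholds keep each budget at most $\zeta/2$. For the refinement stage this is essentially finished: the derivation leading to \eqref{eq:err_refine} shows that the given $q_L=\frac{\zeta/2}{1+\zeta/2}$ and $q_U=\frac{1}{1+\zeta/2}$ already force the refinement-stage error below $\zeta/2$, uniformly in the (unknown) initial value $q_{1,0}$. The Bayes computation carried out just before \eqref{eq:ineq1} then reduces the whole false-identification probability to
\begin{eqnarray}
P(H^{\delta}=H_0) &<& \frac{\zeta}{2} + \frac{p_0^{0,0}\alpha_m}{p_0^{0,0}\alpha_m + p_0^{mix}(1-\beta_m) + p_0^{1,1}(1-\gamma_m)}, \no
\end{eqnarray}
so all that remains is to choose $A_s,B_s$ so that the second term does not exceed $\zeta/2$, i.e.\ so that \eqref{eq:ineq1} holds; and \eqref{eq:ineq1} was shown to be equivalent, as $\pi_0\to0$, to the single bound \eqref{eq:double_alpha} on the scanning Type I error $\alpha_m$.

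The only genuine step, therefore, is to control $\alpha_m$ by the threshold. First I would note that, restricted to a single renewal epoch, the scanning rule is exactly an SPRT testing the null $g_0$ against the alternative $g_1$, with ``enter refinement'' being the decision for the alternative (the likelihood ratio $L^{(scan)}_k$ crossing the upper level $B_s$) and ``switch'' the decision for the null. Writing $\alpha_m=P_{0,0}\!\left(L^{(scan)}_{\eta_m}>B_s\right)$ and changing measure from $g_0$ to $g_1$ exactly as in the proof of Theorem~\ref{thm:single_threshold},
\begin{eqnarray}
\alpha_m &=& \mathbb{E}_{mix}\!\left[\left(L^{(scan)}_{\eta_m}\right)^{-1}; L^{(scan)}_{\eta_m}>B_s\right] \no\\
&<& B_s^{-1}\,P_{mix}\!\left(L^{(scan)}_{\eta_m}>B_s\right) = B_s^{-1}(1-\beta_m), \no
\end{eqnarray}
where the strict inequality comes from $\left(L^{(scan)}_{\eta_m}\right)^{-1}<B_s^{-1}$ on the stopping event and the last equality from $P_{mix}(\text{enter refinement})=1-\beta_m$. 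This is the same change-of-measure bound that yields the Wald SPRT inequality (\cite{Poor:Book:08}, Proposition 4.10) already invoked above for the refinement stage, so no new machinery is needed.

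Substituting the chosen $B_s^{-1}=\frac{\pi_0}{1-\pi_0}\frac{\zeta}{1-\zeta/2}(1+o(1))$ (with $A_s=1$ being the lower switch/reset level, which does not enter the bound on $\alpha_m$) gives $\alpha_m<(1-\beta_m)\frac{\pi_0}{1-\pi_0}\frac{\zeta}{1-\zeta/2}(1+o(1))$, which is exactly \eqref{eq:double_alpha}; hence \eqref{eq:ineq1} holds and $P(H^{\delta}=H_0)<\zeta/2+\zeta/2=\zeta$. I do not expect a serious obstacle here, since the statement is really an assembly of \eqref{eq:err_refine}, the Bayes reduction, and the SPRT bound. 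The one point to keep honest is the passage from the exact constraint \eqref{eq:ineq1} to its asymptotic form \eqref{eq:double_alpha}: the term $p_0^{1,1}(1-\gamma_m)=O(\pi_0^2)$ that is dropped is nonnegative, so the strict slack already present in $\alpha_m<B_s^{-1}(1-\beta_m)$ (the overshoot factor, exactly as in Corollary~\ref{cor:single_threshold}) absorbs it, and the constraint is met for all sufficiently small $\pi_0$ rather than merely in the limit; this is also what licenses replacing the overshoot-corrected optimal threshold by the explicit $B_s$ above.
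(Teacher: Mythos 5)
Your proposal is correct and follows essentially the same route as the paper's proof: the refinement-stage budget of $\zeta/2$ from \eqref{eq:err_refine}, the Bayes reduction to \eqref{eq:ineq1}/\eqref{eq:double_alpha}, and the change-of-measure bound $\alpha_m < B_s^{-1}(1-\beta_m)$ obtained by transplanting the argument of Theorem~\ref{thm:single_threshold} and Corollary~\ref{cor:single_threshold} to the random walk $W_k^{(m)}$ with $g_0$ and $g_1$ in place of $f_0$ and $f_1$. The paper states this more tersely (by citation and substitution), whereas you write out the measure change and the exact-versus-asymptotic bookkeeping explicitly; both are sound.
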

\begin{proof}
We only need to show that the selection of $A_{s}$ and $B_{s}$ can achieve \eqref{eq:double_alpha}. Since the CUSUM test is used in the scanning stage, the proof of this statement is similar to the proofs in Theorem \ref{thm:single_threshold} and Corollary \ref{cor:single_threshold}. Specifically, if $\alpha$, $\beta$, $\eta$ and $B$ used in Theorem \ref{thm:single_threshold} and Corollary \ref{cor:single_threshold} are replaced by $\alpha_{m}$, $\beta_{m}$, $\eta_{m}$ and $B_{s}$ respectively, and replace the random walk by $W_{k}^{(m)} = \sum_{i=1}^{k} g_{1}(Z_{i})/g_{0}(Z_{i})$, we can obtain
\begin{eqnarray}
\alpha_{m} < (1-\beta_{m}) B_{s}^{-1}. \no
\end{eqnarray}
As the result, if we choose $B_{s}$ as shown in the corollary, \eqref{eq:double_alpha} is satisfied. For simplicity, we set $A_{s}=1$.
\end{proof}

\subsection{Analysis of the Average Search Delay} \label{sec:mixed_asd}
As mentioned in Section \ref{subsec:single_performance}, we consider two cases, the FIE case and the RIE case, in the asymptotic analysis as $\pi_{0} \rightarrow 0$. Let $\rho_{m} = \alpha_{m}/\pi_{0}$. By \eqref{eq:double_alpha}, it is easy to see that $\rho_{m}$ is a constant in the FIE case and $\rho_{m} \rightarrow 0$ in the RIE case.

We first consider the delay caused in the refinement stage. For the FIE case, the thresholds $q_{L}$ and $q_{U}$ for SPRT are constants since $\zeta$ is a constant within $(0, 1)$. Therefore, in this case the expected delay is  finite, i.e., $\mathbb{E}_{q_{1,0}}[\tau_{1}] < \infty$. For the RIE case, we have the following lemma on the delay in the refinement stage:
\begin{lem} \label{lem:refine_delay}
If $\zeta \rightarrow 0$, then for any given $q_{1,0} \in (0, 1)$,
$$\mathbb{E}_{q_{1,0}}[\tau_{1}] = |\log \zeta| (1+o(1)).$$
\end{lem}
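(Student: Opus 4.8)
The plan is to treat the refinement stage as a textbook SPRT on the log-likelihood-ratio random walk and to apply Wald's identity, paralleling the analysis of $\eta$ in Lemma~\ref{lem:delaybound}. Using the equivalent LR formulation derived in the preceding remark, set $W_j := \log L^{(refine)}_j = \sum_{i=1}^{j} l(X_i)$ with $l(X_i) = \log\bigl(f_1(X_i)/f_0(X_i)\bigr)$, so that $\tau_1 = \inf\{j>0 : W_j \notin [\log A_r, \log B_r]\}$. The first step is to evaluate the two boundaries under the threshold choice $q_L = (\zeta/2)/(1+\zeta/2)$ and $q_U = 1/(1+\zeta/2)$. Since $q_L/(1-q_L) = \zeta/2$ and $q_U/(1-q_U) = 2/\zeta$, the threshold formulas give $\log A_r = \log\frac{1-q_{1,0}}{q_{1,0}} + \log(\zeta/2)$ and $\log B_r = \log\frac{1-q_{1,0}}{q_{1,0}} + \log(2/\zeta)$. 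Because $q_{1,0}\in(0,1)$ is held fixed, the additive constant $\log\frac{1-q_{1,0}}{q_{1,0}}$ is negligible, so both $|\log A_r|$ and $\log B_r$ equal $|\log\zeta|\,(1+o(1))$ as $\zeta\to 0$.

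Next I would decompose the mixed expectation by conditioning on the true state of the examined sequence $s_{\tau_0}^1$, writing $\mathbb{E}_{q_{1,0}}[\tau_1] = q_{1,0}\,\mathbb{E}_1[\tau_1] + (1-q_{1,0})\,\mathbb{E}_0[\tau_1]$, where $\mathbb{E}_1$ and $\mathbb{E}_0$ denote the expectations under the hypotheses that $X_1,X_2,\ldots$ are i.i.d.\ from $f_1$ and from $f_0$, respectively. Under the first hypothesis the walk has positive drift $\mathbb{E}_1[l(X_1)] = D(f_1\|f_0) > 0$, and Wald's identity gives $\mathbb{E}_1[W_{\tau_1}] = \mathbb{E}_1[\tau_1]\,D(f_1\|f_0)$; under the second it has negative drift $\mathbb{E}_0[l(X_1)] = -D(f_0\|f_1) < 0$ and $\mathbb{E}_0[W_{\tau_1}] = -\mathbb{E}_0[\tau_1]\,D(f_0\|f_1)$. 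The finiteness of $\mathbb{E}_1[\tau_1]$ and $\mathbb{E}_0[\tau_1]$ required to apply Wald follows from the nonzero drift together with $0 < D < \infty$, by the same renewal argument used to show $\mathbb{E}_0[\eta]<\infty$ in Lemma~\ref{lem:delaybound}.

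The remaining and main step is to show $\mathbb{E}_1[W_{\tau_1}] = |\log\zeta|\,(1+o(1))$ and symmetrically $\mathbb{E}_0[W_{\tau_1}] = -|\log\zeta|\,(1+o(1))$. I would split $\mathbb{E}_1[W_{\tau_1}]$ according to which boundary is hit: the upper-crossing contribution is $\log B_r$ times its probability plus a nonnegative overshoot, while the (rare, under the first hypothesis) lower-crossing contribution has magnitude at most $|\log A_r|\,\gamma_{sprt}$. Invoking the SPRT error bound $\gamma_{sprt} < A_r = O(\zeta)$ from Section~\ref{sec:mixed_fip}, and since both boundaries are $O(|\log\zeta|)$, the wrong-boundary contribution is $O(\zeta|\log\zeta|) = o(1)$; the overshoot past $\log B_r$ has a limiting distribution with finite mean (Theorem~8.25 in~\cite{Siegmund:Book:85}), hence contributes only $O(1)$. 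Dividing by the drifts yields $\mathbb{E}_1[\tau_1]$ and $\mathbb{E}_0[\tau_1]$ each of exact order $|\log\zeta|$, so $\mathbb{E}_{q_{1,0}}[\tau_1]$ grows like $|\log\zeta|$ independently of $q_{1,0}$ to leading order, which is the assertion. The principal obstacle is the bookkeeping of these two lower-order corrections — the wrong-boundary crossing probability and the boundary overshoot — which must both be controlled to be $o(|\log\zeta|)$ uniformly as $\zeta\to0$ so that the leading $|\log\zeta|$ term is the one that survives.
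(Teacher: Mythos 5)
Your argument is correct in substance but follows a genuinely more self-contained route than the paper. The paper's proof is essentially a citation: it plugs the Wald approximations for the SPRT error probabilities (Proposition 4.10 of \cite{Poor:Book:08}) into the closed-form expected-sample-size formulas (Proposition 4.11 of \cite{Poor:Book:08}) and reads off $\mathbb{E}_{0}[\tau_{1}]$ and $\mathbb{E}_{1}[\tau_{1}]$, then sandwiches the mixture $\mathbb{E}_{q_{1,0}}[\tau_{1}]$ between the two, exactly as you do in your final step. You instead re-derive those formulas from first principles: Wald's identity on the LLR walk, a split of $\mathbb{E}_{i}[W_{\tau_{1}}]$ by which boundary is hit, the SPRT error bounds to kill the wrong-boundary term, and Siegmund's overshoot theorem to control the $O(1)$ correction. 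What your route buys is transparency about where the $o(1)$ comes from; what it costs is precisely the bookkeeping you flag --- note in particular that the wrong-boundary contribution under $P_{1}$ is in magnitude \emph{at least} $|\log A_{r}|\gamma_{sprt}$, and bounding it \emph{above} (and bounding the overshoot mean) quietly requires a second-moment condition on $\log(f_{1}/f_{0})$, an assumption the paper also makes implicitly by using the overshoot-free Wald approximations. One further point worth recording: both proofs actually yield $\mathbb{E}_{i}[\tau_{1}] = |\log\zeta|\,D^{-1}(1+o(1))$ with the relevant KL divergence in the denominator, so the lemma's literal claim with leading constant $1$ holds only up to that constant (the paper silently drops $D(f_{0}\Vert f_{1})$ and $D(f_{1}\Vert f_{0})$ in its final displays, and your ``exact order $|\log\zeta|$'' conclusion has the same character). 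This is harmless for every subsequent use of the lemma, which only needs the refinement delay to be $O(|\log\zeta|)$ so that it is dominated by the $O(\pi_{0}^{-1})$ scanning delay, but a fully rigorous statement would either include the KL constants or weaken the conclusion to $\Theta(|\log\zeta|)$.
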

\begin{proof}
From \eqref{eq:double_threshold} we know that $q_{L} \rightarrow 0$ and $q_{U} \rightarrow 1$ as $\zeta \rightarrow 0$. Then the Type I error of SPRT in the refinement stage can be approximated by \emph{(\cite{Poor:Book:08}, Proposition 4.10)}:
\begin{eqnarray}
\alpha_{sprt} &\approx& \frac{1-A_{r}}{B_{r}-A_{r}} \rightarrow \frac{\zeta}{2}\frac{q_{1,0}}{1-q_{1,0}}, \no
\end{eqnarray}
and the Type II error can be approximate by
\begin{eqnarray}
\gamma_{sprt} \approx \frac{B_{r}-1}{B_{r}-A_{r}} A_{r} \rightarrow \frac{\zeta}{2}\frac{q_{1,0}}{1-q_{1,0}}. \no
\end{eqnarray}
Hence, the delay caused by SPRT is given as \emph{(\cite{Poor:Book:08},  Proposition 4.11)}:
\begin{eqnarray}
\mathbb{E}_{0}[\tau_{1}] \hspace{-3mm}&=&\hspace{-3mm} \frac{-1}{D(f_{0}||f_{1})}\left[ \alpha_{sprt}\log\frac{1-\gamma_{sprt}}{\alpha_{sprt}}+\left(1- \alpha_{sprt}\right)\log\frac{\gamma_{sprt}}{1-\alpha_{sprt}}\right] \no\\
\hspace{-3mm}&=&\hspace{-3mm} \frac{1-2\alpha_{sprt}}{D(f_{0}||f_{1})} \log \frac{1-\alpha_{sprt}}{\alpha_{sprt}} \no\\
\hspace{-3mm}&=&\hspace{-3mm} |\log \zeta/2|(1+o(1)), \no\\
\mathbb{E}_{1}[\tau_{1}] \hspace{-3mm}&=&\hspace{-3mm} \frac{1}{D(f_{1}||f_{0})}\left[ \left(1-\gamma_{sprt}\right)\log\frac{1-\gamma_{sprt}}{\alpha_{sprt}}+\gamma_{sprt}\log\frac{\gamma_{sprt}}{1-\alpha_{sprt}}\right] \no\\
\hspace{-3mm}&=&\hspace{-3mm} \frac{1-2\alpha_{sprt}}{D(f_{1}||f_{0})} \log \frac{1-\alpha_{sprt}}{\alpha_{sprt}} \no\\
\hspace{-3mm}&=&\hspace{-3mm} |\log \zeta/2|(1+o(1)). \no
\end{eqnarray}
Since $\mathbb{E}_{q_{1,0}}[\tau_{1}]$ is a linear combination of $\mathbb{E}_{0}[\tau_{1}]$ and $\mathbb{E}_{1}[\tau_{1}]$, we have
$$ \min\{ \mathbb{E}_{0}[\tau_{1}], \mathbb{E}_{1}[\tau_{1}] \} \leq \mathbb{E}_{q_{1,0}}[\tau_{1}] \leq \max\{ \mathbb{E}_{0}[\tau_{1}], \mathbb{E}_{1}[\tau_{1}] \}.$$
Therefore
$$ \mathbb{E}_{q_{1,0}}[\tau_{1}] = |\log \zeta| (1+o(1)).$$
\end{proof}

\begin{rmk}
As we can see from the above lemma, if $\zeta \rightarrow 0$, the asymptotic delay in the refinement stage is determined by $\zeta$, regardless of the value of $q_{1,0}$. Hence, for the implementation purpose, we can simply set $q_{1,0} = 1/2$. On the other hand, if $\zeta$ is a constant within $(0, 1)$,  the delay in the refinement stage is a finite value for any $q_{1,0} \in (0, 1)$. Hence, we can also set $q_{1,0} = 1/2$. As we will show later, compared with the delay in the scanning stage, the delay in the refinement stage is negligible. 
\end{rmk}

In the following, we study the delay incurred in the scanning stage. Recall that $\tau_{0} = \sum_{n=1}^{N} \eta_{m,n}$, by applying the Wald's identity, we obtain
\begin{eqnarray}
\mathbb{E}[\tau_{0}] &=& \mathbb{E}[N] \mathbb{E}[\eta_{m}] = \frac{\mathbb{E}[\eta_{m}]}{1-P(\chi_{0})} \no\\
&=&\frac{p_{0}^{0,0}\mathbb{E}_{0,0}[\eta_{m}]+p_{0}^{mix}\mathbb{E}_{mix}[\eta_{m}]+ p_{0}^{1,1}\mathbb{E}_{1,1}[\eta_{m}]}{p_{0}^{0,0}\alpha_{m}+p_{0}^{mix}(1-\beta_{m})+p_{0}^{1,1}(1-\gamma_{m})}. \no
\end{eqnarray}

As $\pi_{0} \rightarrow 0$, it is easy to verify that
\begin{eqnarray}
\mathbb{E}[\tau_{0}] &\rightarrow& \frac{1-\pi_{0}}{\rho_{m}(1-\pi_{0}) + 2(1-\beta_{m})}\frac{1}{\pi_{0}}\mathbb{E}_{0,0}[\eta_{m}] \no\\
&+& \frac{2(1-\pi_{0})}{(1-\pi_{0}^2)\rho_{m}+2(1-\pi_{0})(1-\beta_{m})}\mathbb{E}_{mix}[\eta_{m}] + \pi_{0}\mathbb{E}_{1,1}[\eta_{m}].\label{eq:scan_delay}
\end{eqnarray}

\begin{lem} \label{lem:scan_delay}
If $0< D(g_{1}||g_{0})<\infty$ and $0 < \mathbb{E}_{1,1}[\log(g_{1}/g_{0})]<\infty$, then as $\pi_{0} \rightarrow 0$, the scanning stage delay for the FIE case is given as
\begin{eqnarray}
\mathbb{E}[\tau_{0}] = \frac{1-\pi_{0}}{\rho_{m}(1-\pi_{0})+2(1-\beta_{m})}\frac{1}{\pi_{0}}\mathbb{E}_{0,0}[\eta_{m}](1+o(1)). \label{eq:regular_delay}
\end{eqnarray}
In addition, if $\pi_{0}|\log \zeta| \rightarrow 0$, then the scanning stage delay for the RIE case is given as
\begin{eqnarray}
\mathbb{E}[\tau_{0}] = \frac{1-\pi_{0}}{(1-\beta_{m})}\frac{1}{2\pi_{0}}\mathbb{E}_{0,0}[\eta_{m}](1+o(1)). \label{eq:rare_delay}
\end{eqnarray}
\end{lem}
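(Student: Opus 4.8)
The plan is to read the result directly off the exact delay expression \eqref{eq:scan_delay}, showing that as $\pi_0 \to 0$ its first summand dominates the other two so that the latter are absorbed into the $(1+o(1))$ factor. I would introduce the log-likelihood walk $W_k^{(m)} = \sum_{i=1}^k \log\big(g_1(Z_i)/g_0(Z_i)\big)$, so that $\eta_m$ is the two-barrier first-passage time $\inf\{k>0 : W_k^{(m)} \notin [\log A_s, \log B_s]\}$ with lower barrier $\log A_s = 0$ and upper barrier $\log B_s \to \infty$. The whole argument then mirrors Lemma \ref{lem:delaybound} and the theorem following it, with $(f_0,f_1,|\log B|)$ replaced by $(g_0,g_1,|\log B_s|)$, but now with three distinct drift regimes according to whether $Z_i \sim g_0$, $g_1$, or $g_2$.

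First I would bound the three conditional mean exit times. Under $P_{0,0}$ the increment has negative drift $\mathbb{E}_{0,0}[\log(g_1/g_0)] = -D(g_0\|g_1)<0$; since $\eta_m \le \eta_m^- := \inf\{k : W_k^{(m)} < 0\}$ pathwise and the one-sided negative-drift passage time $\eta_m^-$ has finite mean (the argument of Lemma 1 in \cite{Banerjee:TIT:12}, as already invoked in Lemma \ref{lem:delaybound}), it follows that $1 \le \mathbb{E}_{0,0}[\eta_m] \le \mathbb{E}_{0,0}[\eta_m^-] < \infty$ uniformly in $\pi_0$. Under $P_{mix}$ and $P_{1,1}$ the drifts $D(g_1\|g_0)>0$ and $\mathbb{E}_{1,1}[\log(g_1/g_0)]>0$ are positive and finite by hypothesis, so Wald's identity applied exactly as in the upper-bound half of Lemma \ref{lem:delaybound} gives $\mathbb{E}_{mix}[\eta_m] \le (1-\beta_m)|\log B_s|/D(g_1\|g_0)\,(1+o(1))$ and $\mathbb{E}_{1,1}[\eta_m] = O(|\log B_s|)$.

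Next I would insert the threshold scaling from \eqref{eq:double_threshold}, namely $|\log B_s| = |\log\pi_0|(1+o(1))$ in the FIE case and $|\log B_s| = (|\log\pi_0|+|\log\zeta|)(1+o(1))$ in the RIE case, which in both cases yields $\pi_0|\log B_s| \to 0$ (for RIE this uses the extra hypothesis $\pi_0|\log\zeta|\to 0$). Consequently the second summand of \eqref{eq:scan_delay} is $O(|\log B_s|)$ and the third is $\pi_0\,O(|\log B_s|) = o(1)$, whereas the first summand is at least $\tfrac{1-\pi_0}{\rho_m(1-\pi_0)+2(1-\beta_m)}\tfrac{1}{\pi_0}$, i.e. $\Theta(1/\pi_0)$ once one checks the prefactor is $\Theta(1)$ (this requires $1-\beta_m$ bounded away from $0$, which holds because under the positive $mix$-drift and receding upper barrier $\beta_m \to P_{mix}(\inf_k W_k^{(m)}<0)\in(0,1)$). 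Dividing the second and third summands by the first gives $O(\pi_0|\log B_s|)\to 0$, so both are $o$ of the first; this establishes \eqref{eq:regular_delay}. For the RIE case the same domination holds and I would additionally use $\rho_m\to 0$ (from \eqref{eq:double_alpha}) to collapse the prefactor to $1/(2(1-\beta_m))$, giving \eqref{eq:rare_delay}.

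The main obstacle I anticipate is the uniform finiteness of $\mathbb{E}_{0,0}[\eta_m]$: because the upper barrier $\log B_s$ recedes to $+\infty$, one must make sure the rare excursions that reach it do not inflate the mean exit time. The pathwise bound $\eta_m \le \eta_m^-$ resolves this cleanly, since the bounding time $\eta_m^-$ does not depend on $B_s$ at all, but it relies on the negative-drift first-passage mean being finite, so the reduction to \cite{Banerjee:TIT:12} is the technical crux. The remaining care-point is verifying that all prefactors in \eqref{eq:scan_delay} stay $\Theta(1)$, for which the boundedness of $1-\beta_m$ away from zero (and away from one) is the key input.
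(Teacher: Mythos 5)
Your proposal is correct and follows essentially the same route as the paper: it reads the asymptotics off \eqref{eq:scan_delay}, establishes $\mathbb{E}_{0,0}[\eta_{m}]<\infty$ via the negative-drift one-sided passage-time argument (the paper's appeal to Lemma \ref{lem:delaybound} and \cite{Banerjee:TIT:12}), bounds $\mathbb{E}_{mix}[\eta_{m}]$ and $\mathbb{E}_{1,1}[\eta_{m}]$ by $O(|\log B_{s}|)$ via Wald's identity, and uses $\pi_{0}|\log B_{s}|\rightarrow 0$ plus $\rho_{m}\rightarrow 0$ in the RIE case. Your explicit pathwise bound $\eta_{m}\leq\eta_{m}^{-}$ and the check that $1-\beta_{m}$ stays bounded away from zero are slightly more careful than the paper's write-up but do not change the argument.
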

\begin{proof}
By the argument in Proposition \ref{lem:delaybound}, we immediately have
\begin{eqnarray}
&& 0 < \mathbb{E}_{0,0}[\eta_{m}] < \infty, \no\\
&& 0 < \mathbb{E}_{mix}[\eta_{m}] \leq (1-\beta_{m})\frac{|\log B_{s}|}{D(g_{1}||g_{0})}(1+o(1)). \no
\end{eqnarray}
In the following, we only need to study $\mathbb{E}_{1,1}[\eta_{m}]$. With a little abuse of notation, we denote
\begin{eqnarray}
W_{k}^{(m)} := \sum_{i=1}^{k} \log \frac{g_{1}(Z_{i})}{g_{0}(Z_{i})}.  \no
\end{eqnarray}
Since $A_{s}=1$ and $B_{s}^{-1}=\frac{\pi_{0}}{1-\pi_{0}}\frac{\zeta}{1-\zeta/2}(1+o(1))$, we have
\begin{eqnarray}
\mathbb{E}_{1,1}\left[W_{k}^{(m)}\right] &=& \mathbb{E}_{1,1}\left[W_{k}^{(m)}\Big|W_{k}^{(m)} \leq 0\right] P_{1,1}\left(W_{k}^{(m)} \leq 0 \right) \no\\
&+& \mathbb{E}_{1,1}\left[W_{k}^{(m)}\Big|W_{k}^{(m)} \geq \log B_{s}\right] P_{1,1}\left(W_{k}^{(m)} \geq \log B_{s} \right) \no\\
&<& \mathbb{E}_{1,1}\left[W_{k}^{(m)}\Big|W_{k}^{(m)} \geq \log B_{s} \right] P_{1,1}\left(W_{k}^{(m)} \geq \log B_{s} \right) \no\\
&=& (1-\gamma_{m})|\log B_{s}|(1+o(1)). \no
\end{eqnarray}
At the same time, by Wald's identity, we have
\begin{eqnarray}
\mathbb{E}_{1,1}\left[W_{k}^{(m)}\right] = \mathbb{E}_{1,1}[\eta_{m}]\mathbb{E}_{1,1}\left[\log \frac{g_{1}(Z_{1})}{g_{0}(Z_{1})} \right]. \no
\end{eqnarray}
Then we have
\begin{eqnarray}
\mathbb{E}_{1,1}[\eta_{m}] &=& \frac{\mathbb{E}_{1,1}\left[W_{k}^{(m)}\right]}{\mathbb{E}_{1,1}\left[\log \left(g_{1}(Z_{1})/g_{0}(Z_{1})\right)\right]} \no\\
&<& (1-\gamma_{m})\frac{|\log B_{s}|}{\mathbb{E}_{1,1}\left[\log \left(g_{1}(Z_{1})/g_{0}(Z_{1})\right)\right]}(1+o(1)). \no
\end{eqnarray}
Notice that $\pi_{0}|\log B_{s}| \rightarrow 0$ for both the FIE and RIE cases. Hence, the last term in \eqref{eq:scan_delay} goes to zero. Moreover, since the second term in \eqref{eq:scan_delay} is on the order of $O(|\log B_{s}|)$, then the first term in \eqref{eq:scan_delay}, which is on the order of $O(\pi_{0}^{-1})$, dominates the delay in the scanning stage. Therefore, we have
\begin{eqnarray}
\mathbb{E}[\tau_{0}] = \frac{1-\pi_{0}}{\rho_{m}(1-\pi_{0})+2(1-\beta_{m})}\frac{1}{\pi_{0}}\mathbb{E}_{0,0}[\eta_{m}](1+o(1)). \no
\end{eqnarray}
Then \eqref{eq:regular_delay} follows. \eqref{eq:rare_delay} follows from the fact that $\rho_{m} \rightarrow 0$ in the RIE case.
\end{proof}

\begin{thm}
With the assumptions in Lemma \ref{lem:scan_delay}, then as $\pi_{0} \rightarrow 0$, we have
\begin{eqnarray}
ASD_{m}^{*} &=& \mathbb{E}[\tau_{0}^{*} + \tau_{1}^{*}] \no\\
&\leq& \frac{1-\pi_{0}}{\rho_{m}(1-\pi_{0})+2(1-\beta_{m})}\frac{1}{\pi_{0}}\mathbb{E}_{0,0}[\eta_{m}](1+o(1)). \no
\end{eqnarray}
\end{thm}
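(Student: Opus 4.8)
The plan is to split the total delay into its two stages by linearity of expectation and then show that the refinement-stage delay is asymptotically negligible beside the scanning-stage delay, so that the bound already established in Lemma~\ref{lem:scan_delay} controls the whole quantity. First I would write $ASD_m^* = \mathbb{E}[\tau_0^* + \tau_1^*] = \mathbb{E}[\tau_0^*] + \mathbb{E}[\tau_1^*]$, which is legitimate since $\tau_0^*,\tau_1^*\geq 0$ and both stages have finite expected length under the chosen thresholds. The first term is handled directly by Lemma~\ref{lem:scan_delay}: in the FIE case it equals the right-hand side of \eqref{eq:regular_delay}, and in the RIE case it equals \eqref{eq:rare_delay}. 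In both cases it is of order $\Theta(\pi_0^{-1})$, because $0<\mathbb{E}_{0,0}[\eta_m]<\infty$ by the first display in the proof of that lemma, and the remaining factors are bounded.

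The crux is to control $\mathbb{E}[\tau_1^*]$. Since the refinement stage is entered exactly once, namely when $\tilde{p}_k$ first drops below $p_L$, this term is simply the expected duration of a single SPRT run started from prior $q_{1,0}$, and is not inflated by the number $N$ of scanned pairs. In the FIE case the thresholds $q_L,q_U$ are fixed constants in $(0,1)$, so $\mathbb{E}_{q_{1,0}}[\tau_1]<\infty$ is itself a constant; dividing by the $\Theta(\pi_0^{-1})$ scanning delay yields a ratio of order $O(\pi_0)\to 0$. In the RIE case I would invoke Lemma~\ref{lem:refine_delay} to get $\mathbb{E}_{q_{1,0}}[\tau_1] = |\log\zeta|(1+o(1))$, whence the ratio to the scanning delay is of order $\pi_0|\log\zeta|$, which vanishes precisely by the standing hypothesis $\pi_0|\log\zeta|\to 0$ imported from Lemma~\ref{lem:scan_delay}. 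Hence in either case $\mathbb{E}[\tau_1^*] = o\!\left(\mathbb{E}[\tau_0^*]\right)$, so $ASD_m^* = \mathbb{E}[\tau_0^*]\,(1+o(1))$ and the refinement contribution is absorbed into the $(1+o(1))$ factor.

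To finish, I would unify the two cases under the single bound stated in the theorem. In the FIE case $\mathbb{E}[\tau_0^*]$ is exactly the claimed right-hand side by \eqref{eq:regular_delay}, so the asserted inequality holds (with asymptotic equality). In the RIE case $\rho_m\to 0$, so the stated coefficient $\frac{1-\pi_0}{\rho_m(1-\pi_0)+2(1-\beta_m)}$ converges to $\frac{1-\pi_0}{2(1-\beta_m)}$, matching the coefficient of \eqref{eq:rare_delay}; thus the RIE scanning delay equals the stated expression up to $(1+o(1))$ as well. The step I expect to require the most care is this reconciliation of the FIE and RIE coefficients, together with a clean argument that the lower-order refinement term is swept into $(1+o(1))$ uniformly across both regimes; in particular one must confirm that $\pi_0|\log\zeta|\to 0$ is exactly the condition rendering the RIE refinement delay negligible, which is why that hypothesis is carried over from Lemma~\ref{lem:scan_delay}.
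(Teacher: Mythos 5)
Your proposal is correct and follows essentially the same route as the paper: decompose the delay into the two stages, invoke Lemma \ref{lem:scan_delay} to show the scanning delay is of order $\pi_{0}^{-1}$ and equals the stated bound up to $(1+o(1))$, and use the finiteness (FIE) or Lemma \ref{lem:refine_delay} together with $\pi_{0}|\log\zeta|\rightarrow 0$ (RIE) to absorb the refinement delay into the $(1+o(1))$ factor. The only divergence is in how the inequality sign is read: you treat $\tau_{0}^{*},\tau_{1}^{*}$ as the low-complexity algorithm's stopping times and obtain asymptotic equality, whereas the paper's one-line justification is that $ASD_{m}^{*}$ denotes the delay of the \emph{optimal} mixed strategy, which is upper-bounded by the low-complexity algorithm's delay precisely because the latter is suboptimal; under either reading the stated bound holds, so this is a matter of interpretation rather than a gap.
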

\begin{proof}
This theorem follows from the fact that the scanning stage delay is on the order of $O(\pi_{0}^{-1})$, while the refinement stage delay is either a finite number (under the FIE case) or on the order of $O(|\log \zeta|)$ (under the RIE case). Since $\pi_{0}|\log \zeta| \rightarrow 0$, ASD is dominated by the detection delay in the scanning stage. Since the proposed low complexity search strategy is not optimal, the inequality holds.
\end{proof}

\begin{rmk}
The conclusion that the scanning stage delay dominates the refinement stage delay as $\pi_{0} \rightarrow 0$ is a very important insight we gain from the two-observation mixed search strategy. Intuitively, to maintain a low identification error, the observer needs to make sure that at least one of the two candidate sequence selected in the scanning stage is generated from $f_{1}$ since the observer cannot come back to the scanning stage again after it enters the refinement stage. In other words, the mistake made in the scanning stage cannot be compensated in the refinement stage. Hence, the observer spends a long time, which turns out to be in proportion to $\pi_{0}^{-1}$, in the scanning stage to make an accurate decision.
\end{rmk}

\subsection{Mixed Observation Strategy vs. Single Observation Strategy} \label{sec:mixed_gain}
In this subsection, we compare the performances of the mixed observation search strategy and the single observation search strategy as $\pi_{0} \rightarrow 0$. From results discussed above, we have the following proposition:

\begin{prop} \label{prop:gain}
\begin{eqnarray}
\frac{ASD_{m}^{*}}{ASD^{*}} &<& \frac{\rho(1-\pi_{0})+(1-\beta)}{\rho_{m}(1-\pi_{0})+2(1-\beta_{m})}\frac{\mathbb{E}_{0,0}[\eta_{m}]}{\mathbb{E}_{0}[\eta]} \no\\
&\rightarrow& \frac{1-\beta}{2(1-\beta_{m})}\frac{\mathbb{E}_{0,0}[\eta_{m}]}{\mathbb{E}_{0}[\eta]} \text{ as } \rho\rightarrow 0, \rho_{m}\rightarrow 0.
\end{eqnarray}
\end{prop}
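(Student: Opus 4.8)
The plan is to derive the bound by directly dividing the two asymptotic delay expressions that have already been established. From the single-observation delay theorem in Section~\ref{subsec:single_performance}, the optimal single-observation delay in the FIE regime is
\begin{eqnarray}
ASD^{*} = \frac{1-\pi_{0}}{\rho(1-\pi_{0})+(1-\beta)}\frac{1}{\pi_{0}}\mathbb{E}_{0}[\eta](1+o(1)), \no
\end{eqnarray}
while the theorem immediately preceding this proposition (which rests on Lemma~\ref{lem:scan_delay}) gives
\begin{eqnarray}
ASD_{m}^{*} \leq \frac{1-\pi_{0}}{\rho_{m}(1-\pi_{0})+2(1-\beta_{m})}\frac{1}{\pi_{0}}\mathbb{E}_{0,0}[\eta_{m}](1+o(1)). \no
\end{eqnarray}
Both expressions share the same leading factor $\frac{1-\pi_{0}}{\pi_{0}}$, which is the source of the cancellation that makes the ratio clean.

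First I would form the quotient $ASD_{m}^{*}/ASD^{*}$. The factor $\frac{1-\pi_{0}}{\pi_{0}}$ cancels exactly, leaving the product of the coefficient ratio $\frac{\rho(1-\pi_{0})+(1-\beta)}{\rho_{m}(1-\pi_{0})+2(1-\beta_{m})}$ and the ratio of mean sojourn times $\mathbb{E}_{0,0}[\eta_{m}]/\mathbb{E}_{0}[\eta]$; this is exactly the first line of the claim. The strict inequality is inherited from the strict upper bound on $ASD_{m}^{*}$ (the low-complexity scheme bounding it is suboptimal, and the threshold inequality \eqref{eq:double_alpha} is strict), divided by the asymptotically exact $ASD^{*}$. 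Here I would note that the quotient of the two $(1+o(1))$ correction factors is itself $1+o(1)$ under the joint limit $\pi_{0}\rightarrow 0$, so these corrections do not affect the leading-order comparison.

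Finally I would pass to the limit $\rho\rightarrow 0$, $\rho_{m}\rightarrow 0$, which is precisely the RIE regime in which both normalized Type-I errors vanish. Setting $\rho=\rho_{m}=0$ in the coefficient ratio gives $\frac{1-\beta}{2(1-\beta_{m})}$, and multiplying by the unchanged sojourn-time ratio produces the stated limiting value. The only real care needed---and the step I would treat as the main obstacle---is the consistent bookkeeping of the asymptotic regimes: both delay formulas must be invoked as FIE expressions with $\rho,\rho_{m}$ held as parameters for the first inequality, and only then should the $\rho,\rho_{m}\rightarrow 0$ limit be taken to recover the RIE behavior, all while ensuring the $(1+o(1))$ terms remain controlled and the strict inequality survives the division.
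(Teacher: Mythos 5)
Your proposal is correct and matches the paper's (implicit) argument exactly: the paper states this proposition with no separate proof, deriving it "from results discussed above," namely by dividing the asymptotic FIE expression for $ASD_{m}^{*}$ from the theorem following Lemma \ref{lem:scan_delay} by the asymptotic expression for $ASD^{*}$ from Section \ref{subsec:single_performance}, cancelling the common factor $\frac{1-\pi_{0}}{\pi_{0}}$, and then letting $\rho,\rho_{m}\rightarrow 0$. Your additional care about the $(1+o(1))$ bookkeeping and the source of the strict inequality is consistent with the paper's remarks and adds nothing that conflicts with its reasoning.
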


For further analysis, one needs to characterize $\beta$, $\beta_{m}$, $\mathbb{E}_{0}[\eta]$ and $\mathbb{E}_{0,0}[\eta_{m}]$. These four quantities depend on the undershoot of the corresponding random walks crossing the lower bound. As the lower bound goes to zero, the asymptotic results of the undershoot have been discussed in \cite{Siegmund:ADAP:79}, \cite{Siegmund:Book:85}, \cite{Lotov:AnP:71}, \cite{Chang:AnP:97}, etc. However, these results cannot be used in our analysis since the lower bound $A_{s}$ is equal to $1$ in our case. In general, one needs to use numerical methods to estimate these four quantities. However, we can show that the delay ratio approaches $1/2$, i.e., the mixed search strategy reduces the search delay to the half of $ASD^{*}$, under some special cases.


\begin{exmpl} \label{eg:1}
Consider the quickest search problem for
$$ H_{0}: Y_{k}^{i} \sim \mathcal{N}(0, \sigma^2) \text{  vs. } H_{1}: Y_{k}^{i} \sim \mathcal{N}(\mu, \sigma^2), $$
where $\sigma$ is a finite constant. Then
$$\frac{ASD_{m}^{*}}{ASD^{*}} \rightarrow \frac{1}{2}$$
as $\pi_{0} \rightarrow 0$, $\mu \rightarrow \infty$.
\end{exmpl}
\begin{proof}
For the single observation search strategy, we assume that the observer observes sequence $s_{t}$ at time slot $i=1, 2, \ldots \eta$. Let $Y_{i}$ be the $i^{th}$ observation drawn from $s_{t}$, and let $W_{k}=\sum_{i=1}^{k} \log f_{1}(Y_{i})/f_{0}(Y_{i})$ be the random walk. Define $\bar{Y}_{k} = \frac{1}{k} \sum_{i=1}^{k} Y_{i}$. It is easy to verify that the event $\chi_{0}=\{W_{\eta} < 0\}$ can be equivalently written as $\chi_{0}=\{\bar{Y}_{\eta} < \mu/2\}$.

The distribution of $\bar{Y}_{k}$ is given as
$$f(\bar{y}_{k}|H_{0}) = \mathcal{N}(0, \sigma^2/k), \quad f(\bar{y}_{k}|H_{1}) = \mathcal{N}(\mu, \sigma^2/k). $$
As $\mu \rightarrow \infty$, one can verify that
\begin{eqnarray}
&& \beta = P_{1}(\chi_{0}) \rightarrow 0,  \no\\
&& \mathbb{E}_{0}[\eta] = \sum_{k=1}^{\infty} k P_{0}(\eta=k) \rightarrow 1, \no
\end{eqnarray}

For the mixed observation search strategy, the proposed low complexity algorithm conduct following test in the scanning stage,
$$ H_{0,0}: Z_{k} \sim \mathcal{N}(0, 2\sigma^2) \text{  vs. } H_{mix}: Z_{k} \sim \mathcal{N}(\mu, 2\sigma^2). $$
This test is the same as the above one, hence we can obtain $\beta_{m} \rightarrow 0$ and  $\mathbb{E}_{0,0}[\eta_{m}] \rightarrow 1$ as $\mu \rightarrow \infty$.

As $\pi_{0} \rightarrow 0$, we can use Proposition \ref{prop:gain}, and the delay ratio goes to $1/2$.
\end{proof}
\begin{rmk}
For this particular binary test, one can write the expressions of $\beta$, $\beta_{m}$, $\mathbb{E}_{0}[\eta]$ and $\mathbb{E}_{0,0}[\eta_{m}]$ explicitly using the Gaussian distribution. However, the expressions are long and complex. In Section \ref{sec:simulation}, we conduct numerical simulations for the Gaussian case with different values of mean and variance, the results show that the delay ratio is close to $1/2$ even for small $\mu$'s.
\end{rmk}

\begin{rmk}
Following the discussion in Example \ref{eg:1}, one can show that the delay ratio also approaches to $1/2$ for some other \emph{infinitely divisible} distributions. For example, the quickest search problem for gamma distribution
$$H_{0}: Y_{k}^{i} \sim \Gamma(\kappa_{0}, \theta) \text{ vs. } H_{1}: Y_{k}^{i} \sim \Gamma(\kappa_{1}, \theta)$$
when $|\kappa_{1} - \kappa_{0}| \rightarrow \infty$, and the quickest search problem for Poisson distribution
$$H_{0}: Y_{k}^{i} \sim Poisson(\lambda_{0}) \text{ vs. } H_{1}: Y_{k}^{i} \sim Possion(\lambda_{1})$$
when $|\lambda_{1} - \lambda_{0}| \rightarrow \infty$ will also achieve the delay ratio $1/2$. The infinitely divisible distributions is relatively easy to analyze since the sum of i.i.d. random variables has the same type of distribution as each individual random variable only with different parameters.
\end{rmk}

\section{Extension: $2$-based Multi-stage Search Strategy} \label{sec:extension}
In this section, we extend the mixed search strategy to multiple sequences. The proposed new search strategy is named as $2$-based multi-stage search strategy.

This search strategy consists of one scanning stage and $K$ refinement stages. Specifically, in the scanning stage, the observer observes the sum of samples from $2^{K}$ sequences. After taking each observation, the observer has to make one of the following three decisions: 1) to take another observation from the same group of sequences; 2) to switch to another group of $2^{K}$ sequences; or 3) to stop scanning and enter the first refinement stage. Hence, there are $2^{K}$ candidate sequences for the first refinement stage.

Each refinement stage selects half of the candidate sequences for the next refinement stage. Specifically, the $i^{th}$ refinement stage, $i=1, 2, \ldots, K$, has $2^{K-i+1}$ candidate sequences, and the observer divides them equally into two groups. Then the observer observes the sum of samples from the sequences in the first group. After taking each observation, the observer has to make one of the following two decisions: 1) to take another observation from the first group; or 2) to stop the $i^{th}$ refinement stage and select one of the two groups for the next refinement stage. Hence, there are $2^{K-i}$ candidate sequences left after the $i^{th}$ refinement stage. When $i=K$, i.e., after the last refinement stage, there is only one sequence left, which will be claimed to be generated from $f_{1}$.

We use $\tau_{0}$ to denote the stopping time for the scanning stage, $\tau_{i}$ to denote the stopping time for the $i^{th}$ refinement stage. We still use $\phiv$ to denote the sequence of switching rules used in the scanning stage. Furthermore, we use $\delta^{(i)}$ to denote the terminal decision rule used in the $i^{th}$ refinement stage. In particular, $\{\delta^{(i)}=1\}$ implies that the observer selects the sequences in the first group at the end of the $i^{th}$ refinement stage, and $\{\delta^{(i)}=2\}$ implies that the sequences in the second group is selected.


Our goal is to characterize the following optimization problem
\begin{eqnarray}
w_{0}:= \inf_{\tau_{0}, \phiv, \tau_{1}, \delta^{(1)}, \ldots, \tau_{K}, \delta^{(K)}} c\mathbb{E}\left[ \sum_{i=0}^{K} \tau_{i} \right]+P\left( H^{\delta^{(K)}} = H_{0} \right).
\end{eqnarray}

Similar to Section \ref{sec:mixed}, we convert this multiple stopping time problem into $K+1$ concatenated single stopping time problems. Let $Z_{k}$ be the $k^{th}$ observation taken in the scanning stage, $X_{j}^{(i)}$ be the $j^{th}$ observation taken in the $i^{th}$ refinement stage. Moreover, let $\mathcal{F}_{k} = \sigma\{Z_{1}, \ldots, Z_{k}\}$ and $\mathcal{G}_{j}^{(i)} = \sigma\{ Z_{1}, \ldots, Z_{\tau_{0}}, X_{1}^{(1)}, \ldots, X_{\tau_{1}}^{(1)}, \ldots, X_{1}^{(i)}, \ldots, X_{j}^{(i)}\}$, for $i=1, \ldots, K$. Corresponding to Lemma \ref{lem:multi_stopping}, we have the following lemma for this multi-stage search strategy.
\begin{lem}
For $i=2,\ldots, K-1$, let
\begin{eqnarray}
v^{(K)}\left(\mathcal{G}^{(K-1)}_{\tau_{K-1}} \right) &=& v^{(K)}(\tau_{0}, \phiv, \tau_{1}, \delta^{(1)}, \ldots, \tau_{K-1}, \delta^{(K-1)})\no\\
&:=& \inf_{\tau_{K}, \delta^{(K)}} \mathbb{E}\left[ c\tau_{K} + \mathbf{1}_{\{ H^{\delta^{(K)}} = H_{0} \}} \Big| \mathcal{G}^{(K-1)}_{\tau_{K-1}} \right],  \no\\
v^{(i)}\left(\mathcal{G}^{(i-1)}_{\tau_{i-1}} \right) &=& v^{(i)}(\tau_{0}, \phiv, \tau_{1}, \delta^{(1)}, \ldots, \tau_{i-1}, \delta^{(i-1)}) \no\\
&:=& \inf_{\tau_{i}, \delta^{(i)}} \mathbb{E}\left[ c\tau_{i}+w^{(i+1)}(\tau_{0}, \phiv, \tau_{1}, \delta^{(1)}, \ldots, \tau_{i}, \delta^{(i)}) \Big| \mathcal{G}^{(i-1)}_{\tau_{i-1}} \right], \no\\
v^{(1)}(\mathcal{F}_{\tau_{1}}) &=& v^{(1)}(\tau_{0}, \phiv) \no\\
&:=& \inf_{\tau_{1}, \delta^{(1)}} \mathbb{E}\left[ c\tau_{1}+w^{(2)}(\tau_{0}, \phiv, \tau_{1}, \delta^{(1)}) \Big| \mathcal{F}_{\tau_{0}} \right], \no \\
u_{0} &:=& \inf_{\tau_{0}, \phiv} \mathbb{E}\left[ c\tau_{0}+w^{(1)}(\tau_{0}, \phiv) \right], \no
\end{eqnarray}
then
$$u_{0}=w_{0}.$$
\end{lem}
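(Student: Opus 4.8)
The plan is to prove $u_0 = w_0$ by peeling off the optimizations one stage at a time, from the last refinement stage back to the scanning stage, in exactly the manner used for the two-stage decomposition in Lemma~\ref{lem:multi_stopping}. The underlying engine is the reduction principle for ordered multiple optimal stopping problems (Theorem~2.3 in~\cite{Kobylanski:AAP:11}): because the controls act sequentially on the nested filtration $\mathcal{F}_{\tau_0} \subset \mathcal{G}^{(1)}_{\tau_1} \subset \cdots \subset \mathcal{G}^{(K)}_{\tau_K}$ and the total cost is additive across the per-stage costs, the joint infimum over all controls equals the concatenation of the single-stage infima defining the $v^{(i)}$. I would carry this out as an induction on the number of remaining stages.

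For the base (last) step, I would fix an admissible earlier policy $(\tau_0,\phiv,\tau_1,\delta^{(1)},\ldots,\tau_{K-1},\delta^{(K-1)})$ and observe that the terms $c\sum_{i=0}^{K-1}\tau_i$ are $\mathcal{G}^{(K-1)}_{\tau_{K-1}}$-measurable. By the tower property the total objective equals $\mathbb{E}\bigl[c\sum_{i=0}^{K-1}\tau_i + \mathbb{E}[c\tau_K + \mathbf{1}_{\{H^{\delta^{(K)}}=H_0\}} \mid \mathcal{G}^{(K-1)}_{\tau_{K-1}}]\bigr]$, and minimizing the inner conditional expectation over $(\tau_K,\delta^{(K)})$ returns $v^{(K)}(\mathcal{G}^{(K-1)}_{\tau_{K-1}})$ by definition. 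The inequality $w_0 \ge \inf \mathbb{E}[c\sum_{i=0}^{K-1}\tau_i + v^{(K)}]$ is immediate, since the inner expectation dominates its own infimum pointwise. The reverse inequality is the substantive one: for each $\varepsilon>0$ I must produce controls $(\tau_K^\varepsilon,\delta^{(K),\varepsilon})$, measurable with respect to $\mathcal{G}^{(K-1)}_{\tau_{K-1}}$, attaining $v^{(K)}$ up to $\varepsilon$ simultaneously across realizations; this is an essential-infimum / approximate measurable-selection argument, and it is precisely the content that~\cite{Kobylanski:AAP:11} supplies once one checks that the relevant family of conditional costs is directed downward.

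Having reduced the last stage, the objective becomes that of a $K$-stage ordered stopping problem whose terminal cost is $v^{(K)}$ rather than the identification-error indicator. I would then repeat the peeling step to replace $(\tau_{K-1},\delta^{(K-1)})$ by $v^{(K-1)}$, and continue inductively through $v^{(K-2)},\ldots,v^{(1)}$, finally absorbing $(\tau_0,\phiv)$ into $u_0$. At each step the Markov property of the posterior processes $\mathbf{p}_k$ and $\mathbf{r}_j$ established earlier guarantees that the value function at a stage depends only on the current sufficient statistic, so the recursion is well posed and the conditional infima are attained in feedback form.

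The hard part will be the interchange of the infimum and the conditional expectation at each peeling step, that is, verifying the hypotheses of the reduction theorem rather than the algebra of the tower property. Concretely I would need to confirm that the concatenated stopping times remain admissible and ordered with respect to the global clock, that each per-stage cost is integrable so the conditional expectations are well defined, and that the set of achievable conditional costs is closed under the pasting of two policies (the lattice/directedness property) so that an $\varepsilon$-optimal measurable selection exists. Once these structural facts are in place, the identity $u_0 = w_0$ follows by collapsing the $K+1$ nested infima, and this is the only step that genuinely invokes the ordered multiple-stopping machinery of~\cite{Kobylanski:AAP:11}.
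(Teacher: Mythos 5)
Your proposal is correct, and it is the same decomposition the paper intends: the paper proves this lemma only by pointing back to the two-stage case (Lemma~\ref{lem:multi_stopping}, Appendix~\ref{app:multi_stopping}), and your stage-by-stage peeling from $\tau_K$ down to $(\tau_0,\phiv)$ is a faithful elaboration of that plan, with the easy inequality $w_0\ge u_0$ obtained by the tower property exactly as in the paper. The one place where your route genuinely diverges is the reverse inequality. You close it by pasting $\varepsilon$-optimal continuations: using the downward-directedness of the family of conditional costs to extract a decreasing sequence converging to the essential infimum, then passing to the limit in expectation. The paper's Appendix~\ref{app:multi_stopping} argument instead packages the same directedness fact into a submartingale statement --- it shows the value process $w_k$ is a submartingale dominated by the stage value $v(k,\phiv)$ and then invokes the optimal stopping theorem to identify $u_k$ as the largest submartingale so dominated, giving $w_k\le u_k$. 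Both arguments rest on the identical structural hypothesis (the lattice property of conditional costs under pasting on a measurable event, plus monotone convergence), so neither is more general; yours is more self-contained at the selection step and makes the approximation explicit, while the paper's is shorter because it outsources the conclusion to the Snell-envelope characterization. Your proposal also correctly flags the admissibility and ordering of the concatenated stopping times on the global clock and the integrability of the per-stage costs as the hypotheses that must be checked before the Kobylanski reduction applies; the paper takes these for granted.
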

\begin{proof}
This proof follows the similar steps in the proof for Lemma \ref{lem:multi_stopping}. Hence, we omit the details for brevity.
\end{proof}

In definitions above, $w^{(i)}$ can be viewed as the cost function associated with the $i^{th}$ refinement stage, and $u_{0}$ is the cost function associated with the scanning stage. This set of concatenated single stopping time problems can be solved using the backward induction. Specifically, $w^{(i)}(\mathcal{G}^{(i-1)}_{\tau_{i-1}} )$ can be written as
\begin{eqnarray}
w^{(i)}\left(\mathcal{G}^{(i-1)}_{j} \right) &=& \min\left\{ w^{(i+1)}\left( \mathcal{G}_{j}^{(i-1)} \right), c+\mathbb{E}\left[w^{(i)}\left(\mathcal{G}^{(i-1)}_{j+1}\right)|\mathcal{G}^{(i-1)}_{j} \right] \right\} \no\\
&=& \min\left\{ w^{(i+1)}\left( \mathcal{G}_{j}^{(i-1)} \right), c+ \min\left\{\Delta_{1}^{(i)}, \Delta_{2}^{(i)} \right\}\right\}, \no
\end{eqnarray}
where
\begin{eqnarray}
\Delta_{1}^{(i)} &=& \mathbb{E}\left[w^{(i)}\left(\mathcal{G}^{(i-1)}_{j+1}\right)|\mathcal{G}^{(i-1)}_{j}, \delta^{(i)}=1 \right],\no\\
\Delta_{2}^{(i)} &=& \mathbb{E}\left[w^{(i)}\left(\mathcal{G}^{(i-1)}_{j+1}\right)|\mathcal{G}^{(i-1)}_{j}, \delta^{(i)}=2 \right].\no
\end{eqnarray}
Hence, the optimal solution for the $i^{th}$ refinement stage is given as
\begin{eqnarray}
\tau_{i} &=& \inf\left\{ j \geq 0:  w^{(i+1)}\left( \mathcal{G}_{j}^{(i-1)} \right) \leq w^{(i)}\left(\mathcal{G}^{(i-1)}_{j} \right) \right\}, \no\\
\delta^{(i)} &=& \left\{
                              \begin{array}{cc}
                                 1 & \text{ if } \Delta_{1}^{(i)} \leq \Delta_{2}^{(i)}\\
                                 2 & \text{ otherwise } \\
                               \end{array}
\right..
\end{eqnarray}
Similarly, for the scanning stage, we have $u_{0} = U(\mathcal{F}_{0})$, in which $U(\cdot)$ satisfies the following recursion
\begin{eqnarray}
U(\mathcal{F}_{k}) &=& \min\{w^{(1)}(\mathcal{F}_{k}), c+\mathbb{E}[U(\mathcal{F}_{k+1})|\mathcal{F}_{k}]\} \no\\
&=& \min\{w^{(1)}(\mathcal{F}_{k}), c+\min\left\{ \Phi_{c}(\mathcal{F}_{k}), \Phi_{s}(\mathcal{F}_{k})\right\} \},\no
\end{eqnarray}
where
\begin{eqnarray}
\Phi_{c}(\mathcal{F}_{k}) &=& \mathbb{E}[U(\mathcal{F}_{k+1})|\mathcal{F}_{k}, \phi_{k}=0], \no\\
\Phi_{s}(\mathcal{F}_{k}) &=& \mathbb{E}[U(\mathcal{F}_{k+1})|\mathcal{F}_{k}, \phi_{k}=1], \no
\end{eqnarray}
and the optimal solution for the scanning stage is given as
\begin{eqnarray}
\tau_{1} &=& \inf\{k \geq 0:  w^{(1)}(\mathcal{F}_{k}) \leq U(\mathcal{F}_{k}) \}, \no\\
\phi_{k} &=& \left\{
                              \begin{array}{cc}
                                 0 & \text{ if } \Phi_{c}(\mathcal{F}_{k}) \leq \Phi_{s}(\mathcal{F}_{k})\\
                                 1 & \text{ otherwise } \\
                               \end{array}
\right..
\end{eqnarray}

\begin{rmk}
To propose an efficient low complexity algorithm is a challenging task for the multi-stage search strategy. Theoretically, one can follow the similar steps in Section \ref{sec:mixed_asym} to design a similar low complexity algorithm. For example, one can use the worst case sequential likelihood ratio $\prod_{i=1}^{k} g_{1}(Z_{i})/g_{0}(Z_{i})$ in the scanning stage, where $g_{0}$ is the pdf of the observation when all $2^{K}$ observing sequences are generated from $f_{0}$, and $g_{1}$ is the pdf when only one of the $2^{K}$ observing sequences is generated from $f_{1}$. However, this low complexity algorithm is not very efficient when $K$ is large because the procedure of mixing $2^{K}$ sequences leads to a small difference between $g_{0}$ and $g_{1}$ even when $f_{0}$ and $f_{1}$ have reasonable KL distance.
\end{rmk}

\section{Simulation} \label{sec:simulation}
In this section, we give some numerical examples to illustrate the analytical results of this paper. In the first two simulations, we illustrate the cost function of the two-observation mixed search strategy analyzed in Section \ref{sec:mixed}. In these numerical examples, we assume $f_{0} \sim \mathcal{N}(0, \sigma_{0}^2)$ and $f_{1} \sim \mathcal{N}(0, \sigma_{1}^2)$.

In the first simulation, we illustrate the cost function of the refinement stage $v\left(p^{1,1}, p^{mix}\right)$. In this simulation, we choose $\pi = 0.05$, $c=0.01$, $\sigma_{0}^2 = 1$ and $\sigma_{1}^2 = 3$. The simulation result is shown in Figure \ref{fig:cost_refine}. This simulation confirms our analysis that $v\left(p^{1,1}, p^{mix}\right)$ is a concave function within $[0, 1]$ over $\mathcal{P}$. We also notice that $v(1,0) = 0$ and $v(0, 0)=1$. This is reasonable since if the observer knows that both $s_{\tau_{0}}^{1}$ and $s_{\tau_{0}}^{2}$ are generated by $f_{1}$, which corresponds to $p^{1,1}_{\tau_{0}} = 1$ and $p^{mix}_{\tau_{0}} = 0$, the observer can make a decision on either of sequences without taking any further observation and making any error, hence the cost on the refinement stage is $0$. Similarly, if the observer knows that neither $s_{\tau_{0}}^{1}$ nor $s_{\tau_{0}}^{2}$ is generated by $f_{1}$, that is $p^{1,1}_{\tau_{0}} = 0$ and $p^{mix}_{\tau_{0}} = 0$, no matter what decision is made, the cost of error would be $1$. We also notice that in the area close to $(p^{1,1}_{\tau_{0}}, p^{mix}_{\tau_{0}}) = (0, 1)$, which indicates the observer is quite sure that one of sequences is generated by $f_{1}$, the cost is small. This is because the observer can significantly reduce the cost of the decision error by taking a few observations.

\begin{figure}[thb]
\centering
\includegraphics[width=0.45 \textwidth]{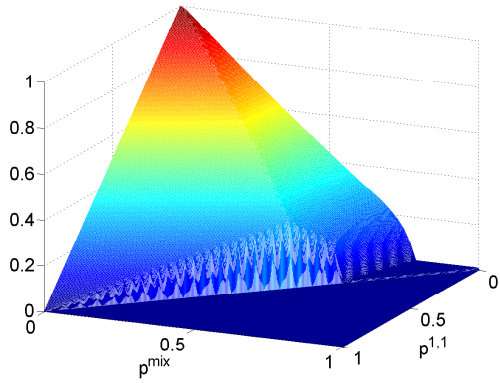}
\caption{An illustration of $v(p^{1,1}, p^{mix})$}
\label{fig:cost_refine}
\end{figure}

In the second simulation, we illustrate the overall cost function $U\left(p^{1,1}, p^{mix}\right)$ using the same simulation parameters. The simulation result is shown in Figure \ref{fig:cost_total}. This simulation confirms that $U\left(p^{1,1}, p^{mix}\right)$ is also a concave function over $\mathcal{P}$. Moveover, this function is flat at the top since it is upper bounded by a constant $c+\Phi_{s}$. This flat area corresponds to $R_{\phi}$, hence if $\left(p^{1,1}_{k}, p^{mix}_{k}\right)$ enters this region, the observer would switch to new sequences at time slot $k$. Similarly, the cost function is also upper bounded by $v\left(p^{1,1}, p^{mix}\right)$, which is shown in Figure \ref{fig:cost_refine}. On the region, $R_{\tau}$, where these two surfaces overlap each other, the observer would stop the scanning stage and enter the refinement stage. The locations of $R_{\phi}$ and $R_{\tau}$ are illustrated in Figure \ref{fig:regions}. In this figure, the left-lower half below the black solid line is the domain $\mathcal{P}$. The region circled by the red dash line is the sequence switching region $R_{\phi}$, and the region circled by the blue dot-dash line is the scanning stop region $R_{\tau}$. In this example, $R_{\tau}$ has two separate regions located around $(0, 1)$ and $(1,0)$ respectively, which means that the observer will enter the refinement stage as soon as it has enough confidence on that at least one of the observed sequences is generated by $f_{1}$. $R_{\tau}$ and $R_{\phi}$ can be computed offline.

\begin{figure}[thb]
\centering
\includegraphics[width=0.45 \textwidth]{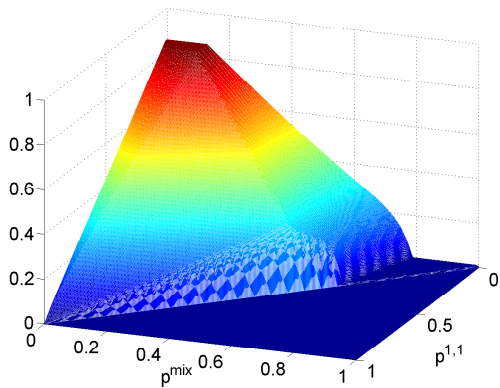}
\caption{An illustration of $V_{s}(p^{1,1}, p^{mix})$}
\label{fig:cost_total}
\end{figure}

\begin{figure}[thb]
\centering
\includegraphics[width=0.45 \textwidth]{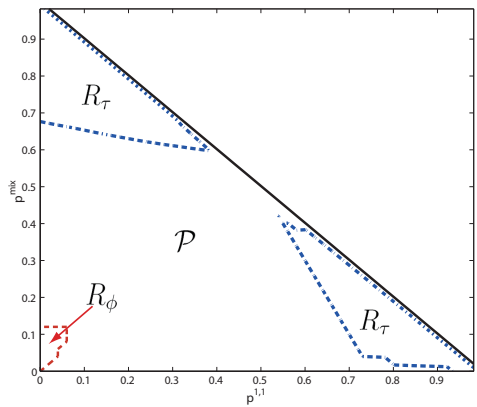}
\caption{The optimal stopping and switching regions}
\label{fig:regions}
\end{figure}

In the third and fourth simulation, we compare the performance of the optimal single observation strategy of \cite{Lai:TIT:11}, the optimal mixed observation strategy proposed in Section \ref{sec:mixed_opt} and the low complexity mixed observation strategy proposed in Section \ref{sec:mixed_asym}. In this set of simulations, we assume that $f_{0}$ is $\mathcal{N}(0, \sigma^{2})$ and $f_{1}$ is $\mathcal{N}(0, P+\sigma^{2})$. We further define the signal to noise ratio as $SNR = 10 \log (P / \sigma^{2})$.

The following simulation shows the performance of each search strategy under different SNR. In particular, we consider ASD of each search strategy by controlling FIP to be around $0.1$. The simulation results for $\pi_{0}=0.05$ are listed in Table \ref{tbl:ASDvsSNR} and are illustrated in Figure \ref{fig:asd_comparision}.  In Figure \ref{fig:asd_comparision}, the blue solid line, the red dot-dash line and the black dash line are ASDs of the optimal mixed observation search strategy, the low complexity mixed observation search strategy and the single observation search strategy, respectively. As we can see, both the optimal and the low complexity mixed observation search strategy outperform the single observation search strategy under this simulation setting. Moreover, the ASD of the low complexity algorithm approaches to that of the optimal mixed observation strategy as SNR increases. The corresponding delay ratio is shown in Figure \ref{fig:asd_reduction}, in which the blue solid line and the red dot-dash line are the delay ratios of the optimal and the low complexity mixed observation search strategy with respect to the single observation search strategy, respectively. Both of these two delay ratios decrease as SNR increases. The optimal mixing search strategy could save about $40\%$ search time. In the low SNR, the delay ratio of the low complexity algorithm is around $0.8$, which is not so significant. This is due to the fact that a small distance between $f_{1}$ and $f_{0}$ leads to a even smaller distance between $g_{1}$ and $g_{0}$, hence the detection in the scanning stage becomes challenging and the delay ratio is relatively large.

\begin{table}
\centering
\caption{The average search delay under different search strategies when $\pi_{0}=0.05$, $FIP \approx 0.1 $} \label{tbl:ASDvsSNR}
\begin{tabular}{|c|c|c|c|}
\hline
& \multicolumn{3}{|c|}{Average search delay } \\ \cline{2-4}
SNR & proposed optimal mixing search & proposed low complexity mixing search & single observation search \cite{Lai:TIT:11} \\ \hline
4 dB & 30.623 & 43.098 & 50.849 \\ \hline
6 dB & 26.545 & 31.055 & 40.353 \\ \hline
8 dB & 22.967 & 24.883 & 36.291\\ \hline
10 dB & 20.853 & 22.088 & 31.182 \\ \hline
12 dB & 17.751 & 18.503 & 28.798 \\ \hline
14 dB & 15.952 & 16.125 & 27.297 \\ \hline
16 dB & 14.766 & 15.204 &  25.045 \\ \hline
\end{tabular}
\end{table}

\begin{figure}[thb]
\centering
\includegraphics[width=0.55 \textwidth]{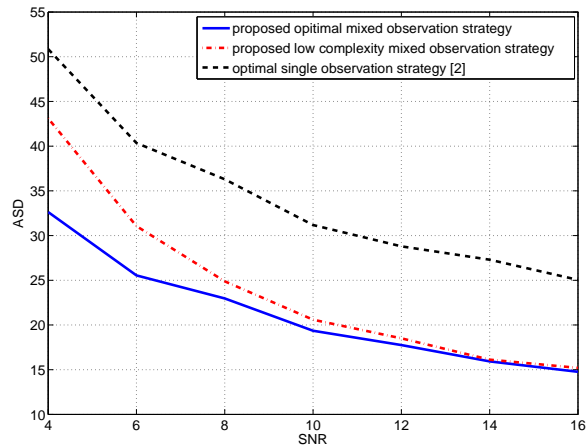}
\caption{ASD vs. SNR when $\pi_{0} = 0.05$, $FIP \approx 0.1$}
\label{fig:asd_comparision}
\end{figure}

\begin{figure}[thb]
\centering
\includegraphics[width=0.55 \textwidth]{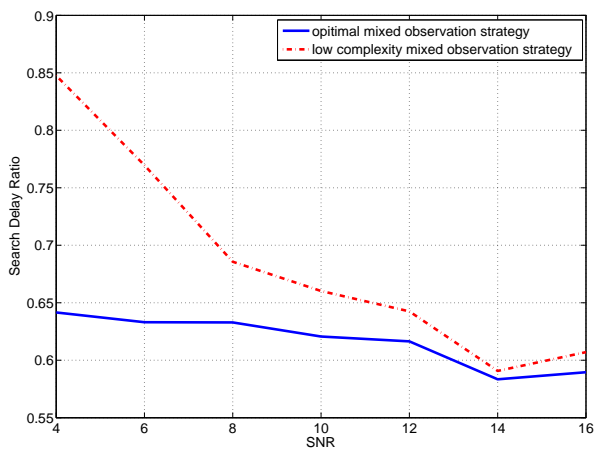}
\caption{The ASD reduction rate under different SNR when $\pi_{0} = 0.05$, $FIP \approx 0.1$}
\label{fig:asd_reduction}
\end{figure}

The fourth simulation illustrates the performance of each search strategy under different prior probability $\pi_{0}$. The simulation result for SNR$=12$dB is listed in table \ref{tbl:ASDvsPI}.  When $\pi_{0}$ is large, the mixed observation strategy does not have any advantages. Actually, in this case the performance of the single observation search strategy is slightly better than that of the mixed observation strategy. When the probability of the observing sequence generated by $f_{1}$ is large, the observer does not need to switch the observing sequence frequently; hence the observer could observe the sequence one by one. In the case, a second stage strategy is redundant. However, when $\pi_{0}$ is small, i.e., the majority of the sequences are generated by $f_{0}$, the mixed search strategy has advantage since it skips through the sequence generated by $f_{0}$ more efficiently.
\begin{table}
\centering
\caption{The average search delay under different prior probabilities} \label{tbl:ASDvsPI}
\begin{tabular}{|c|c|c|c|}
\hline
prior probability & \multicolumn{3}{|c|}{Average search delay } \\ \cline{2-4}
$\pi_{0}$ & optimal strategy & low complexity strategy & single observation \\ \hline
0.5 & 3.783 & 4.351 & 3.022 \\ \hline
0.3 & 5.662 & 5.945 & 5.012 \\ \hline
0.2 & 7.580 & 7.846 & 7.434 \\ \hline
0.1 & 10.923 & 12.293 & 15.886 \\ \hline
0.05 & 19.675 & 22.481 & 30.497 \\ \hline
0.01 & 90.131 & 96.867 & 149.991 \\ \hline
\end{tabular}
\end{table}

In the last two simulations, we illustrate the analytical results obtained in Section \ref{sec:mixed_asym}. In the following simulation, we first consider the delay ratio when both $f_{0}$ and $f_{1}$ are Gaussian distributions. In particular, we set $f_{0}$ to be $\mathcal{N}(0, 1)$ and $f_{1}$ to be $\mathcal{N}(\mu, \sigma^{2})$. Hence, every pair of $(\mu, \sigma)$ is associated with a delay ratio, and the simulation result is a surface with the support on the $(\mu, \sigma)$ plane. In the simulation, $\mu$ takes values in $[1, 15]$ and $\sigma$ takes values in $[1, 4]$. We set $\pi_{0}=0.05$, $\zeta = 0.01$, and the simulation result is shown in Figure \ref{fig:Gaussian_gain}. As we can see from the result that most of the delay ratio lies between $0.5$ and $0.6$, which means that the proposed low complexity mixed observation search strategy can save more than $40\%$ search time compared with the single observation search strategy. Actually, when $\mu \geq 4$ the delay ratio is close to $0.5$, which agrees with our calculation in Example \ref{eg:1}. However, when both $\mu$ and $\sigma$ around $1$, the delay ratio is above $0.8$, which is not very significant, which is due to the same reason explained for Figure \ref{fig:asd_reduction}. 

\begin{figure}[thb]
\centering
\includegraphics[width=0.45 \textwidth]{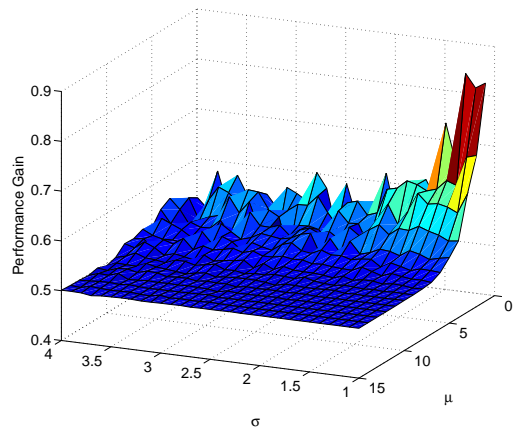}
\caption{Delay ratio for Gaussian distribution}
\label{fig:Gaussian_gain}
\end{figure}

In the last simulation, we illustrate the delay ratio for the gamma distribution $\Gamma(\kappa, \theta)$. In particular, we set the scale parameter $\theta=2$. For $f_{0}$, the shape parameter is set to be $\kappa = 1$. For $f_{1}$, $\kappa$ varies from $2$ to $16$. We also set $\pi_{0}=0.05$, $\zeta = 0.01$, and the simulation result is shown in Figure \ref{fig:gamma_gain}. As we can see from the result that the larger $\kappa$ is, the smaller delay ratio we obtain. The delay ratio approaches to $0.5$ when $\kappa$ increases. When $\kappa$ is small, the performance of the proposed low complex algorithm decreases, but we still obtain some delay reduction.

\begin{figure}[thb]
\centering
\includegraphics[width=0.45 \textwidth]{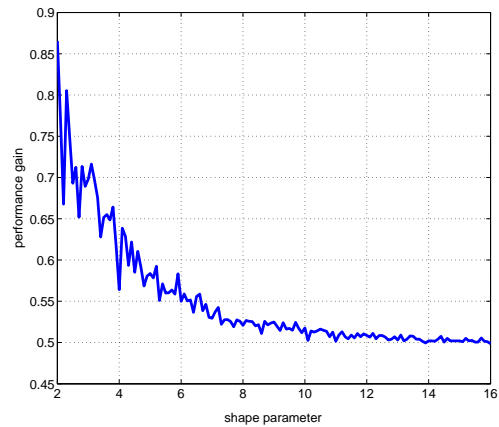}
\caption{Delay ratio for gamma distribution}
\label{fig:gamma_gain}
\end{figure}

\section{Conclusion}\label{sec:conclusion}
In this paper, the problem of quickest search over multiple sequences has been revisited. A two stage search strategy has been proposed. Correspondingly, the problem has been formulated as an optimal multiple stopping time problem. 
We have solved this problem by decomposing the problem into an ordered two concatenated Markov stopping time problem. The optimal solution has been characterized. Unfortunately, the optimal solution has a rather complex structure. We have proposed a low complexity algorithm, in which the CUSUM test is adopted in the scanning stage and SPRT is adopted in the refinement stage. As the prior probability of $H_{1}$ goes to zero, the asymptotic performance of the low complexity algorithm has been analyzed. In most cases, the proposed low complexity search strategy can significantly reduce the search delay.

\appendices
\section{Proof of Theorem \ref{thm:error_prob}} \label{app:error_prob}
Given $(Z_{1}, \ldots, Z_{\tau_{0}})$, let $P_{X}$ be the conditional probability distribution of $(X_{1}, \ldots, X_{\tau_{1}})$. Denote
\begin{eqnarray}
E_{1, 1} &=& \{ s_{\tau_{0}}^{1} \text{is generated by} f_{1}, s_{\tau_{0}}^{2} \text{is generated by} f_{1} \}, \no \\
E_{1, 0} &=& \{ s_{\tau_{0}}^{1} \text{is generated by} f_{1}, s_{\tau_{0}}^{2} \text{is generated by} f_{0} \}, \no \\
E_{0, 1} &=& \{ s_{\tau_{0}}^{1} \text{is generated by} f_{0}, s_{\tau_{0}}^{2} \text{is generated by} f_{1} \}, \no \\
E_{0, 0} &=& \{ s_{\tau_{0}}^{1} \text{is generated by} f_{0}, s_{\tau_{0}}^{2} \text{is generated by} f_{0} \} \no.
\end{eqnarray}
Given $\tau_0, \phiv$, for any $\tau_1$, we have
\begin{eqnarray}
&&\hspace{-6mm}P(H^{\delta}=H_0|\mathcal{F}_{\tau_{0}}) \no\\
&=& P(\delta = s_{\tau_{0}}^{1}, E_{0, 1} \cup E_{0, 0} | \mathcal{F}_{\tau_{0}} ) + P(\delta = s_{\tau_{0}}^{2}, E_{1, 0} \cup E_{0, 0} |\mathcal{F}_{\tau_{0}} ) \no\\
&=& P(\delta = s_{\tau_{0}}^{1}, E_{0, 1} | \mathcal{F}_{\tau_{0}}) + P(\delta = s_{\tau_{0}}^{1}, E_{0, 0} | \mathcal{F}_{\tau_{0}}) + P(\delta = s_{\tau_{0}}^{2}, E_{1, 0} | \mathcal{F}_{\tau_{0}}) + P(\delta = s_{\tau_{0}}^{2}, E_{0, 0}| \mathcal{F}_{\tau_{0}}) \no.
\end{eqnarray}
Let $P^{0,1}_{X}$ be the conditional probability distribution of $(X_{1}, \ldots, X_{\tau_{1}})$ given $E_{0, 1}$ and $(Z_{1}, \ldots, Z_{\tau_{0}})$. Then, we have
\begin{eqnarray}
P(\delta = s_{\tau_{0}}^{1}, E_{0, 1}|\mathcal{F}_{\tau_{0}}) &=& P(E_{0, 1}|\mathcal{F}_{\tau_{0}})P(\delta = s_{\tau_{0}}^{1}| E_{0, 1}, \mathcal{F}_{\tau_{0}}) \no \\
&=& r_{0}^{0, 1} \sum_{j=0}^{\infty} \int_{\{\delta=s_{\tau_0}^1,\tau_1=j\}} \text{d} P^{0, 1}_{X} \no \\
&=& \sum_{j=0}^{\infty} \int_{\{\delta=s_{\tau_0}^1,\tau_1=j\}} r_{0}^{0, 1} \frac{\text{d}P^{0, 1}_{X}}{\text{d}P_{X}} \text{d}P_{X} \no \\
&=& \sum_{j=0}^{\infty} \int_{\{\delta=s_{\tau_0}^1,\tau_1=j\}} \mathbb{E}\left[\frac{ r_{0}^{0, 1} \text{d}P^{0, 1}_{X}}{\text{d}P_{X}}\Bigg | \mathcal{G}_{\tau_{1}} \right] \text{d}P_{X} \no \\
&=& \sum_{j=0}^{\infty} \int_{\{\delta=s_{\tau_0}^1,\tau_1=j\}} r_{\tau_{1}}^{0, 1} \text{d}P_{X}. \no
\end{eqnarray}
Similarly, we can obtain
\begin{eqnarray}
P(\delta = s_{\tau_{0}}^{1}, E_{0, 0}|\mathcal{F}_{\tau_{0}}) &=& \sum_{j=0}^{\infty} \int_{\{\delta=s_{\tau_0}^1,\tau_1=j\}} r_{\tau_{1}}^{0, 0} \text{d}P_{X}. \no \\
P(\delta = s_{\tau_{0}}^{2}, E_{1, 0}|\mathcal{F}_{\tau_{0}}) &=& \sum_{j=0}^{\infty} \int_{\{\delta=s_{\tau_0}^1,\tau_1=j\}} r_{\tau_{1}}^{1, 0} \text{d}P_{X}. \no \\
P(\delta = s_{\tau_{0}}^{2}, E_{0, 0}|\mathcal{F}_{\tau_{0}}) &=& \sum_{j=0}^{\infty} \int_{\{\delta=s_{\tau_0}^1,\tau_1=j\}} r_{\tau_{1}}^{0, 0}\text{d}P_{X}. \no
\end{eqnarray}
Therefore, we have
\begin{eqnarray}
P(H^{\delta}=H_0|\mathcal{F}_{\tau_{0}}) &=& \sum\limits_{j=0}^{\infty}\left[\int_{\{\delta=s_{\tau_0}^1,\tau_1=j\}} \left(1-q_{1, \tau_1}\right) \text{d} P_{X} +  \int_{\{\delta=s_{\tau_0}^2,\tau_1=j\}}\left(1-q_{2, \tau_1}\right) \text{d} P_{X} \right]. \no
\end{eqnarray}
Notice that
$$P(H^{\delta}=H_0) = \mathbb{E}[P(H^{\delta}=H_0|\mathcal{F}_{\tau_{0}})].$$
It is clear that $P(H^{\delta}=H_0)$ achieves its minimum when
\begin{eqnarray}
\delta = \left\{\begin{array}{ll}s_{\tau_0}^1 & \text{ if } q_{1, \tau_1} > q_{2, \tau_1}\\
s_{\tau_0}^2 &\text{ if }q_{1, \tau_1}\leq q_{2, \tau_1} \end{array}\right. .\no
\end{eqnarray}
Thus, we conclude that
\begin{eqnarray}
\inf_{\delta} P(H^{\delta}=H_0) = \mathbb{E}\left\{1-\max\left\{q_{1, \tau_1},q_{2, \tau_1} \right\}\right\}. \no
\end{eqnarray}

\section{Proof of Lemma \ref{lem:multi_stopping}} \label{app:multi_stopping}
For the brevity of notation, set
\begin{eqnarray}
h(\tau_{0}, \phiv, \tau_{1}) = c(\tau_{0}+\tau_{1}) + 1 - \max\left\{ q_{1, \tau_{1}}, q_{2, \tau_{1}} \right\}, \no
\end{eqnarray}
and define
\begin{eqnarray}
w_{k} &:=& \essinf_{\{\tau_{0}\geq k\}, \phiv, \tau_{1}} \mathbb{E}[h(\tau_{0}, \phiv, \tau_{1})|\mathcal{F}_{k}]. \no
\end{eqnarray}
\begin{lem} \label{lem:submartingale}
$\{w_{k}\}$ is a submartingale.
\end{lem}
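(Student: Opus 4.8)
The plan is to prove the submartingale inequality $\mathbb{E}[w_{k+1}|\mathcal{F}_k]\ge w_k$ by the standard dynamic-programming argument: the conditional value functions of a nested family of admissible controls always form a submartingale for a minimization problem. Write $\mathcal{A}_k$ for the collection of admissible triples $\theta=(\tau_0,\phiv,\tau_1)$ with $\tau_0\ge k$, so that $w_k=\essinf_{\theta\in\mathcal{A}_k}\mathbb{E}[h(\theta)|\mathcal{F}_k]$. First I would record two elementary facts. Since each $q_{i,\tau_1}\in[0,1]$ forces $1-\max\{q_{1,\tau_1},q_{2,\tau_1}\}\ge 0$ while $c(\tau_0+\tau_1)\ge0$, we have $h\ge0$, which supplies the integrability and boundedness-below needed for the limiting arguments (and $\mathbb{E}[h(\theta)]<\infty$ for any fixed finite-cost policy, so $w_k$ is integrable and $\mathcal{F}_k$-measurable). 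Second, $\tau_0\ge k+1$ implies $\tau_0\ge k$, so $\mathcal{A}_{k+1}\subseteq\mathcal{A}_k$; taking the essential infimum over the smaller family can only increase it, giving $w_k\le\essinf_{\theta\in\mathcal{A}_{k+1}}\mathbb{E}[h(\theta)|\mathcal{F}_k]$.

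The crux is then the interchange identity
\[
\essinf_{\theta\in\mathcal{A}_{k+1}}\mathbb{E}[h(\theta)|\mathcal{F}_k]=\mathbb{E}[\,\essinf_{\theta\in\mathcal{A}_{k+1}}\mathbb{E}[h(\theta)|\mathcal{F}_{k+1}]\mid\mathcal{F}_k\,]=\mathbb{E}[w_{k+1}|\mathcal{F}_k],
\]
which, combined with the previous display, yields $w_k\le\mathbb{E}[w_{k+1}|\mathcal{F}_k]$. To push the essential infimum through the conditional expectation I would show the family $\{\mathbb{E}[h(\theta)|\mathcal{F}_{k+1}]:\theta\in\mathcal{A}_{k+1}\}$ is directed downward. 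Given $\theta^{(1)},\theta^{(2)}\in\mathcal{A}_{k+1}$, set $A=\{\mathbb{E}[h(\theta^{(1)})|\mathcal{F}_{k+1}]\le\mathbb{E}[h(\theta^{(2)})|\mathcal{F}_{k+1}]\}\in\mathcal{F}_{k+1}$ and paste the two controls along $A$ to obtain $\theta^{(3)}$ that acts as $\theta^{(1)}$ on $A$ and as $\theta^{(2)}$ on $A^c$. Because $A$ is $\mathcal{F}_{k+1}$-measurable and both controls only act from time $k+1$ onward, $\theta^{(3)}$ is again adapted, lies in $\mathcal{A}_{k+1}$, and satisfies $\mathbb{E}[h(\theta^{(3)})|\mathcal{F}_{k+1}]=\min\{\mathbb{E}[h(\theta^{(1)})|\mathcal{F}_{k+1}],\mathbb{E}[h(\theta^{(2)})|\mathcal{F}_{k+1}]\}$.

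With directedness in hand, a standard lattice argument produces a sequence $\theta^{(n)}\in\mathcal{A}_{k+1}$ with $\mathbb{E}[h(\theta^{(n)})|\mathcal{F}_{k+1}]\downarrow w_{k+1}$ almost surely. The conditional monotone convergence theorem (legitimate since the terms are nonnegative and dominated by the integrable first term) then gives $\mathbb{E}[w_{k+1}|\mathcal{F}_k]=\lim_n\mathbb{E}[h(\theta^{(n)})|\mathcal{F}_k]\ge\essinf_{\theta\in\mathcal{A}_{k+1}}\mathbb{E}[h(\theta)|\mathcal{F}_k]$, which is exactly the inequality needed above; the reverse relation $\essinf_{\theta\in\mathcal{A}_{k+1}}\mathbb{E}[h(\theta)|\mathcal{F}_k]\ge\mathbb{E}[w_{k+1}|\mathcal{F}_k]$ follows at once from $\mathbb{E}[h(\theta)|\mathcal{F}_{k+1}]\ge w_{k+1}$ and the tower property for each fixed $\theta$, so the interchange is in fact an equality. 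Chaining the nesting bound with this identity establishes the submartingale property. I expect the genuine obstacle to be the pasting step: carefully verifying that the spliced control $\theta^{(3)}$ stays admissible, i.e.\ that the switching sequence $\phiv^{(3)}$ together with the two stopping times remain adapted after being selected on the $\mathcal{F}_{k+1}$-set $A$, rather than any of the measure-theoretic limiting arguments, which are routine once directedness is available.
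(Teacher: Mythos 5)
Your proposal is correct and follows essentially the same route as the paper's proof: extract a monotone minimizing sequence of admissible policies, pass the limit through the conditional expectation by conditional monotone convergence, and use the nesting $\mathcal{A}_{k+1}\subseteq\mathcal{A}_k$ of admissible classes. The only difference is that you explicitly verify the downward-directedness of the family via the pasting argument on an $\mathcal{F}_{k+1}$-measurable set---a step the paper silently assumes when it asserts the existence of a sequence with $\mathbb{E}[h(\tau_0^n,\phiv^n,\tau_1^n)|\mathcal{F}_k]\downarrow w_k$---so your write-up is, if anything, more complete.
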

\begin{proof}
Since $w_{k}$ is the essential infimum of a function, then there exist $\{ \tau_{0}^{n} \}_{n\in \mathbb{N}}$, $\{ \phiv^{n} \}_{n\in \mathbb{N}}$ and $\{ \tau_{1}^{n} \}_{n\in \mathbb{N}}$ such that
$$\mathbb{E}[h(\tau_{0}^{n}, \phiv^{n}, \tau_{1}^{n})|\mathcal{F}_{k}] \downarrow w_{k}.$$
Then,
\begin{eqnarray}
\mathbb{E}[w_{k}|\mathcal{F}_{k-1}] &=& \mathbb{E}[\lim_{n\rightarrow \infty} \mathbb{E}[h(\tau_{0}^{n}, \phiv^{n}, \tau_{1}^{n})|\mathcal{F}_{k}]|\mathcal{F}_{k-1}] \no\\
&\overset{(a)}=& \lim_{n\rightarrow \infty} \mathbb{E}[\mathbb{E}[h(\tau_{0}^{n}, \phiv^{n}, \tau_{1}^{n})|\mathcal{F}_{k}]|\mathcal{F}_{k-1}] \no\\
&=& \lim_{n\rightarrow \infty} \mathbb{E}[h(\tau_{0}^{n}, \phiv^{n}, \tau_{1}^{n})|\mathcal{F}_{k-1}] \no\\
&\geq& \lim_{n\rightarrow \infty} v_{k-1} = v_{k-1},
\end{eqnarray}
where (a) is due to the monotone convergence theorem.
\end{proof}

We further define
\begin{eqnarray}
v(\tau_{0},\phiv) &:=& \essinf_{\tau_{1}} \mathbb{E}[h(\tau_{0}, \phiv, \tau_{1})|\mathcal{F}_{\tau_{0}}], \no\\
u_{k} &:=& \essinf_{\{\tau_{0}\geq k\}, \phiv} \mathbb{E}[v(\tau_{0},\phiv)|\mathcal{F}_{k}], \no
\end{eqnarray}
and we are going to show $w_{k} = u_{k}$ for all $k=0, 1, \ldots$. The proof is similar to that of Theorem 2.3 in \cite{Kobylanski:AAP:11}. This proof consists of two steps, we first show $w_{k} \geq u_{k}$. Since
\begin{eqnarray}
\mathbb{E}[h(\tau_{0}, \phiv, \tau_{1})|\mathcal{F}_{k}] &=& \mathbb{E}\left[ \mathbb{E}[h(\tau_{0}, \phiv, \tau_{1})|\mathcal{F}_{\tau_{0}}] \big|\mathcal{F}_{k}\right] \geq\mathbb{E}[v(\tau_{0}, \phiv)|\mathcal{F}_{k}],
\end{eqnarray}
then we can obtain $w_{k} \geq u_{k}$ by taking $\essinf$ on both sides of the inequality.

We next show that $w_{k} \leq u_{k}$. Since
\begin{eqnarray}
v(k, \phiv) = \essinf_{\tau_{1}} \mathbb{E}[h(k, \phiv, \tau_{1})|\mathcal{F}_{k}] \geq \essinf_{\tau_{0}, \phiv, \tau_{1}} \mathbb{E}[h(\tau_{0}, \phiv, \tau_{1})|\mathcal{F}_{k}] = w_{k},
\end{eqnarray}
combining with Lemma \ref{lem:submartingale}, we can conclude that $w_{k}$ is a submartingale dominated by $v(k, \phiv)$. By the optimal stopping theorem, $u_{k}$ is the largest submartingale dominated by $v(k, \phiv)$. Hence $w_{k} \leq u_{k}$.  Therefore, we conclude $w_{k} = u_{k}$.

\section{Proof of Theorem \ref{thm:refine} and Lemma \ref{lem:refine} } \label{app:refine}
We first consider a finite horizon refinement procedure, i.e. the refinement stopping time $\tau_{1}$ is restricted to a finite interval $[0, T]$. Then the problem
$$v(\tau_{0}, \phiv) = \inf\limits_{\tau_1,\delta}\mathbb{E}[c \tau_1 + \mathbf{1}_{\{H^{\delta}=H_0\}}|\mathcal{F}_{\tau_{0}}]$$
can be solved using the dynamic programming. In particular, we have
\begin{eqnarray}
V_{T}^{T}(\mathcal{G}_{T}) &=& 1 - \max\left\{q_{1, T}, q_{2, T} \right\}; \no\\
V_{j}^{T}(\mathcal{G}_{j}) &=& \min \left\{ 1 - \max\left\{q_{1, j}, q_{2, j} \right\}, c + \mathbb{E}\left[ V_{j+1}^{T}(\mathcal{G}_{j+1})|\mathcal{G}_{j}\right] \right\}.
\end{eqnarray}
Recall that
\begin{eqnarray}
&& q_{1, j} = r_{j}^{1,1} + r_{j}^{1,0}, \no \\
&& q_{2, j} = r_{j}^{1,1} + r_{j}^{0,1}. \no
\end{eqnarray}
We have the following lemma:
\begin{lem}
For each $j$, the function $V_{j}^{T}(\mathcal{G}_{j})$ can be written as a function $V_{j}^{T}(r_{j}^{1,1}, r_{j}^{1,0}, r_{j}^{0,1})$.
\end{lem}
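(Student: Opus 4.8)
The plan is to argue by backward induction on $j$, from the terminal stage $j=T$ down to $j=0$, exploiting the Markov structure of $\mathbf{r}_{j}$ that was already established. For the base case, observe that $V_{T}^{T}(\mathcal{G}_{T}) = 1 - \max\{q_{1, T}, q_{2, T}\}$, and since $q_{1, T} = r_{T}^{1,1} + r_{T}^{1,0}$ and $q_{2, T} = r_{T}^{1,1} + r_{T}^{0,1}$ by \eqref{eq:q1} and \eqref{eq:q2}, the terminal value is manifestly a function of $(r_{T}^{1,1}, r_{T}^{1,0}, r_{T}^{0,1})$ alone. (Recall also that $r_{j}^{0,0} = 1 - r_{j}^{1,1} - r_{j}^{1,0} - r_{j}^{0,1}$, so carrying only these three arguments entails no loss of information.)

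For the inductive step, I would assume that $V_{j+1}^{T}(\mathcal{G}_{j+1}) = V_{j+1}^{T}(r_{j+1}^{1,1}, r_{j+1}^{1,0}, r_{j+1}^{0,1})$ and examine the recursion, which writes $V_{j}^{T}(\mathcal{G}_{j})$ as the minimum of two terms. The first term, $1 - \max\{q_{1, j}, q_{2, j}\}$, is a function of $(r_{j}^{1,1}, r_{j}^{1,0}, r_{j}^{0,1})$ by the same reasoning as in the base case. For the second term I must show that $\mathbb{E}[V_{j+1}^{T}(\mathcal{G}_{j+1})\,|\,\mathcal{G}_{j}]$ depends on $\mathcal{G}_{j}$ only through $\mathbf{r}_{j}$. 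Two facts drive this. First, the update equations for the refinement stage exhibit $\mathbf{r}_{j+1}$ as a deterministic function of $\mathbf{r}_{j}$ and the fresh sample $X_{j+1}$; denote this map by $\mathbf{r}_{j+1} = \Psi(\mathbf{r}_{j}, X_{j+1})$. Second, since $X_{j+1}$ is drawn from sequence $s_{\tau_0}^{1}$, whose hypothesis is $H_{1}$ with conditional probability $q_{1, j} = r_{j}^{1,1} + r_{j}^{1,0}$ given $\mathcal{G}_{j}$, and the samples are i.i.d.\ given that hypothesis, the predictive density of $X_{j+1}$ given $\mathcal{G}_{j}$ is the mixture $q_{1, j} f_{1}(\cdot) + (1 - q_{1, j}) f_{0}(\cdot)$, which again depends on $\mathcal{G}_{j}$ only through $\mathbf{r}_{j}$.

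Combining these two facts, the conditional expectation may be written as
\begin{eqnarray}
\mathbb{E}\left[V_{j+1}^{T}(\mathcal{G}_{j+1})\,\big|\,\mathcal{G}_{j}\right] = \int V_{j+1}^{T}\!\left(\Psi(\mathbf{r}_{j}, x)\right)\left[q_{1, j} f_{1}(x) + (1 - q_{1, j}) f_{0}(x)\right] dx, \no
\end{eqnarray}
whose right-hand side is a function of $\mathbf{r}_{j}$ alone. Hence the second term $c + \mathbb{E}[\,\cdots\,]$, and therefore $V_{j}^{T}$, is a function of $(r_{j}^{1,1}, r_{j}^{1,0}, r_{j}^{0,1})$, which closes the induction.

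The main obstacle I anticipate is rigorously establishing the second fact above, namely that $\mathbf{r}_{j}$ is a genuine \emph{sufficient statistic} in the sense that the predictive law of $X_{j+1}$ conditioned on the entire filtration $\mathcal{G}_{j}$ collapses to a mixture governed solely by the posterior $q_{1, j}$. This requires invoking the conditional independence of the future samples from the past given the (unknown) hypothesis of $s_{\tau_0}^{1}$, together with the tower property of conditional expectation; once this reduction is in place, everything else is routine bookkeeping through the deterministic map $\Psi$.
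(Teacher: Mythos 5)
Your proof is correct and follows essentially the same route as the paper: backward induction with the terminal case handled via $q_{1,T}=r_T^{1,1}+r_T^{1,0}$, $q_{2,T}=r_T^{1,1}+r_T^{0,1}$, and the inductive step reducing the conditional expectation to an integral against the predictive mixture density $q_{1,j}f_1+(1-q_{1,j})f_0$ with $\mathbf{r}_{j+1}$ expressed as a deterministic update of $(\mathbf{r}_j, X_{j+1})$. The only difference is cosmetic: you flag the sufficiency of $\mathbf{r}_j$ for the predictive law of $X_{j+1}$ as a point needing care, whereas the paper simply writes $f(x_{j+1}\mid\mathcal{G}_j)$ as that mixture without further comment.
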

\begin{proof}
Clearly, $V_{T}^{T}(\mathcal{G}_{T})$ is a function of $(r_{T}^{1,1}, r_{T}^{1,0}, r_{T}^{0,1})$. Assuming that $V_{j+1}^{T}(\mathcal{G}_{j+1})$ can be written as $V_{j+1}^{T}(r_{j+1}^{1,1}, r_{j+1}^{1,0}, r_{j+1}^{0,1})$, we will show that $V_{j}^{T}(\mathcal{G}_{j})$ can be written as $V_{j}^{T}(r_{j}^{1,1}, r_{j}^{1,0}, r_{j}^{0,1})$.

\begin{eqnarray}
\mathbb{E} \left[ V_{j+1}^{T}(\mathcal{G}_{j+1}) | \mathcal{G}_{j} \right] &=& \int V_{j+1}^{T}(r_{j+1}^{1,1}, r_{j+1}^{1,0}, r_{j+1}^{0,1}) f( x_{j+1} | \mathcal{G}_{j}) \text{d} x_{j+1} \no \\
&=& \int V_{j+1}^{T}\left( \frac{f_{1}(X_{j+1}) r^{1,1}_{j}}{f_{1}(X_{j+1})(r^{1,1}_{j} + r^{1,0}_{j}) + f_{0}(X_{j+1}) (r^{0,1}_{j}+r^{0,0}_{j})}, \right. \no \\
&& \frac{f_{1}(X_{j+1}) r^{1,0}_{j}}{f_{1}(X_{j+1})(r^{1,1}_{j} + r^{1,0}_{j}) + f_{0}(X_{j+1}) (r^{0,1}_{j}+r^{0,0}_{j})}, \no \\
&& \left. \frac{f_{0}(X_{j+1}) r^{0,1}_{j}}{f_{1}(X_{j+1})(r^{1,1}_{j} + r^{1,0}_{j}) + f_{0}(X_{j+1}) (r^{0,1}_{j}+r^{0,0}_{j})} \right) \no\\
&&\left( q_{1, j} f_{1}(x_{j+1}) + (1-q_{1, j}) f_{0}(x_{j+1}) \right) \text{d} x_{j+1} \no,
\end{eqnarray}
hence, $V_{j}^{T}(\mathcal{G}_{j})$ can be written as $V_{j}^{T}(r_{j}^{1,1}, r_{j}^{1,0}, r_{j}^{0,1})$.
\end{proof}

Therefore, we have
\begin{eqnarray}
V_{j}^{T}(r_{j}^{1,1}, r_{j}^{1,0}, r_{j}^{0,1}) &=& \min \left\{ 1 - \max\left\{q_{1, j}, q_{2, j}\right\}, \right. \no\\
&& \left. c + \mathbb{E}\left[ V_{j+1}^{T}(r_{j+1}^{1,1}, r_{j+1}^{1,0}, r_{j+1}^{0,1}) \Big| r_{j}^{1,1}, r_{j}^{1,0}, r_{j}^{0,1} \right] \right\}. \no
\end{eqnarray}

\begin{lem}
$V_{j}^{T}(r_{j}^{1,1}, r_{j}^{1,0}, r_{j}^{0,1})$ is a multi-variate concave function of $(r_{j}^{1,1}, r_{j}^{1,0}, r_{j}^{0,1})$.
\end{lem}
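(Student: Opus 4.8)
The plan is to prove concavity by backward induction on $j$, descending from $j=T$ to $j=0$; the crux will be to exhibit the one-step continuation cost as a \emph{perspective transform} of the already-concave $V_{j+1}^T$, so that standard operations (a maximum of affine functions is convex, a minimum of concave functions is concave, and integration against a nonnegative measure preserves concavity) close the argument.

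For the base case $j=T$, I would note that $q_{1,T}=r_T^{1,1}+r_T^{1,0}$ and $q_{2,T}=r_T^{1,1}+r_T^{0,1}$ are affine in $(r_T^{1,1},r_T^{1,0},r_T^{0,1})$, so $\max\{q_{1,T},q_{2,T}\}$ is convex and $V_T^T=1-\max\{q_{1,T},q_{2,T}\}$ is concave. For the inductive step, assuming $V_{j+1}^T$ concave, the stopping term $1-\max\{q_{1,j},q_{2,j}\}$ is concave by the same reasoning, and since the pointwise minimum of two concave functions is concave, it remains only to show that the continuation term $c+\mathbb{E}[V_{j+1}^T(\mathbf{r}_{j+1})\mid \mathbf{r}_j]$ is concave in $\mathbf{r}_j=(r_j^{1,1},r_j^{1,0},r_j^{0,1})$ (with $r_j^{0,0}=1-r_j^{1,1}-r_j^{1,0}-r_j^{0,1}$). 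Writing the conditional expectation as an integral over the next sample against its marginal density $D(x):=f_1(x)(r_j^{1,1}+r_j^{1,0})+f_0(x)(r_j^{0,1}+r_j^{0,0})$ and substituting the update rule gives
\begin{eqnarray}
\mathbb{E}\!\left[V_{j+1}^T(\mathbf{r}_{j+1})\mid \mathbf{r}_j\right]=\int V_{j+1}^T\!\left(\frac{f_1(x)r_j^{1,1}}{D(x)},\frac{f_1(x)r_j^{1,0}}{D(x)},\frac{f_0(x)r_j^{0,1}}{D(x)}\right)D(x)\,\text{d}x. \no
\end{eqnarray}

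The key step, and the main obstacle, will be to recognize that for each fixed $x$ the integrand has exactly the form $t\,V_{j+1}^T(\mathbf{a}/t)$ with $t=D(x)>0$ and $\mathbf{a}=(f_1(x)r_j^{1,1},f_1(x)r_j^{1,0},f_0(x)r_j^{0,1})$, where both $t$ and $\mathbf{a}$ are affine in $\mathbf{r}_j$. Multiplying the Bayesian posterior argument by the marginal likelihood $D(x)$ is precisely what converts the nonlinear posterior map into the perspective of $V_{j+1}^T$ evaluated at affine coordinates. Since the perspective of a concave function is jointly concave in $(\mathbf{a},t)$ on $t>0$, and concavity is preserved under the affine substitution $\mathbf{r}_j\mapsto(\mathbf{a},t)$, the integrand is concave in $\mathbf{r}_j$ for every $x$ (the degenerate locus $D(x)=0$ lies off the common support of $f_0,f_1$, contributes $0$, and causes no difficulty). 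Integrating against the nonnegative measure $\text{d}x$ then yields a concave continuation term, which finishes the induction. Once the finite-horizon concavity is in hand, the concavity of the infinite-horizon $v$ claimed in Lemma \ref{lem:refine} follows by letting $T\to\infty$, since a pointwise limit of concave functions is concave, with the boundary values $v(1,0)=0$ and $v(0,0)=1$ read off directly from the stopping cost.
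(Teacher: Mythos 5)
Your proof is correct and follows the same skeleton as the paper's: backward induction from $j=T$, with the stopping cost concave as a minimum of affine functions, the pointwise minimum of concave functions concave, and everything therefore reducing to concavity of the continuation term. The only difference is how that last step is discharged. The paper verifies it by hand: it takes two points $\mathbf{r}_{j,1},\mathbf{r}_{j,2}$, forms $\mathbf{r}_{j,3}=\lambda\mathbf{r}_{j,1}+(1-\lambda)\mathbf{r}_{j,2}$, introduces the reweighting $\mu=\lambda f(x\,|\,\mathbf{r}_{j,1})/\bigl(\lambda f(x\,|\,\mathbf{r}_{j,1})+(1-\lambda)f(x\,|\,\mathbf{r}_{j,2})\bigr)$, checks that the posterior under $\mathbf{r}_{j,3}$ is the $\mu$-combination of the two posteriors while the marginal under $\mathbf{r}_{j,3}$ is the $\lambda$-combination of the two marginals, and then applies the concavity of $V_{j+1}^{T}$ pointwise under the integral. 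Your observation that the integrand $D(x)\,V_{j+1}^{T}(\mathbf{a}(x)/D(x))$ is the perspective of $V_{j+1}^{T}$ composed with an affine map of $\mathbf{r}_{j}$ is exactly this computation in disguise: the paper's $\mu$ is precisely the coefficient that appears in the standard proof that the perspective of a concave function is jointly concave. So the two arguments are mathematically identical; yours is shorter and leans on a citable textbook fact from convex analysis, while the paper's is self-contained. Your treatment of the degenerate locus $D(x)=0$ and the passage to the infinite horizon by a monotone pointwise limit likewise match what the paper does.
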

\begin{proof}
For $j=0$, $V_{0}^{T}(r_{0}^{1,1}, r_{0}^{1,0}, r_{0}^{0,1})$ is defined on the set
\begin{eqnarray}
\mathcal{R}_{0} &=& \left\{ (r_{0}^{1,1}, r_{0}^{1,0}, r_{0}^{0,1}): 0 \leq r_{0}^{1,1} \leq 1, 0 \leq r_{0}^{0,1}=r_{0}^{1,0} \leq 1, \right. \no\\
&& \left. 0 \leq r_{0}^{1,1} + r_{0}^{1,0} + r_{0}^{0,1} \leq 1 \right\} \subset \mathbb{R}^{2}  \subset \mathbb{R}^{3}. \no
\end{eqnarray}
For $1\leq j \leq T$, $V_{j}^{T}(r_{j}^{1,1}, r_{j}^{1,0}, r_{j}^{0,1})$ is defined on the set
\begin{eqnarray}
\mathcal{R}_{j} = \left\{ (r_{j}^{1,1}, r_{j}^{1,0}, r_{j}^{0,1}) : 0 \leq r_{j}^{1,1} \leq 1, 0 \leq r_{j}^{0,1} \leq 1, \right. \no\\
\left. 0 \leq r_{j}^{1,0} \leq 1, 0 \leq r_{j}^{1,1} + r_{j}^{1,0} + r_{j}^{0,1} \leq 1 \right\} \subset \mathbb{R}^{3}. \no
\end{eqnarray}
Therefore, the domains of these functions are convex sets for all $j$. Hence the definition of multi-variate concave function applies.

It is easy to verify that
\begin{eqnarray}
V_{T}^{T}(r_{T}^{1,1}, r_{T}^{1,0}, r_{T}^{0,1}) &=& 1 - \max\left\{q_{1, T}, q_{2, T} \right\} \no\\
&=& \min \left\{1-(r_{T}^{1,1}+r_{T}^{1,0}), 1-(r_{T}^{1,1}+ r_{T}^{0,1})\right\} \no
\end{eqnarray}
is a concave function of $(r_{T}^{1,1}, r_{T}^{1,0})$ since it is the minimal of two linear functions. Assuming $V_{j+1}^{T}(r_{j+1}^{1,1}, r_{j+1}^{1,0}, r_{j+1}^{0,1})$ is a concave function of $(r_{j+1}^{1,1}, r_{j+1}^{1,0}, r_{j+1}^{0,1})$, we are going to show that $V_{j}^{T}(r_{j}^{1,1}, r_{j}^{1,0}, r_{j}^{0,1})$ is a concave function.

It is easy to see that
$1 - \max\left\{q_{1, j}, q_{2, j} \right\}$
is a concave function of $(r_{j}^{1,1}, r_{j}^{1,0}, r_{j}^{0,1})$. Therefore, we only need to show that  $\mathbb{E}\left[ V_{j+1}^{T}(r_{j+1}^{1,1}, r_{j+1}^{1,0}, r_{j+1}^{0,1})\Big|r_{j}^{1,1}, r_{j}^{1,0}, r_{j}^{0,1} \right]$ is a concave function. Let $\mathbf{r}_{j,1} = (r_{j,1}^{1,1}, r_{j,1}^{1,0}, r_{j,1}^{0,1})$ and $\mathbf{r}_{j,2} = (r_{j,2}^{1,1}, r_{j,2}^{1,0}, r_{j,2}^{0,1})$ be two arbitrary points in $\mathcal{R}_{j}$, and let $0 \leq \lambda \leq 1$, we have
\begin{eqnarray}
&& \hspace{-8mm} \lambda \mathbb{E}\left[ V_{j+1}^{T}(\mathbf{r}_{j+1,1})\Big|\mathbf{r}_{j,1} \right] + (1 - \lambda) \mathbb{E}\left[ V_{j+1}^{T}(\mathbf{r}_{j+1,2})\Big|\mathbf{r}_{j,2} \right] \no\\
&=& \int \left[ \lambda V_{j+1}^{T}(\mathbf{r}_{j+1,1})f(x_{j+1}|\mathbf{r}_{j,1}) \right. \no\\
&&\quad \left.+ (1-\lambda) V_{j+1}^{T}(\mathbf{r}_{j+1,2})f(x_{j+1}|\mathbf{r}_{j,2}) \right] \text{d} x_{j+1} \no\\
&=& \int \left[ \mu V_{j+1}^{T}(\mathbf{r}_{j+1,1}) + (1-\mu) V_{j+1}^{T}(\mathbf{r}_{j+1,2}) \right] \no\\
&&\quad \left[ \lambda f(x_{j+1}|\mathbf{r}_{j,1}) + (1-\lambda) f(x_{+1}|\mathbf{r}_{j,2}) \right] \text{d} x_{j+1} \no\\
&\leq& \int V_{j+1}^{T}( \mu \mathbf{r}_{j+1,1} + (1-\mu)\mathbf{r}_{j+1,2}) \no\\
&&\quad \left[ \lambda f(x_{j+1}|\mathbf{r}_{j,1}) + (1-\lambda) f(x_{j+1}|\mathbf{r}_{j,1}) \right] \text{d} x_{j+1}, \label{eq:concave}
\end{eqnarray}
where
\begin{eqnarray}
\mu = \frac{\lambda f(x_{j+1}|\mathbf{r}_{j,1})}{\lambda f(x_{j+1}|\mathbf{r}_{j,1}) + (1-\lambda) f(x_{j+1}|\mathbf{r}_{j,2})},
\end{eqnarray}
and where we have used the concavity of $V_{j+1}^{T}$ in writing the inequality.

Now, on defining $\mathbf{r}_{j,3} = \lambda \mathbf{r}_{j,1} + (1-\lambda) \mathbf{r}_{j,2}$, we consider each element in $\mathbf{r}_{j,3}$ separately. First, we have
\begin{eqnarray}
r_{j+1,3}^{1,1} &=& \frac{r_{j,3}^{1,1} f_{1}(x_{j+1})} {(r_{j,3}^{1,1}+r_{j,3}^{1,0})f_{1}(x_{j+1}) + (r_{j,3}^{0,1}+r_{j,3}^{0,0})f_{0}(x_{j+1})} \no\\
&\overset{(a)} =& \frac{(\lambda r_{j,1}^{1,1} + (1-\lambda) r_{j,2}^{1,1} )f_{1}(x_{j+1})} {\lambda  f(x_{j+1}| \mathbf{r}_{j,1}) + (1-\lambda)f(x_{j+1}|\mathbf{r}_{j,2})} \no\\
&\overset{(b)}=& \mu r_{j+1,1}^{1,1} + (1-\mu) r_{j+1,2}^{1,1},
\end{eqnarray}
where, (a) is true because
\begin{eqnarray}
&& \hspace{-8mm}  (r_{j,3}^{1,1}+r_{j,3}^{1,0})f_{1}(x_{j+1}) + (r_{j,3}^{0,1}+r_{j,3}^{0,0})f_{0}(x_{j+1}) \no\\
&=& \lambda [(r_{j,1}^{1,1}+r_{j,1}^{1,0})f_{1}(x_{j+1}) + (r_{j,1}^{0,1}+r_{j,1}^{0,0})f_{0}(x_{j+1})] + \no\\
&& (1-\lambda)[(r_{j,2}^{1,1}+r_{j,2}^{1,0})f_{1}(x_{j+1}) + (r_{j,2}^{0,1}+r_{j,2}^{0,0})f_{0}(x_{j+1})] \no\\
&=& \lambda  f(x_{j+1}| \mathbf{r}_{j,1}) + (1-\lambda)f(x_{j+1}|\mathbf{r}_{j,2}), \label{eq:fr3}
\end{eqnarray}
and (b) is true because
$$r_{j,i}^{1,1} f_{1}(x_{j+1}) = r_{j+1,i}^{1,1} f(x_{j+1}|\mathbf{r}_{j, i}), \text{ for } i = 1, 2.$$
From \eqref{eq:fr3}, we have
$$f(x_{j+1}|\mathbf{r}_{j,3})= \lambda f(x_{j+1}|\mathbf{r}_{j,1}) + (1-\lambda) f(x_{j+1}|\mathbf{r}_{j,2}).$$

Similarly, we can obtain
\begin{eqnarray}
r_{j+1,3}^{0,1} &=& \mu r_{j+1,1}^{0,1} + (1-\mu) r_{j+1,2}^{0,1}, \no \\
r_{j+1,3}^{1,0} &=& \mu r_{j+1,1}^{1,0} + (1-\mu) r_{j+1,2}^{1,0}, \no
\end{eqnarray}
and hence
$$\mathbf{r}_{j+1,3} = \mu \mathbf{r}_{j+1,1} + (1-\mu) \mathbf{r}_{j+1,1}. $$

By \eqref{eq:concave} we have
\begin{eqnarray}
\lambda \mathbb{E}\left[ V_{j+1}^{T}(\mathbf{r}_{j+1,1}) \Big| \mathbf{r}_{j,1} \right] + (1 - \lambda) \mathbb{E}\left[ V_{j+1}^{T}(\mathbf{r}_{j+1,2})\Big|\mathbf{r}_{j,2} \right] \leq \mathbb{E}\left[ V_{j+1}^{T}(\mathbf{r}_{j+1,3})\Big|\mathbf{r}_{j,3} \right],\no
\end{eqnarray}
which means that $V_{j}^{T}(r_{j,2}^{1,1}, r_{j,2}^{1,0}, r_{j,2}^{0,1})$ is a concave function of $(r_{j,2}^{1,1}, r_{j,2}^{1,0}, r_{j,2}^{0,1})$.
\end{proof}

Notice that
$$ V_{j}^{T} \geq V_{j}^{T+1}$$
since any stopping time in $[0, T]$ is also in $[0, T+1]$. 
Since $V_{j}^{T}$ is lower bounded by $0$, and $\mathbf{r}_{j}$ is a homogenous Markov chain, the following limit is well defined
$$ V(r_{j}^{1,1}, r_{j}^{1,0}, r_{j}^{0,1}) := \lim_{T \rightarrow \infty} V_{j}^{T}(r_{j}^{1,1}, r_{j}^{1,0}, r_{j}^{0,1}).$$
By the monotone convergence theorem, the cost-to-go function of the infinite horizon problem can be written as
\begin{eqnarray}
V(\mathbf{r}_{j}) = \min \left\{ 1 - \max\{q_{1, j}, q_{2, j} \}, c + \mathbb{E}\left[ V(\mathbf{r}_{j+1}) \Big| \mathbf{r}_{j} \right] \right\}. \no
\end{eqnarray}
By the optimal stopping theory, the optimal stopping time is given as
\begin{eqnarray}
\tau_{1} = \inf\left\{ j \geq 0: 1 - \max\left\{q_{1, j}, q_{2, j} \right\} \leq  c + \mathbb{E}\left[ V(\mathbf{r}_{j+1}) \Big| \mathbf{r}_{j} \right] \right\}. \no
\end{eqnarray}
Since $V$ preserves the concavity of $V_{j}^{T}$,
$$v(p^{1,1}_{\tau_{0}}, p^{mix}_{\tau_{0}}) = V(r_{0}^{1,1}, r_{0}^{0,1}, r_{0}^{1,0})$$
is a concave function.

\section{Proof of Theorem \ref{thm:scan1} and Lemma \ref{lem:scan}} \label{app:scan}
We first consider a finite horizon scanning stage, that is the observer must enter the refinement stage before some time $T$. Hence $\tau_{0} \in [0, T]$. Throughout this proof, the refinement procedure has an infinite horizon. We first show that the problem \eqref{eq:u0} can be solved by the dynamic programming.

\begin{lem}
Let
\begin{eqnarray}
U^{T}_{T}(\mathcal{F}_{T}) &=& v(p^{1,1}_{T}, p^{mix}_{T}), \no\\
U^{T}_{k}(\mathcal{F}_{k}) &=& \min \{ v(p^{1,1}_{k}, p^{mix}_{k}), c+\inf_{\phi_{k}}\mathbb{E}[U^{T}_{k+1}(\mathcal{F}_{k+1})|\mathcal{F}_{k}, \phi_{k}]\}, \no
\end{eqnarray}
Then
$$U^{T}_{0} = u^{T}_{0},$$
where
\begin{eqnarray}
u^{T}_{0} = \inf_{\tau_{0}, \phiv} \mathbb{E}[ c \tau_{0} + v(p^{1,1}_{\tau_{0}}, p^{mix}_{\tau_{0}})].
\end{eqnarray}
is the finite horizon cost function for the scanning stage with $\tau_{0} \in [0, T]$.
\end{lem}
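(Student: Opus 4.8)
The plan is to prove the sharper statement that, for every $k \in \{0, 1, \ldots, T\}$, the Bellman iterate $U^T_k$ coincides with the conditional cost-to-go
\begin{eqnarray}
u^T_k(\mathcal{F}_k) := \essinf_{k \leq \tau_0 \leq T, \phiv} \mathbb{E}\left[ c(\tau_0 - k) + v_{\tau_0} \,\Big|\, \mathcal{F}_k \right], \no
\end{eqnarray}
where $v_j := v\left(p^{1,1}_j, p^{mix}_j\right)$ and the essential infimum is over all $\{\mathcal{F}_j\}$-stopping times $\tau_0$ with $k \leq \tau_0 \leq T$ and all switching rules $\phiv$. Setting $k=0$ then yields $U^T_0 = u^T_0$, since $\mathcal{F}_0$ is trivial and $c(\tau_0 - 0) = c\tau_0$. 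The argument is a backward induction on $k$, and it rests on the fact, established through the update recursion in Section \ref{sec:mixed_opt}, that $\left(p^{1,1}_k, p^{mix}_k\right)$ is a controlled Markov chain whose one-step conditional law given $\mathcal{F}_k$ depends only on the current posterior state and the action $\phi_k \in \{0,1\}$; this guarantees the continuation values are well-defined functions of the current state and action.

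First I would dispose of the base case $k = T$: the constraint $\tau_0 \leq T$ forces $\tau_0 = T$, so the only admissible conditional cost is $v_T = U^T_T$. For the inductive step, I assume $u^T_{k+1} = U^T_{k+1}$ and prove the two inequalities separately. To show $u^T_k \geq U^T_k$, I fix an arbitrary admissible pair $(\tau_0, \phiv)$ and split the sample space along the $\mathcal{F}_k$-measurable events $\{\tau_0 = k\}$ and $\{\tau_0 > k\}$. On $\{\tau_0 = k\}$ the conditional cost equals $v_k \geq U^T_k$. On $\{\tau_0 > k\}$ the tower property gives
\begin{eqnarray}
\mathbb{E}\left[ c(\tau_0 - k) + v_{\tau_0} \,\big|\, \mathcal{F}_k \right] = c + \mathbb{E}\left[ \mathbb{E}\left[ c(\tau_0 - k - 1) + v_{\tau_0} \,\big|\, \mathcal{F}_{k+1} \right] \,\Big|\, \mathcal{F}_k, \phi_k \right], \no
\end{eqnarray}
whose inner expectation is at least $U^T_{k+1}$ by the induction hypothesis; lower bounding the outcome over the two possible actions $\phi_k \in \{0,1\}$ produces $c + \inf_{\phi_k}\mathbb{E}[U^T_{k+1} \,|\, \mathcal{F}_k, \phi_k]$. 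As $U^T_k$ is by definition the minimum of $v_k$ and this continuation value, both branches dominate $U^T_k$; taking the essential infimum over $(\tau_0, \phiv)$ establishes $u^T_k \geq U^T_k$.

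For the reverse inequality I would construct a policy attaining $U^T_k$. Since $\phi_j$ ranges over the two-point set $\{0,1\}$, the infimum defining the continuation value is the minimum of two $\mathcal{F}_j$-measurable random variables and is attained by an $\mathcal{F}_j$-measurable selector $\phi_j^*$. I then put
\begin{eqnarray}
\tau_0^* = \inf\left\{ j \geq k : v_j \leq c + \min_{\phi_j}\mathbb{E}\left[ U^T_{j+1} \,\big|\, \mathcal{F}_j, \phi_j \right] \right\} \wedge T, \no
\end{eqnarray}
and continue with $\phi_j^*$ for $j < \tau_0^*$. A downward induction, invoking the induction hypothesis at level $k+1$, shows that this policy makes the defining inequality of $u^T_k$ tight, so $u^T_k \leq U^T_k$. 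Combining the two inequalities gives $u^T_k = U^T_k$ for all $k$, and the case $k = 0$ is precisely the claim.

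The only delicate points are bookkeeping rather than analytic: the measurable selection of the minimizing $\phi_j^*$, which is immediate here because the action set is finite, and the repeated appeal to the tower property together with the controlled-Markov transition law to express each continuation cost as a function of the current posterior and the chosen action. I expect no genuine obstacle, because the finite horizon and the finite per-step action set make every infimum attainable and obviate the essential-infimum limiting arguments needed in Appendix \ref{app:multi_stopping}.
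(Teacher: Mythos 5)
Your proposal is correct and follows essentially the same route as the paper: both define the conditional cost-to-go $u^T_k$ (the paper calls it $J^T_k$), run a backward induction from $k=T$, and prove the hard inequality $u^T_k \geq U^T_k$ by splitting on $\{\tau_0=k\}$ versus $\{\tau_0>k\}$ and applying the tower property together with the induction hypothesis and the infimum over $\phi_k$. The only difference is cosmetic: you explicitly construct the achieving policy for the reverse inequality $u^T_k \leq U^T_k$, whereas the paper simply notes that $U^T_k$ is the cost of an admissible policy and hence dominates the infimum.
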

\begin{proof}
Define
\begin{eqnarray}
J^{T}_{k}(\mathcal{F}_{k}) &=& \essinf_{\{\tau_{0} \geq k\}, \phiv_{k}^{T}} \mathbb{E}[ c (\tau_{0} - k)+ v(p^{1,1}_{\tau_{0}}, p^{mix}_{\tau_{0}}) | \mathcal{F}_{k} ], \no
\end{eqnarray}
where $\phiv_{k}^{T} = \{\phi_{k}, \phi_{k+1}, \ldots, \phi_{\tau_{0}}\}$. Note that $J^{T}_{0} = u^{T}_{0}$.

For $k=T$, the scanning stage has to stop immediately, hence $\tau_{0} = T$,
$$ J^{T}_{T}(\mathcal{F}_{T}) = v(p^{1,1}_{T}, p^{mix}_{T}) = U^{T}_{T}(\mathcal{F}_{T}). $$
Assuming that $J^{T}_{k+1} = U^{T}_{k+1}$, we will show $J^{T}_{k} = U^{T}_{k}$.

Since $J^{T}_{k}$ is defined as the minimal cost at the $k^{th}$ time slot, we immediately have $J^{T}_{k} \leq U^{T}_{k}$. In the following, we show that $U^{T}_{k} \leq J^{T}_{k}$. On the event $\{ \tau_{0} = k \}$, we have
\begin{eqnarray}
\mathbb{E}[ c (\tau_{0} - k)+ v(p^{1,1}_{\tau_{0}}, p^{mix}_{\tau_{0}}) | \mathcal{F}_{k} ] = v(p^{1,1}_{k}, p^{mix}_{k}). \label{eq:W1}
\end{eqnarray}
On the event $\{ \tau_{0} \geq k+1 \}$, we have
\begin{eqnarray}
\mathbb{E}[ c (\tau_{0} - k)+ v(p^{1,1}_{\tau_{0}}, p^{mix}_{\tau_{0}}) | \mathcal{F}_{k} ]
&=& c + \mathbb{E}[ c (\tau_{0} - (k+1) )+ v(p^{1,1}_{\tau_{0}}, p^{mix}_{\tau_{0}}) | \mathcal{F}_{k} ] \no\\
&=& c + \mathbb{E}\left[ \mathbb{E}[ c (\tau_{0} - (k+1) )+ v(p^{1,1}_{\tau_{0}}, p^{mix}_{\tau_{0}}) | \mathcal{F}_{k+1} ] \Big| \mathcal{F}_{k} \right] \no\\
&\geq& c + \mathbb{E}\left[ J^{T}_{k+1}(\mathcal{F}_{k+1}) | \mathcal{F}_{k} \right] \no\\
&=& c + \mathbb{E}\left[ U^{T}_{k+1}(\mathcal{F}_{k+1}) | \mathcal{F}_{k} \right] \no\\
&\overset{(a)} \geq& c + \inf_{\phi_{k}} \mathbb{E}\left[ U^{T}_{k+1}(\mathcal{F}_{k+1}) | \mathcal{F}_{k}, \phi_{k} \right], \label{eq:W2}
\end{eqnarray}
in which, $(a)$ is true because
\begin{eqnarray}
\mathbb{E}\left[ U^{T}_{k+1}(\mathcal{F}_{k+1}) | \mathcal{F}_{k} \right] &=& \mathbb{E}\left[ U^{T}_{k+1}(\mathcal{F}_{k+1}) | \mathcal{F}_{k}, \phi_{k} = 1 \right] P(\phi_{k} = 1|\mathcal{F}_{k}) \no \\
&+& \mathbb{E}\left[ U^{T}_{k+1}(\mathcal{F}_{k+1}) | \mathcal{F}_{k}, \phi_{k} = 0 \right] P(\phi_{k} = 0|\mathcal{F}_{k}) \no \\
&\geq& \min\left\{ \mathbb{E}\left[ U^{T}_{k+1}(\mathcal{F}_{k+1}) | \mathcal{F}_{k}, \phi_{k} = 1 \right], \mathbb{E}\left[ U^{T}_{k+1}(\mathcal{F}_{k+1}) | \mathcal{F}_{k}, \phi_{k} = 0 \right] \right\} \no\\
&=& \inf_{\phi_{k}} \mathbb{E}\left[ U^{T}_{k+1}(\mathcal{F}_{k+1}) | \mathcal{F}_{k}, \phi_{k} \right].
\end{eqnarray}
Since $\eqref{eq:W1}$ and $\eqref{eq:W2}$ hold for any $\phiv$, then we have
$$ J^{T}_{k}(\mathcal{F}_{k}) \geq \min\left\{v(p^{1,1}_{k}, p^{mix}_{k}), c + \inf_{\phi_{k}} \mathbb{E}\left[ U^{T}_{k+1}(\mathcal{F}_{k+1}) | \mathcal{F}_{k}, \phi_{k} \right]\right\} = U^{T}_{k}(\mathcal{F}_{k}). $$
Therefore, we have $J^{T}_{k} = U^{T}_{k}$, which further indicates $U^{T}_{0} = J^{T}_{0} = u^{T}_{0}$.
\end{proof}

\begin{lem}
For each $k$, the function $U_{k}^{T}(\mathcal{F}_{k})$ can be written as a function $U_{k}^{T}(p_{k}^{1,1}, p_{k}^{mix})$. 
\end{lem}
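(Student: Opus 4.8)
The plan is to proceed by backward induction on $k$, running from $k=T$ down to $k=0$, exactly paralleling the argument used for $V_j^T$ in Appendix \ref{app:refine}. The claim to be propagated is that at each stage the value function depends on the filtration $\mathcal{F}_k$ only through the two-dimensional statistic $(p_k^{1,1}, p_k^{mix})$; note that $p_k^{0,0}=1-p_k^{1,1}-p_k^{mix}$, so no third coordinate is needed. For the base case $k=T$ the scanning stage must stop immediately, so $U_T^T(\mathcal{F}_T)=v(p_T^{1,1},p_T^{mix})$ is manifestly a function of $(p_T^{1,1},p_T^{mix})$ alone.

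For the inductive step I would assume $U_{k+1}^T(\mathcal{F}_{k+1})=U_{k+1}^T(p_{k+1}^{1,1},p_{k+1}^{mix})$ and inspect the recursion $U_k^T(\mathcal{F}_k)=\min\{v(p_k^{1,1},p_k^{mix}),\,c+\inf_{\phi_k}\mathbb{E}[U_{k+1}^T(\mathcal{F}_{k+1})|\mathcal{F}_k,\phi_k]\}$. The first term already has the desired form, so it suffices to show each conditional expectation is a function of $(p_k^{1,1},p_k^{mix})$. Here I would invoke the two facts established earlier: the Markov update of $\mathbf{p}_{k+1}$ in terms of $\mathbf{p}_k$, $Z_{k+1}$ and $\phi_k$, and the identification of the one-step predictive density of $Z_{k+1}$ given $\mathcal{F}_k$. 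When $\phi_k=0$ the posterior probabilities form the genuine posterior over the three hypotheses, so the predictive density of $Z_{k+1}$ is the mixture $g(\mathbf{p}_k,z)=p_k^{0,0}g_0(z)+p_k^{mix}g_1(z)+p_k^{1,1}g_2(z)$; substituting the update maps gives
\begin{eqnarray}
\mathbb{E}[U_{k+1}^T|\mathcal{F}_k,\phi_k=0]=\int U_{k+1}^T\!\left(\frac{p_k^{1,1}g_2(z)}{g(\mathbf{p}_k,z)},\frac{p_k^{mix}g_1(z)}{g(\mathbf{p}_k,z)}\right)g(\mathbf{p}_k,z)\,dz,\no
\end{eqnarray}
which depends on $(p_k^{1,1},p_k^{mix})$ only. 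When $\phi_k=1$ the observer switches to a fresh independent pair, so the predictive density is $g(\mathbf{p}_0,z)$ and the corresponding integral reduces to a constant independent of $\mathcal{F}_k$ (in fact equal to $\mathbb{E}[U_{k+1}^T|\mathcal{F}_0,\phi_0=1]$), which is trivially of the required form. Taking the infimum over $\phi_k\in\{0,1\}$ and then the outer minimum with $v(p_k^{1,1},p_k^{mix})$ preserves the property, closing the induction.

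The main obstacle is the clean identification of the conditional law of $Z_{k+1}$ given $\mathcal{F}_k$ with the predictive mixture $g(\mathbf{p}_k,\cdot)$ (respectively $g(\mathbf{p}_0,\cdot)$ in the switching case). This is precisely where the sufficiency of $\mathbf{p}_k$ is exploited: since the samples from a fixed pair of sequences are conditionally i.i.d.\ given which of the three hypotheses holds, and $\mathbf{p}_k$ is by definition the posterior over those hypotheses, the mixture representation follows; the switching case additionally requires that the new pair is independent of $\mathcal{F}_k$ with prior $\mathbf{p}_0$. Everything else — substituting the update formulas into $U_{k+1}^T$ and verifying measurability of the integrand — is routine and mirrors the corresponding computation carried out for $V_j^T$ in Appendix \ref{app:refine}.
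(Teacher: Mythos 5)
Your proposal is correct and follows essentially the same route as the paper's proof: backward induction from $k=T$, identifying the one-step predictive density of $Z_{k+1}$ as the mixture $g(\mathbf{p}_k,\cdot)$ (resp.\ $g(\mathbf{p}_0,\cdot)$ when switching), substituting the Markov update maps into $U_{k+1}^T$, and noting the switching branch collapses to a constant. The displayed integral for the $\phi_k=0$ case is exactly the paper's $\Phi_{k,c}^T(p_k^{1,1},p_k^{mix})$, so nothing is missing.
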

\begin{proof}
Clearly, $U_{T}^{T}(\mathcal{F}_{T}) = v(p^{1,1}_{T}, p^{mix}_{T})$ is a function of $(p_{T}^{1,1}, p_{T}^{mix})$.
Assuming that $U_{k+1}^{T}(\mathcal{F}_{k+1})$ can be written as $U_{k+1}^{T}(p_{k+1}^{1, 1}, p_{k+1}^{mix})$, we will show that $U_{k}^{T}(\mathcal{F}_{k})$ can be written as $U_{k}^{T}(p_{k}^{1, 1}, p_{k}^{mix})$. Notice that
\begin{eqnarray}
\mathbb{E}\left[ U_{k+1}^{T}(\mathcal{F}_{k+1})\Big|\mathcal{F}_{k}, \phi_{k}=0 \right] &=& \int U_{k+1}^{T}( p_{k+1}^{1, 1}, p_{k+1}^{mix} ) f_{c}( z_{k+1} | \mathcal{F}_{k}) \text{d} z_{k+1}, \no \\
\mathbb{E}\left[ U_{k+1}^{T}(\mathcal{F}_{k+1})\Big|\mathcal{F}_{k}, \phi_{k}=1 \right] &=& \int U_{k+1}^{T}( p_{k+1}^{1, 1}, p_{k+1}^{mix} ) f_{s}( z_{k+1} | \mathcal{F}_{k}) \text{d} z_{k+1}, \no
\end{eqnarray}
where $f_{c}( z_{k+1} | \mathcal{F}_{k})$ and $f_{s}( z_{k+1} | \mathcal{F}_{k})$ are the conditional density of $z_{k+1}$ if we decide to stay in the same sequence and to switch to another sequence, respectively.
We have
\begin{eqnarray}
&&f_{c}( z_{k+1} | \mathcal{F}_{k}) = p_{k}^{1, 1} g_{2}(z_{k+1}) + p_{k}^{mix} g_{1}(z_{k+1}) + p_{k}^{0, 0} g_{0}(z_{k+1}), \no\\
&&f_{s}( z_{k+1} | \mathcal{F}_{k}) = p_{0}^{1, 1} g_{2}(z_{k+1}) + p_{0}^{mix} g_{1}(z_{k+1}) + p_{0}^{0, 0} g_{0}(z_{k+1}). \no
\end{eqnarray}
Therefore,
\begin{eqnarray}
\mathbb{E}\left[ U_{k+1}^{T}(\mathcal{F}_{k+1})\Big|\mathcal{F}_{k}, \phi_{k}=0 \right] &=& \int U_{k+1}^{T}( p_{k+1}^{1, 1}, p_{k+1}^{mix} ) f_{c}( z_{k+1} | \mathcal{F}_{k}) \text{d} z_{k+1} \no\\
&=& \int U_{k+1}^{T}\left( \frac{p_{k}^{1, 1}g_{2}(z_{k+1})}{p_{k}^{1, 1} g_{2}(z_{k+1}) + p_{k}^{mix} g_{1}(z_{k+1}) + p_{k}^{0,0} g_{0}(z_{k+1})}, \right.\no\\
&& \left. \frac{p_{k}^{mix} g_{1}(z_{k+1})}{p_{k}^{1, 1} g_{2}(z_{k+1}) + p_{k}^{mix} g_{1}(z_{k+1}) + p_{k}^{0, 0}g_{0}(z_{k+1})} \right) \no\\
&&( p_{k}^{1, 1} g_{2}(z_{k+1}) + p_{k}^{mix} g_{1}(z_{k+1}) + p_{k}^{0, 0} g_{0}(z_{k+1}) ) \text{d} z_{k+1} \no,
\end{eqnarray}
which is a function of $p_{k}^{1, 1}$ and $p_{k}^{mix}$, and we use $\Phi_{k,c}^{T}(p_{k}^{1, 1}, p_{k}^{mix})$ to denote this quantity.
\begin{eqnarray}
\mathbb{E}\left[ U_{k+1}^{T}(\mathcal{F}_{k+1})\Big|\mathcal{F}_{k}, \phi_{k}=1 \right] &=& \int U_{k+1}^{T}( p_{k+1}^{1, 1}, p_{k+1}^{mix} ) f_{s}( z_{k+1} | \mathcal{F}_{k}) \text{d} z_{k+1} \no\\
&=& \int U_{k+1}^{T}\left( \frac{p_{0}^{1, 1}g_{2}(z_{k+1})}{p_{0}^{1, 1} g_{2}(z_{k+1}) + p_{0}^{mix} g_{1}(z_{k+1}) + p_{0}^{0, 0} g_{0}(z_{k+1})}, \right.\no\\
&& \left. \frac{p_{0}^{mix}g_{1}(z_{k+1})}{p_{0}^{1, 1} g_{2}(z_{k+1}) + p_{0}^{mix} g_{1}(z_{k+1}) + p_{0}^{0,0} g_{0}(z_{k+1})} \right) \no\\
&&( p_{0}^{1, 1} g_{2}(z_{k+1}) + p_{0}^{mix} g_{1}(z_{k+1}) + p_{0}^{0,0} g_{0}(z_{k+1}) ) \text{d} z_{k+1} \no,
\end{eqnarray}
which is a constant, and we use $\Phi_{k,s}^{T}$ to denote this constant. Therefore, $U_{k}^{T}(\mathcal{F}_{k})$  can be written as $U_{k}^{T}(p_{k}^{1, 1}, p_{k}^{mix})$.
\end{proof}

Using this lemma, the recursive formula for $U_{k}^{T}(p_{k}^{1, 1}, p_{k}^{mix})$ can be written as
\begin{eqnarray}
U_{k}^{T}(p_{k}^{1, 1}, p_{k}^{mix}) &=& \min \left\{ v(p_{k}^{1, 1}, p_{k}^{mix}), c + \min \left\{ \Phi_{k,c}^{T}(p_{k}^{1, 1}, p_{k}^{mix}), \Phi_{k,s}^{T} \right\} \right\}.
\end{eqnarray}
The following lemma shows the concavity of $U_{k}^{T}(p_{k}^{1, 1}, p_{k}^{mix})$.
\begin{lem}
For any $k$, $U_{k}^{T}(p_{k}^{1,1}, p_{k}^{mix})$ is a bivariate concave function of $(p_{k}^{1,1}, p_{k}^{mix})$.
\end{lem}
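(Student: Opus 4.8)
The plan is to prove the claim by backward induction on $k$, running from $k=T$ down to $k=0$, and relying on three elementary facts: the pointwise minimum of concave functions is concave, adding the constant $c$ preserves concavity, and the conditional-expectation operator defining $\Phi_{k,c}^{T}$ maps concave functions to concave functions. The base case $k=T$ is immediate, since $U_{T}^{T}(p_{T}^{1,1}, p_{T}^{mix}) = v(p_{T}^{1,1}, p_{T}^{mix})$, which is concave on $\mathcal{P}$ by Lemma \ref{lem:refine}.

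For the inductive step, I would assume that $U_{k+1}^{T}(p_{k+1}^{1,1}, p_{k+1}^{mix})$ is concave and use the recursion
$$U_{k}^{T} = \min\left\{v, \, c + \min\{\Phi_{k,c}^{T}, \Phi_{k,s}^{T}\}\right\}.$$
Since $\Phi_{k,s}^{T}$ is a constant (hence affine, hence concave) and $v$ is concave by Lemma \ref{lem:refine}, the only nontrivial point is to establish that $\Phi_{k,c}^{T}(p_{k}^{1,1}, p_{k}^{mix})$ is concave. Granting this, the inner minimum $\min\{\Phi_{k,c}^{T}, \Phi_{k,s}^{T}\}$ is concave, adding $c$ keeps it concave, and the outer minimum with the concave $v$ yields the concavity of $U_{k}^{T}$, completing the induction.

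The core of the argument, and the main obstacle, is thus the concavity of the operator
$$\Phi_{k,c}^{T}(\mathbf{p}_{k}) = \int U_{k+1}^{T}(p_{k+1}^{1,1}, p_{k+1}^{mix})\, f_{c}(z_{k+1}|\mathcal{F}_{k}) \, dz_{k+1},$$
where $f_{c}(z_{k+1}|\mathcal{F}_{k}) = p_{k}^{1,1}g_{2} + p_{k}^{mix}g_{1} + p_{k}^{0,0}g_{0}$ and the Bayesian update is $p_{k+1}^{1,1} = p_{k}^{1,1}g_{2}/f_{c}$, $p_{k+1}^{mix} = p_{k}^{mix}g_{1}/f_{c}$. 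Here I would reproduce the change-of-measure reweighting already carried out in Appendix \ref{app:refine} for the refinement stage. Fix two points $\mathbf{p}_{k,1}, \mathbf{p}_{k,2} \in \mathcal{P}$ and $\lambda \in [0,1]$, set $\mathbf{p}_{k,3} = \lambda\mathbf{p}_{k,1} + (1-\lambda)\mathbf{p}_{k,2}$, and introduce the reweighting
$$\mu = \frac{\lambda f_{c}(z_{k+1}|\mathbf{p}_{k,1})}{\lambda f_{c}(z_{k+1}|\mathbf{p}_{k,1}) + (1-\lambda)f_{c}(z_{k+1}|\mathbf{p}_{k,2})}.$$
Because $f_{c}$ is affine in $\mathbf{p}_{k}$, one checks that $f_{c}(z_{k+1}|\mathbf{p}_{k,3}) = \lambda f_{c}(z_{k+1}|\mathbf{p}_{k,1}) + (1-\lambda)f_{c}(z_{k+1}|\mathbf{p}_{k,2})$, and that the posterior at $\mathbf{p}_{k,3}$ is exactly the $\mu$-convex combination of the posteriors at $\mathbf{p}_{k,1}$ and $\mathbf{p}_{k,2}$, i.e. $(p_{k+1,3}^{1,1}, p_{k+1,3}^{mix}) = \mu(p_{k+1,1}^{1,1}, p_{k+1,1}^{mix}) + (1-\mu)(p_{k+1,2}^{1,1}, p_{k+1,2}^{mix})$. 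Applying the induction hypothesis pointwise in $z_{k+1}$ and integrating then gives $\lambda\Phi_{k,c}^{T}(\mathbf{p}_{k,1}) + (1-\lambda)\Phi_{k,c}^{T}(\mathbf{p}_{k,2}) \leq \Phi_{k,c}^{T}(\mathbf{p}_{k,3})$, the desired concavity. The one point to check is that the switching branch does not interfere: since $\Phi_{k,s}^{T}$ is computed from the fixed reset prior $\mathbf{p}_{0}$ rather than from $\mathbf{p}_{k}$, it contributes only a constant, so the induction closes unchanged.
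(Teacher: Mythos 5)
Your proposal is correct and follows essentially the same route as the paper's own proof: backward induction with base case $U_{T}^{T}=v$, reduction of the inductive step to the concavity of $\Phi_{k,c}^{T}$ (since $\Phi_{k,s}^{T}$ is a constant and minima of concave functions are concave), and the change-of-measure reweighting with the factor $\mu$ exploiting that $f_{c}$ is affine in $\mathbf{p}_{k}$ and that the posterior at the $\lambda$-mixture is the $\mu$-mixture of the posteriors. No gaps; this matches the argument in the appendix.
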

\begin{proof}
First, $U_{k}^{T}(p_{k}^{1,1}, p_{k}^{mix})$ is defined on
\begin{eqnarray}
\mathcal{P}_{k} &=& \left\{(p_{k}^{1,1}, p_{k}^{mix}): 0 \leq p_{k}^{1,1} \leq 1, 0 \leq p_{k}^{mix} \leq 1, 0 \leq p_{k}^{1,1}+p_{k}^{mix} \leq 1 \right\}, \no
\end{eqnarray}
which is a convex set. Hence the definition of bivariate concave functions applies.

For $k = T$, we have that $U_{T}^{T}(p_{T}^{1,1}, p_{T}^{mix})$ is a concave function since  $v(p_{T}^{1,1}, p_{T}^{mix})$ is concave.
Assuming that $U_{k+1}^{T}(p_{k+1}^{1,1}, p_{k+1}^{mix})$ is a bivariate concave function, we will show that $U_{k}^{T}(p_{k}^{1,1}, p_{k}^{mix})$ is bivariate concave. To this end, it is sufficient to show that $\Phi_{k,c}^{T}(p_{k}^{1,1}, p_{k}^{mix})$ is concave.

Let $(p_{k,1}^{1,1}, p_{k,1}^{mix})$ and $(p_{k,2}^{1,1}, p_{k,2}^{mix})$ be two arbitrary points in $\mathcal{P}_{k}$. Let $0 \leq \lambda \leq 1$. We have
\begin{eqnarray}
&& \lambda \Phi_{k,c}^{T}(p_{k,1}^{1,1}, p_{k,1}^{mix}) + (1 - \lambda) \Phi_{k,c}^{T}(p_{k, 2}^{1,1}, p_{k, 2}^{mix}) \no\\
&=& \int \left[ \lambda U_{k+1}^{T}(p_{k+1, 1}^{1,1}, p_{k+1,1}^{mix})f_{c}(z_{k+1}|p_{k,1}^{1,1}, p_{k,1}^{mix}) \right. \no\\
&& \left. + (1-\lambda) U_{k+1}^{T}(p_{k+1, 2}^{1, 1}, p_{k+1,2}^{mix})f_{c}(z_{k+1}|p_{k,2}^{1,1}, p_{k,2}^{mix}) \right] \text{d} z_{k+1} \no\\
&=& \int \left[ \mu U_{k+1}^{T}(p_{k+1, 1}^{1,1}, p_{k+1,1}^{mix}) + (1-\mu) U_{k+1}^{T}(p_{k+1, 2}^{1, 1}, p_{k,2}^{mix}) \right] \no\\
&&\left[ \lambda f_c(z_{k+1}|p_{k,1}^{1,1}, p_{k,1}^{mix}) + (1-\lambda) f_c(z_{k+1}|p_{k,2}^{1,1}, p_{k,2}^{mix}) \right] \text{d} z_{k+1} \no\\
&\leq& \int U_{k+1}^{T}( \mu p_{k+1, 1}^{1,1} + (1-\mu)p_{k+1, 2}^{1,1}, \mu p_{k+1, 1}^{mix} + (1-\mu)p_{k+1, 2}^{mix}) \no\\
&&\left[ \lambda f_{c}(z_{k+1}|p_{k,1}^{1,1}, p_{k,1}^{mix}) + (1-\lambda) f_{c}(z_{k+1}|p_{k,2}^{1,1}, p_{k,2}^{mix}) \right] \text{d} z_{k+1}, \no
\end{eqnarray}
in which
\begin{eqnarray}
\mu = \frac{\lambda f(z_{k+1}|p_{k,1}^{1, 1}, p_{k,1}^{mix})}{\lambda f(z_{k+1}|p_{k,1}^{1,1}, p_{k,1}^{mix}) + (1-\lambda) f(z_{k+1}|p_{k,2}^{1,1}, p_{k,2}^{mix})} \no
\end{eqnarray}
and we have used the concavity of $U_{k+1}^{T}$ in writing the inequality.

Now, on defining
\begin{eqnarray}
&& p_{k,3}^{1,1} = \lambda p_{k,1}^{1,1} + (1-\lambda) p_{k,2}^{1,1}, \no\\
&& p_{k,3}^{mix} = \lambda p_{k,1}^{mix} + (1-\lambda) p_{k,2}^{mix}, \no\\
&& p_{k,3}^{0,0} = \lambda p_{k,1}^{0,0} + (1-\lambda) p_{k,2}^{0,0}, \no
\end{eqnarray}
we have
\begin{eqnarray}
p_{k+1,3}^{1,1} &=& \frac{p_{k,3}^{1,1} g_{2}(z_{k+1})} {p_{k,3}^{1,1} g_{2}(z_{k+1}) + p_{k,3}^{mix}  g_{1}(z_{k+1}) + p_{k,3}^{0,0} g_{0}(z_{k+1})} \no\\
&=& \frac{(\lambda p_{k,1}^{1,1} + (1-\lambda) p_{k,2}^{1,1} )g_{2}(z_{k+1})} {\lambda  f_{c}(z_{k+1}|p_{k,1}^{1,1}, p_{k,1}^{mix}) + (1-\lambda)f_{c}(z_{k+1}|p_{k,2}^{1,1}, p_{k,2}^{mix})} \no\\
&=& \mu p_{k+1,1}^{1,1} + (1-\mu) p_{k+1,2}^{1,1}.
\end{eqnarray}
Similarly, we can obtain
\begin{eqnarray}
p_{k+1,3}^{mix} = \mu p_{k+1,1}^{mix} + (1-\mu) p_{k+1,2}^{mix}, \no
\end{eqnarray}
and we have
\begin{eqnarray}
&&f_{c}(z_{k+1}|p_{k,3}^{1,1}, p_{k,3}^{mix}) = \lambda f_{c}(z_{k+1}|p_{k,1}^{1,1}, p_{k,1}^{mix}) + (1-\lambda) f_{c}(z_{k+1}|p_{k,2}^{1,1}, p_{k,2}^{mix}).\no
\end{eqnarray}
Hence, we have
\begin{eqnarray}
\lambda \Phi_{k,c}^{T}(p_{k,1}^{1,1}, p_{k,1}^{mix}) + (1 - \lambda) \Phi_{k,c}^{T}(p_{k, 2}^{1,1}, p_{k, 2}^{mix}) \leq \Phi_{k,c}^{T}(p_{k,3}^{1,1}, p_{k,3}^{mix}),
\end{eqnarray}
which indicates that $\Phi_{k,c}^{T}(p_{k,1}^{1,1}, p_{k,1}^{mix})$ is concave.
\end{proof}

Since
\begin{eqnarray}
U_{k}^{T}(p_{k}^{1,1}, p_{k}^{mix}) \geq  U_{k}^{T+1}(p_{k}^{1,1}, p_{k}^{mix}), \no
\end{eqnarray}
and $U_{k}^{T}$ is lower bounded by $0$. Moreover, $p_{k}^{1,1}$ and $p_{k}^{mix}$ are homogenous Markov chains, hence the following limit is well defined.
\begin{eqnarray}
U(p_{k}^{1,1}, p_{k}^{mix}) &:=& \lim_{T \rightarrow \infty} U_{k}^{T}(p_{k}^{1,1}, p_{k}^{mix}).
\end{eqnarray}
By the dominant convergence theorem, the following limits are well defined
\begin{eqnarray}
\Phi_{c}(p_{k}^{1,1}, p_{k}^{mix}) &:=& \lim_{T \rightarrow \infty} \Phi_{k, c}^{T}(p_{k}^{1,1}, p_{k}^{mix}), \\
\Phi_{s} &:=& \lim_{T \rightarrow \infty} \Phi_{k, s}^{T}.
\end{eqnarray}
Therefore, $U(p_{k}^{1,1}, p_{k}^{mix})$ and $\Phi_{c}(p_{k}^{1,1}, p_{k}^{mix})$ preserve the concavity of $ U_{k}^{T}(p_{k}^{1,1}, p_{k}^{mix}) $ and $\Phi_{k, c}^{T}(p_{k}^{1,1}, p_{k}^{mix})$, respectively. We can further extend the finite horizon recursive formula to the infinite horizon. Specifically, we have
\begin{eqnarray}
U(p_{k}^{1,1}, p_{k}^{mix}) &=& \min \left\{v(p_{k}^{1, 1}, p_{k}^{mix}), c+\inf_{\phi_{k}}\mathbb{E}[U(p_{k+1}^{1,1}, p_{k+1}^{mix})|p_{k}^{1,1}, p_{k}^{mix}, \phi_{k}]\right\}. \no\\
&=& \min \left\{v(p_{k}^{1, 1}, p_{k}^{mix}), c+ \min\left\{ \Phi_{c}(p_{k}^{1,1}, p_{k}^{mix}), \Phi_{s} \right\} \right\}. \no
\end{eqnarray}
Hence, the optimal stopping time is given as
\begin{eqnarray}
\tau_{0}^{*} = \min\{ k \geq 0 : U(p_{k}^{1,1}, p_{k}^{mix}) = v(p_{k}^{1, 1}, p_{k}^{mix}) \}, \no
\end{eqnarray}
and the optimal switch function is given as
\begin{eqnarray}
\phi_{k}^{*} = \left\{\begin{array}{ll} 0 & \text{ if } \Phi_{c}(p_{k}^{1, 1}, p_{k}^{mix}) \leq \Phi_{s} \no \\
1 &\text{ otherwise } \end{array}\right..
\end{eqnarray}

\bibliographystyle{ieeetr}{}
\bibliography{macros,detection,sensornetwork}

\begin{thebibliography}{10}

\bibitem{Wald:AMS:45}
A.~Wald, ``Sequential tests of statistical hypotheses,'' {\em Annals of
  Mathematical Statistics}, vol.~16, pp.~117--186, 1945.

\bibitem{Lai:TIT:11}
L.~Lai, H.~V. Poor, Y.~Xin, and G.~Georgiadis, ``Quickest search over multiple
  sequences,'' {\em IEEE Trans. Inform. Theory}, vol.~57, pp.~5375--5386, Aug.
  2011.

\bibitem{Jiang:GLOBE:08}
H.~Jiang, L.~Lai, R.~Fan, and H.~V. Poor, ``Cognitive radio: How to maximally
  utilize spectrumm opportunities in sequential sensing,'' in {\em Proc. IEEE
  Global Telecommunications Conf.}, (New Orleans, Louisiana, USA), Nov. 2008.

\bibitem{Li:CISS:08}
H.~Li, C.~Li, and H.~Dai, ``Quickest spectrum sensing in cognitive radio,'' in
  {\em Proc. Conf. on Information Science and Systems}, (Princeton, New Jersey,
  USA), Mar. 2012.

\bibitem{Malloy:TIT:12}
M.~Malloy, G.~Tang, and R.~Nowak, ``The sample complexity of search over
  multiple populations,'' {\em IEEE Trans. Inform. Theory}, Dec. 2012.
\newblock Submitted.

\bibitem{Bayraktar:12}
E.~Bayraktar and R.~Kravitz, ``Quickest search over {B}rownian channels,'' {\em
  Stochastics: An International Journal of Probability and Stochastic
  Processes}, vol.~85, no.~6, pp.~1--18, 2013.

\bibitem{Tajer:ASIL:13}
A.~Tajer and H.~V. Poor, ``Quick search for rare events through adaptive group
  sampling,'' in {\em Proc. Asilomar Conf. on Signals, Systems and Computers},
  (Pacific Grove, CA, USA), Nov. 2013.

\bibitem{Tajer:ALL:12}
A.~Tajer and H.~V. Poor, ``Hypothesis testing for partial sparse recovery,'' in
  {\em Proc. Allerton Conf. on Communication, Control, and Computing},
  (Monticello, Illinois), Oct. 2012.

\bibitem{Dorfman:AmS:43}
R.~Dorfman, ``The detection of defective members of large populations,'' {\em
  Annals of Mathematical Statistics}, vol.~14, pp.~436--440, Dec. 1943.

\bibitem{Donoho:TIT:06}
D.~L. Donoho, ``Compressed sensing,'' {\em IEEE Trans. Inform. Theory},
  vol.~52, no.~4, pp.~1289--1306, 2006.

\bibitem{Candes:TIT:061}
E.~Candes and T.~Tao, ``Near optimal signal recovery from random projections:
  {U}niversal encoding strategies?,'' {\em IEEE Trans. Inform. Theory},
  vol.~52, pp.~5406--5425, Dec. 2006.

\bibitem{Carmona:MF:08}
R.~Carmona and N.~Touzi, ``Optimal multiple stopping and valuation of swing
  options,'' {\em Mathematical Finance}, vol.~18, pp.~239--268, 2008.

\bibitem{Carmona:MOR:08}
R.~Carmona and S.~Dayanik, ``Optimal multiple stopping of linear diffusions,''
  {\em Mathematics of Operations Research}, vol.~31, pp.~446--460, 2008.

\bibitem{Kobylanski:AAP:11}
M.~Kobylanski, M.~Quenez, and E.~Rouy-Mironescu, ``Optimal multiple stopping
  time problem,'' {\em Annals of Applied Probability}, vol.~21, no.~4,
  pp.~1365--1399, 2011.

\bibitem{Christensen:SADA:13}
S.~Christensen, A.~Irle, and S.~Jürgens, ``Optimal multiple stopping with
  random waiting times,'' vol.~32, pp.~297--318, Jul 2013.

\bibitem{Siegmund:Book:85}
D.~Siegmund, {\em Sequential Analysis: Tests and Confidence Intervals}.
\newblock New York, US: Springer-Verlag, 1985.

\bibitem{Banerjee:TIT:12}
T.~Banerjee and V.~V. Veeravalli, ``Data-efficient quickest change detection in
  minimax settings,'' {\em IEEE Trans. Inform. Theory}, vol.~59, Oct. 2013.

\bibitem{Poor:Book:08}
H.~V. Poor and O.~Hadjiliadis, {\em Quickest Detection}.
\newblock Cambridge, UK: Cambridge University Press, 2008.

\bibitem{Siegmund:ADAP:79}
D.~Siegmund, ``Corrected diffusion approximations in certain random walk
  problems,'' {\em Advances in Applied Probability}, vol.~11, pp.~701--719,
  Dec. 1979.

\bibitem{Lotov:AnP:71}
V.~I. Lotov, ``On some boundary crossing problem for {G}aussian random walks,''
  {\em Annals of Probability}, vol.~24, no.~4, pp.~2154--2171, 1996.

\bibitem{Chang:AnP:97}
J.~T. Chang and Y.~Peres, ``Ladder heights, {G}aussian random walks and the
  {R}iemann zeta function,'' {\em Annals of Probability}, vol.~25, no.~2,
  pp.~787--802, 1997.

\end{thebibliography}

\end{document}